\documentclass[twocolumn]{svjour3}

\usepackage{url}

\usepackage{booktabs} 
\usepackage{graphicx}
\usepackage{here}
\usepackage[ruled, linesnumbered, vlined]{algorithm2e}
\usepackage{subfigure}
\usepackage{calc}
\usepackage{tabularx}
\usepackage{hyperref}
\PassOptionsToPackage{hyphens}{url}
\usepackage{balance}
\usepackage{mathtools}
\usepackage{lipsum,multicol}
\usepackage{soul}
\usepackage{amssymb}

\usepackage{amsmath}

\mathchardef\mhyphen="2D 

\usepackage{xcolor}

\newenvironment*{revise-env}{\color{red}}{}

\usepackage{array,multirow}
\usepackage{float}

\usepackage{listings}

\lstset{
    language=C++,
    basicstyle=\ttfamily,
    frame=tb, 
    tabsize=2, 
    showstringspaces=false, 
    numbers=left, 
    keywordstyle=\color{blue}, 
    xleftmargin=.08\columnwidth,
    xrightmargin=.08\columnwidth,
}

\makeatletter
\def\hlinewd#1{%
\noalign{\ifnum0=`}\fi\hrule \@height #1 %
\futurelet\reserved@a\@xhline}
\makeatother

\makeatletter
\newcommand{\removelatexerror}{\let\@latex@error\@gobble}
\makeatother
\let\oldnl\nl
\newcommand{\nonl}{\renewcommand{\nl}{\let\nl\oldnl}}%

\setlength{\textfloatsep}{5pt}


\usepackage{cite}

\begin{document}






\title{Time-Efficient and High-Quality Graph Partitioning for Graph Dynamic Scaling}

\author{
  Masatoshi Hanai \and
  Nikos Tziritas \and
  Toyotaro Suzumura \and
  Wentong Cai \and
  Georgios Theodoropoulos
}

\institute{M. Hanai and G. Theodoropoulos\at
              Southern University of Science and Technology, China \\
              \email{mhanai@acm.org, georgios@sustech.edu.cn}           
          \and
          N. Tziritas \at
              University of Thessaly, Laria, Greece \\
              \email{nitzirit@uth.gr}
          \and
          T. Suzumura \at
             IBM T.J. Watson Research Center, New York, USA \\
             \email{suzumura@acm.org}
          \and
          W. Cai \at 
            Nanyang Technological University, Singapore \\
            \email{aswtcai@ntu.edu.sg}
}

\date{Received: date / Accepted: date}

\maketitle

\begin{abstract}


The dynamic scaling of distributed computations plays an important role in the utilization of elastic computational resources, such as the cloud. 
It enables the provisioning and de-provisioning of resources to match dynamic resource availability and demands.
In the case of distributed graph processing, changing the number of the graph partitions while maintaining high partitioning quality imposes serious computational overheads as typically a time-consuming graph partitioning algorithm needs to execute each time repartitioning is required.

In this paper, we propose a dynamic scaling method that can efficiently change the number of graph partitions while keeping its quality high.
Our idea is based on two techniques: preprocessing and very fast edge partitioning, called \emph{graph edge ordering} and \emph{chunk-based edge partitioning}, respectively.
The former converts the graph data into an ordered edge list in such a way that edges with high locality are closer to each other.
The latter immediately divides the ordered edge list into an arbitrary number of high-quality partitions.
The evaluation with the real-world billion-scale graphs demonstrates that our proposed approach significantly reduces the repartitioning time, while the partitioning quality it achieves is on par with that of the best existing static method.
\end{abstract}

\section{Introduction}\label{sec:introduction}
Graph analysis is a powerful method to gain valuable insights into the characteristics of real networks, such as web graphs and social networks.
To analyze large-scale graphs efficiently, one of the major approaches is 
to distribute the entire graph across multiple machines and process each partition in parallel. 
Over the last decade, several distributed graph-processing systems have been developed~\cite{malewicz2010pregel,gonzalez2014graphx,joseph2012powergraph,hong2015pgx,Chen:2015:PDG:2741948.2741970}


For efficient parallel computation on a distributed graph-processing system, the common problem is to divide the input graph into $k$ parts in such a way that the number of edge/vertex cuts (i.e., communication cost among the distributed processes) becomes minimal while keeping each part balanced; this is known as the \emph{balanced $k$-way graph partitioning}.
Since the computation of the optimal-quality partitions, namely, partitions with the minimum cuts, is an NP-hard problem~\cite{garey1974some,Andreev:2004:BGP:1007912.1007931,Bourse:2014:BGE:2623330.2623660,Zhang:2017:GEP:3097983.3098033}, the high-quality graph partitioning algorithms, such as METIS~\cite{Karypis:1998:FHQ:305219.305248} and NE~\cite{Zhang:2017:GEP:3097983.3098033}, are generally time-consuming compared to the low-quality ones, such as FENNEL~\cite{Tsourakakis:2014:FSG:2556195.2556213}, DBH~\cite{NIPS2014_5396}, HDRF~\cite{Petroni:2015:HSP:2806416.2806424}.
There is a clear trade-off between partitioning efficiency and quality.

\begin{figure}
  \centering
   \includegraphics[width=\columnwidth]{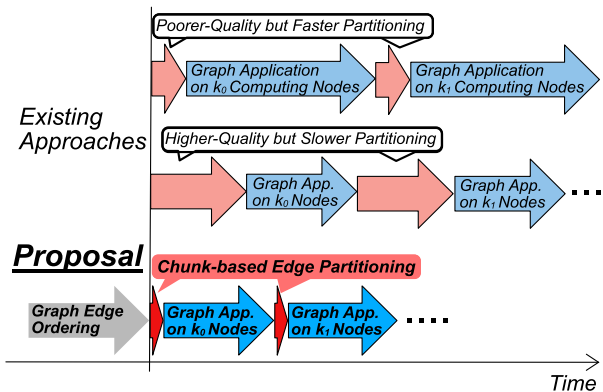}%
  \caption{Workload Example. \#~partitions is $k_0 \rightarrow k_1 \rightarrow ...$.}%
  \label{fig:workflow}
\end{figure}

In a related development, with the utilization of elastic infrastructures such as cloud platforms, \emph{dynamic
scaling} of computational resources has become increasingly important for parallel and distributed computation.
Especially, one of our motivated scenarios for the cloud is the effective utilization of \emph{unreliable VM instances} that do not have any lifetime guarantee, such as Spot Instances in AWS~\cite{spotinstance} and Preemptible VMs in GCE~\cite{preemptible}.
In such an unreliable VM environment, the price and availability of the VM is changed depending on spare resources of a data center.
The spare VM resources may be suddenly available while the executing VMs may be forcely terminated by the infrastructure side.
Dynamic scaling plays an important role to handle such a dynamic change of computational resources.


\begin{figure*}[t]
  \centering
   \includegraphics[width=2.0\columnwidth]{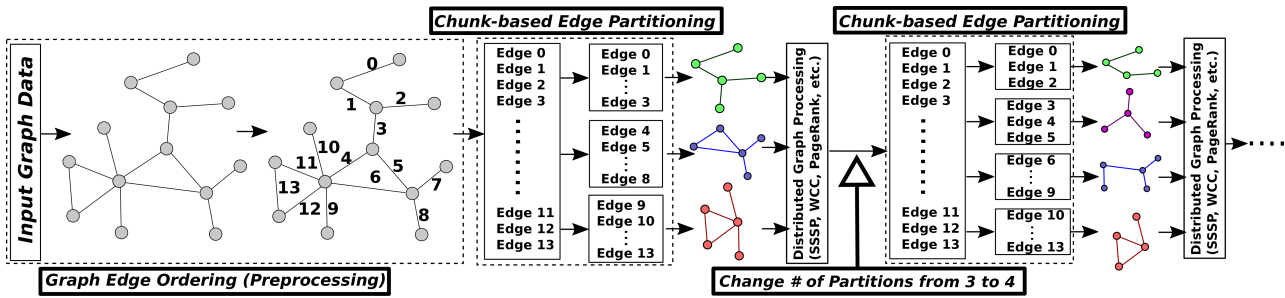}
  \caption{Dynamic Scaling based on Graph Edge Ordering and Chunk-based Edge Partitioning.}\label{fig:orderflow}
\end{figure*}

In the case of distributed graph analysis, however, scaling the number of graph partitions efficiently while achieving high quality is a challenging endeavor due to the trade-off between efficiency and quality.
On the one hand, several dynamic scaling methods based on efficient graph partitioning have been proposed~\cite{pujol2011little,vaquero2014adaptive,8798698,8514898,dynamicscaling}, which, however, exhibit limited quality. 
It results in high communication costs, affecting the performance of distributed graph processing, as shown in the top of Figure~\ref{fig:workflow}.
On the other hand, conventional approaches based on high-quality graph partitioning, e.g., ~\cite{Karypis:1998:FHQ:305219.305248,Zhang:2017:GEP:3097983.3098033}, may cause redundant computation as typically time-consuming calculations are required each time the number of partitions is changed, as shown in the middle of Figure~\ref{fig:workflow}.

The problem of graph dynamic scaling is similar to the \emph{dynamic load balancing} for the distributed graph analysis~\cite{shang2013catch,khayyat2013mizan,xu2014loggp,huang2016leopard,zheng2016paragon,zheng2016planar} as both need to repartition a graph.
However, these methods focus on 
repartitioning the graph to reflect changes in the behavior of the application workload only and do not consider dynamic changes of the computational infrastructure.
Furthermore, in these cases, the number of graph partitions does not change. 
In this paper, we focus on a different problem, where the dynamic scaling of the graph is triggered by the dynamic scaling of the computational infrastructure rather than the application workload.
In such cases, the graph has to be repartitioned to make use of the newly available computational resources.


In this paper, we propose a novel approach to the dynamic scaling of graph partitions, which enables us to efficiently recompute the partitioning when the number of partitions is changed while keeping partitioning quality high.
As with the latest work~\cite{dynamicscaling}, we focus on (vertex-cut) edge partitioning rather than traditional (edge-cut) vertex partitioning, as discussed in the other existing work~\cite{pujol2011little,vaquero2014adaptive,8798698,8514898}. 
The edge partitioning is known
to provide a better workload balance because the computational cost in the graph processing essentially depends on the number of edges rather than that of vertices~\cite{joseph2012powergraph,gonzalez2014graphx}.



The dynamic scaling method which we propose in this paper is based on two techniques: \emph{graph edge ordering} and \emph{chunk-based edge partitioning}.
Figure~\ref{fig:orderflow} shows an overview.
The graph edge ordering is a preprocessing method, which orders the edges of the input graph in such a way that edges with closer ids have a higher access locality (e.g., input edges are ordered to Edge 0,1,2,3... in Figure~\ref{fig:orderflow}).
Then, the chunk-based edge partitioning, which is a simple yet very fast partitioning method, splits the ordered edge lists.
Once the ordering is computed, the result can be reused, and time-consuming processing is unnecessary to repeat when the number of partitions is changed.


The contributions of this paper are as follows:

\smallskip
\noindent\textit{\textbf{A Novel Approach to Efficient and Effective Dynamic Scaling of Graph Edge Partitions.}} \\
We formalize the dynamic scaling problem for graph edge partitions as the maximization of both efficiency and quality.
Then, we propose an efficient and effective dynamic scaling approach based on chunk-based edge partitioning and graph edge ordering (Sec.~\ref{sec:dynamic}).
The efficiency is theoretically maximized as we show that the chunk-based edge partitioning is $\mathcal{O}(1)$ while the quality is theoretically guaranteed by the upper bound obtained by the graph edge ordering. 



\smallskip
\noindent\textit{\textbf{A Fast Graph Edge Ordering Algorithm.}}\\
We formalize the graph edge ordering problem as an optimization problem and show its NP-hardness.
To address the NP-hard problem, we propose an efficient $\mathcal{O}(n \log n)$ greedy algorithm based on a greedy expansion.
To enhance the greedy expansion, we propose a novel priority-queue which is significantly effective for the graph edge ordering problem.
We show that partitions generated by the graph edge ordering and the chunk-based edge partitioning have a theoretical upper bounds of the partitioning quality. 
The theoretical result is similar to the best existing static method~(Sec.~\ref{sec:algorithm}). 


\smallskip
\noindent\textit{\textbf{A Comprehensive Quantitative Evaluation.}}\\
By using large-scale real-world graphs, we evaluate the efficiency and quality of our method and compare it with state-of-the-art dynamic scaling, graph partitioning, and graph ordering methods.
The evaluation shows that the chunk-based edge partitioning is practically between three to eight orders of magnitude faster than the existing methods while achieving comparable quality to that of the best existing static method.
As a result, the high-quality partitions obtained by our method significantly improve the performance of typical benchmarking applications~(Sec.~\ref{sec:evaluation}).

\section{Preliminaries and Related Work}
\subsection{Notation}
Let $G = (V, E)$ be an undirected and unweighted graph that consists of a set of vertices $V$ and a set of edges $E$, respectively.
For $E$, the set of its $k$ disjoint subsets are represented as $\mathcal{E}_k := \{ \mathcal{E}_k[p]: 0 \leq p < k, \mathcal{E}_k[p] \subset E, \mathcal{E}_k[i] \cap \mathcal{E}_k[j] = \varnothing \  \text{for}\ i \neq j \}$.
An edge $e$ $(\in E)$ connecting vertex $v$ and $u$ is represented by $e_{v,u}$.
$N(v)$ represents the set of $v$'s neighboring vertices.
The vertex set involved in $E$ is defined as $V(E)$,
that is, $V(E) := \{v\ |\ v \in V, \exists e_{v,u} \in E \}$.
The number of elements in a set is represented by $|\cdot|$, e.g., $|V|$ and $|E|$. 

In this paper, we are interested in the order of elements in $E$.
Let $\phi:~E~\mapsto~\{0,1,2,...,|E|\!-\!1\}$ be a bijective function taking an edge $e$ $(\in E)$ and returning an index $i$ $(0 \leq i < |E|)$.
We refer to $\phi$ as an \emph{ordering function}.
A list (i.e., an ordered set) of $E$ ordered by $\phi$ is represented as $E^{\phi}$.
The $i$-th element in $E^{\phi}$ is represented as $E^{\phi}[i]$.
We also define an \emph{append} operation for the ordered edges, represented~by~$+$.

For example, suppose $A := \{A[0], A[1], A[2] \}$ and $B := \{B[0], B[1]\}$, then
$(A+B)[0] := A[0]$; $(A+B)[1] := A[1]$; $(A+B)[2] := A[2]$; $(A+B)[3] := B[0]$; and $(A+B)[4] := B[1]$.

Notation which we frequently use through the paper is summarized in Table~\ref{tab:notation}.

\begin{table}[h]
\begin{center}
\caption{Summary of Notation}\label{tab:notation}
\scalebox{.9}{
\begin{tabular}{l||l} \hline 
  Symbol             & Description \\ \hline
  $V$, $E$, $G(V,E)$ & Vertices, edges, and a graph with $V$ and $E$ \\
  $N(v)$             & $v$'s neighbor vertices \\ 
  $V(E)$             & Vertices involved in $E$ \\ 
  $k$                & \# of edge partitions \\
  $p$                & Partition id ($0 \leq p < k$)\\
  $\mathcal{E}_k$, $\mathcal{E}_k[p]$ & Set of edge partitions and its $p$-th edge partition \\
  $\phi$             & Ordering function \\
  $E^{\phi}$, $E^{\phi}[i]$         & Edge list ordered by $\phi$ and its $i$-th element \\
  $\mathit{E^{\phi}_{\mathit{ch}}}(i,w)$ & Chunk with $w$ edges from $i$-th edge (\S~\ref{sec:chunk}) \\
  \texttt{ID2P}$_{k}(\cdot)$ & Conversion from Order $i$  to Partition $p$ (\S~\ref{sec:def}) \\ \hline
\end{tabular}
}
\end{center}
\end{table}

\subsection{Graph Edge Partitioning}\label{sec:graphedgepartitioning}
The edge partitioning algorithm divides a set of edges $E$ into $k$ disjoint subsets $\mathcal{E}_{k}$.
The edge partitioning is to find partitions where the communication cost among the partitions becomes as small as possible while keeping the size of each subset balanced.
In the edge partitioning, the communication occurs at boundary vertices, which are replicated into multiple partitions.
Specifically, the number of boundary vertices causing the communication is represented as $\sum_{p = 0}^{k-1} |V\bigl(\mathcal{E}_{k}[p]\bigr)| - |V|$. 
For evaluating the communication cost, a normalized factor, called \emph{replication factor (RF)}~\cite{joseph2012powergraph}, is typically used:
\begin{definition}[Replication Factor]\label{def:replicationfactor}
\begin{equation*}
    \mathit{RF}\bigl(\mathcal{E}_{k}\bigr) := \frac{1}{|V|} \sum_{p = 0}^{k-1} |V\bigl(\mathcal{E}_{k}[p]\bigr)|
\end{equation*}
\end{definition}
Based on $RF$, the edge partitioning problem~\cite{joseph2012powergraph} is defined as follows:
\begin{definition}[Balanced $k$-way Edge Partitioning] \label{def:edgepartitioning}
The objective of \textbf{the balanced $k$-way edge partitioning of} $\boldsymbol{G}$ is formalized as follows:
\begin{equation*}
  \min_{\mathit{part} \in \mathcal{P}} RF\bigl(\mathit{part}(E,k)\bigr) \ \ \  \text{s.t.}\ \ \max_{0 \leq p < k} |\mathcal{E}_{k}[p]| < (1+\epsilon) \frac{|E|}{k},
\end{equation*}
where $\mathit{part}: (E,k) \mapsto \{\mathcal{E}_{k}[p]: p = 0, 1,..., k\!-\!1\}$ is a partitioning method, and $\mathcal{P}$ is the set of all partitioning methods.
The balance factor $\epsilon \geq 0$  is a constant parameter.
\end{definition}






\subsection{Related Work}
\noindent\textit{\textbf{Dynamic Scaling of Graph Partitions.}}
The dynamic scaling has been extensively investigated for various distributed applications, such as web applications~\cite{chieu2009dynamic,shen2011cloudscale}, database systems~\cite{das2011albatross,Das:2013:EES:2445583.2445588,taft2014store,serafini2014accordion,adya2016slicer,taft2018p,marcus2018nashdb}, streaming systems~\cite{ishii2011elastic,shen2011cloudscale,castro2013integrating,heinze2015online,madsen2017integrative,floratou2017dhalion,Borkowski:2019:MCR:3317315.3329476,Wang:2019:ERE:3299869.3319868}, data analysis~\cite{shen2011cloudscale}, scientific applications~\cite{mao2011auto}, and machine learning~\cite{qiao2018litz}.
The major difference from these efforts is that distributed graph applications are typically communication-intensive workloads.
Thus, our work focuses on the quality of the partitioning as well as the efficiency of the dynamic scaling.

The dynamic scaling for the traditional vertex graph partitioning has been studied in some work~\cite{pujol2011little,vaquero2014adaptive,8798698,8514898}.
The main difference from these efforts is that our proposal is based on edge partitioning.
Our chunk-based edge partitioning makes full use of the edge partitioning so that its time complexity becomes $\mathcal{O}(1)$.
Achieving $\mathcal{O}(1)$ for vertex partitioning is a very challenging endeavor (if at all possible).

The work, which appears to be closer to ours, is~\cite{dynamicscaling}. 
To the best of our knowledge, this is the only one to discuss the dynamic scaling of edge partitions.
In this paper, the authors confirm that the minimization of the migration cost in dynamic scaling is NP-complete. 
They propose an approximate algorithm and a generic scheme based on consistent hashing.
The hashing does not take into account the data locality.
As a result, the quality of partitioning is not considered.
In contrast, our approach aims to achieve also high partitioning quality due to the preprocessing (i.e., graph edge ordering) as compared theoretically in Sec.~\ref{sec:upperbound} and empirically in Sec.~\ref{sec:evaluation}.

\noindent\textit{\textbf{Graph Ordering.}}
Due to the structural complexity of the real-world networks, it is difficult to grasp data locality among each graph element.
The graph ordering is one of the major approaches to increase the data locality~\cite{zhao2020graph}.
The most traditional method is Reverse Cuthill McKee (RCM) for matrix bandwidth reduction~\cite{Cuthill:1969:RBS:800195.805928}.
Different algorithms have a different focus, such as graph compression~\cite{boldi2011layered,lim2014slashburn,dhulipala2016compressing}, CPU-cache utilization~\cite{wei2016speedup,arai2016rabbit}, and graph databases~\cite{Goonetilleke:2017:ELS:3085504.3085516}.
Our work is the first attempt to utilize the graph ordering technique for the graph partitioning problem and provides the best partitioning quality as compared in Sec.~\ref{sec:evaluation}.

\section{Proposed Dynamic Scaling Method} \label{sec:dynamic}
In this section, we first provide a formal definition of the problem.
Second, we outline our our approach which is based on preprocessing the graph.
Third, we present the chunk-based edge partitioning algorithm.
Finally, we introduce the graph edge ordering algorithm.

\subsection{Problem Definition} \label{sec:problemdef}
We formalize the dynamic scaling problem as a multi-objective problem: (i) to maximize the efficiency of the scaling and (ii) to minimize the replication factor of edge partitions generated by the scaling. 

Let the number of initial partitions be $k$; the partitioned edge sets be $\mathcal{E}_{k}$; the number of added/removed computing~unit~be~$x$.
\begin{definition}[Dynamic Scaling]\label{def:multi-object}\
\textbf{Scaling in/out, $\boldsymbol{sc(\mathcal{E}_{k},\pm x)}$}, is to recompute new $k \pm x$ edge partitions, $\mathcal{E}_{k \pm x}$ ($\mathcal{E}_{k \pm x} := sc(\mathcal{E}_{k},\pm x)$).

The objective of \textbf{the dynamic scaling problem for $\boldsymbol{\mathcal{E}_{k}}$} is to maximize the efficiency of the scaling ($\mathit{sc}$) and to minimize the replication factor ($RF$) as follows:
\begin{eqnarray*}
  \max_{\mathit{sc} \in \mathcal{SC}} \mathit{EF}(\mathit{sc}(\mathcal{E}_{k}, \pm x)),\  \min_{\mathit{sc} \in \mathcal{SC}} RF\bigl(sc(\mathcal{E}_{k}, \pm x)\bigr) \nonumber \\
  \text{s.t.} \max_{0 \leq p < k\pm x} |\mathcal{E}_{k\pm x}[p]| < (1+\epsilon) \frac{|E|}{k\pm x},
\end{eqnarray*}
where $RF(sc)$ is the replication factor of the new partitions after $sc$, and the efficiency ($\mathit{EF}$) is evaluated by the time complexity to calculate partition IDs of edges.
\end{definition}


Note that, in a similar way to the state-of-the-art work~\cite{dynamicscaling}, we focus on the dynamic scaling of static graphs, where the structure of the graph \emph{does not} change over time.
In this case,  the graph is static, while the number of partitions changes dynamically to reflect changes in the underlying computational infrastructure.

\subsection{Overview of Proposed Approach}
We address the two objectives above one by one.
Specifically, at first, the efficiency is maximized as we design the very fast $\mathcal{O}(1)$ graph partitioning method.
Then, the quality is maximized by preprocessing of an input graph.

The overall computation consists of five steps as shown in Figure~\ref{fig:orderflow}. 
(i) and (ii) are executed once, whereas (iii)~--~(v) are repeated:
\begin{itemize}
\item[(i)] \textbf{\textit{Graph Edge Ordering:}} The graph-edge-ordering algorithm converts the original graph data into the ordered edge list.
\item[(ii)] \textbf{\textit{Initial Partitioning to $k$ Parts:}} The chunk-based edge partitioning initially computes $k$ edge partitions of the ordered edge list. The graph elements (i.e., vertices and edges) are distributed to $k$ machines accordingly.
\item[(iii)] \textbf{\textit{Resource Provisioning / De-provisioning:}} $x$ computational units are added/removed (e.g., add/remove machine(s), CPU core(s), or CPU Socket(s)).
\item[(iv)] \textbf{\textit{Scaling to $k \pm x$ Parts:}} The chunk-based edge partitioning computes the \mbox{$k \pm x$-way} edge partitions for the ordered edge list. 
The additional graph elements are moved from the other processes or reloaded from the storage.
\item[(v)] \textbf{\textit{Graph Application:}} Distributed graph applications are executed on the \mbox{$k \pm x$} machines.
\end{itemize}

\subsection{Chunk-based Edge Partitioning}\label{sec:chunk}
The chunk-based edge partitioning evenly splits the ordered edge list into continuous chunks of edges.
Specifically, \textbf{the chunk-based edge partitioning algorithm} for $p$-th part $(0 \leq p < k)$ takes 3 arguments: (i) the ordered edge list, $E^{\phi}$, (ii) the partition ID, $p$, (iii) the total number of partitions, $k$; and returns a disjoint edge set, $\mathcal{E}_{k}[p]$, in such a way that:
\begin{equation*}
    \mathcal{E}_{k}[p] = \mathit{E^{\phi}_{\mathit{ch}}}\left(\sum_{x=0}^{p-1}\left\lfloor\tfrac{|E|+x}{k}\right\rfloor, \left\lfloor\tfrac{|E|+p}{k}\right\rfloor\right),
\end{equation*}
where $\mathit{E^{\phi}_{\mathit{ch}}}(\cdot,\cdot)$ is the edge chunk. We define it using its beginning point $i$ and chunk size $w$ as follows:
\begin{eqnarray*}
\mathit{E^{\phi}_{\mathit{ch}}}(i,w) &:=& \{E^{\phi}[i], E^{\phi}[i+1], ..., E^{\phi}[i+ w-1]\}.
\end{eqnarray*}

There are two noted things.
First, the chunk-based edge partitioning always provides the perfect edge balance, i.e., $\epsilon \approx 0$ in Def.~\ref{def:edgepartitioning}. 
Second, if $|E|\text{ mod } k = 0$, then $\mathcal{E}_{k}[p]$ is simplified as
\begin{equation*}
\mathcal{E}_{k}[p] = \mathit{E^{\phi}_{\mathit{ch}}}\left(\tfrac{|E|}{k}p,\tfrac{|E|}{k}\right).
\end{equation*}

\begin{figure*}[t]
  \centering
   \includegraphics[width=2.0\columnwidth]{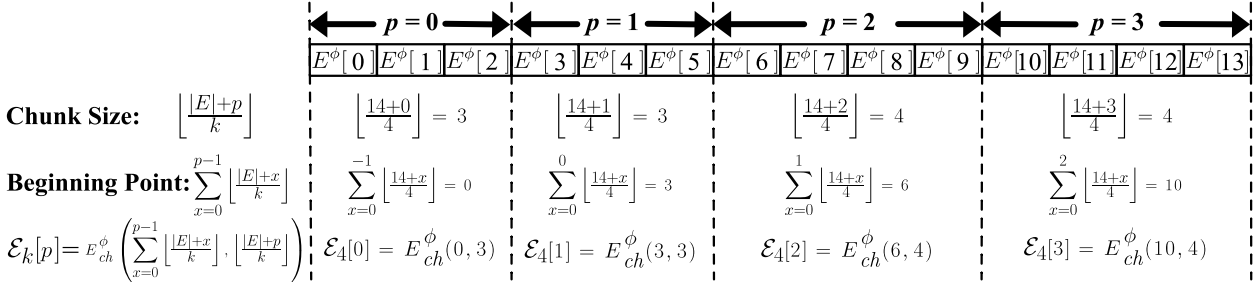}
   
  \caption{Chunk-based Edge Partitioning of $E$ into 4 parts ($k = 4$, $p = 0,1,2,3$, and $E = \{E^{\phi}[0], E^{\phi}[1], ... E^{\phi}[13]\}$).}\label{fig:chunk-based}
  
\end{figure*}

Figure~\ref{fig:chunk-based} shows the example of the chunk-based edge partitioning for $E = \{E^{\phi}[0], E^{\phi}[1], ... E^{\phi}[13]\}$ and $k=4$.
The 14 edges are divided into 3 + 3 + 4 + 4 edges because $\left\lfloor\tfrac{|E|+p}{k}\right\rfloor$ for each $p \ (0\leq p < k)$ is equal to $\left\lfloor\tfrac{14+0}{4}\right\rfloor = 3$, $\left\lfloor\tfrac{14+1}{4}\right\rfloor = 3$, $\left\lfloor\tfrac{14+2}{4}\right\rfloor = 4$, and $\left\lfloor\tfrac{14+3}{4}\right\rfloor = 4$, respectively.
Therefore, $\mathcal{E}_{k}[p]$ becomes $\mathcal{E}_{4}[0] = E^{\phi}_{ch}(0,3)$, $\mathcal{E}_{4}[1] = E^{\phi}_{ch}(3,3)$, $\mathcal{E}_{4}[2] = E^{\phi}_{ch}(6,4)$, and $\mathcal{E}_{4}[3] = E^{\phi}_{ch}(10,4)$, respectively.

\medskip
Since the chunk-based partitioning just splits the edge list, the computational time complexity excluding the graph data movement is basically $\mathcal{O}(1)$.

\begin{theorem}[Efficiency of Partitioning]\label{thr:cep}
Suppose the edges of $E^{\phi}$ are stored continuously (e.g., to an array or a file system), and an operation to find the pointer of $E^{\phi}[i]$ by using $i$ is $\mathcal{O}(1)$ (e.g., RAM or standard file systems).
Then, there exists an $\mathcal{O}(1)$ algorithm to compute the chunk-based edge partitioning excluding the graph data movement, and it does not depend on the graph size, such as $|V|$ and $|E|$.
\end{theorem}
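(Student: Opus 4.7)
The plan is to show that for any partition id $p$, determining the chunk $\mathcal{E}_{k}[p] = \mathit{E^{\phi}_{\mathit{ch}}}(s_p, w_p)$ requires only a constant number of arithmetic and pointer operations, where $s_p = \sum_{x=0}^{p-1}\lfloor (|E|+x)/k\rfloor$ is the start index and $w_p = \lfloor(|E|+p)/k\rfloor$ is the chunk width. The width $w_p$ is obviously $\mathcal{O}(1)$: it is a single integer division. The only non-trivial step is the start index $s_p$, whose definition as a sum of $p$ terms appears to cost $\Theta(p)$ naively.

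To handle $s_p$, I would invoke the division algorithm and write $|E| = qk + r$ with $0 \le r < k$. Then $\lfloor(|E|+x)/k\rfloor = q + \lfloor(r+x)/k\rfloor$, and since $0 \le r + x < 2k$ whenever $0 \le x < k$, the inner floor is either $0$ or $1$, flipping at $x = k - r$. Splitting the sum at this threshold yields the closed form
\begin{equation*}
s_p \;=\; pq \;+\; \max\bigl(0,\; p - (k - r)\bigr),
\end{equation*}
which is computed with a constant number of divisions, subtractions, comparisons, and multiplications, independent of $|V|$ and $|E|$.

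Given $s_p$ and $w_p$, the chunk $\mathit{E^{\phi}_{\mathit{ch}}}(s_p, w_p)$ is determined by the pair $(s_p, w_p)$ together with a pointer to $E^{\phi}[s_p]$. By the hypothesis of the theorem, the pointer to $E^{\phi}[s_p]$ is obtained in $\mathcal{O}(1)$ from $s_p$. Hence the overall algorithm that, on input $(E^{\phi}, p, k)$, outputs the descriptor of $\mathcal{E}_{k}[p]$ uses only a constant number of operations and no scan of $E^{\phi}$, so its time complexity is $\mathcal{O}(1)$ and does not depend on $|V|$ or $|E|$.

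The main obstacle is purely notational, namely recognizing that the apparently $\Theta(p)$ sum defining $s_p$ collapses to a constant-size expression via the division algorithm; once this is done, the rest is just bookkeeping. Note that this $\mathcal{O}(1)$ complexity is exactly a per-partition cost for determining the assignment, which is consistent with the theorem's exclusion of graph data movement: physically migrating the $w_p = \Theta(|E|/k)$ edges is unavoidable and is not part of the partitioning step itself.
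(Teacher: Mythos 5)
Your proof is correct and follows essentially the same route as the paper: both reduce the apparently $\Theta(p)$ prefix sum to the closed form $s_p = p\lfloor |E|/k\rfloor + \max\bigl(0,\, p - k + (|E| \bmod k)\bigr)$ by observing that the residual floor term is $0$ or $1$ with a single flip point, and then appeal to the $\mathcal{O}(1)$ pointer-lookup hypothesis. Your $\max(0,\, p-(k-r))$ is exactly the paper's $\theta_k(p)$, so there is nothing to add.
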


\begin{proof}
In order to compute the chunk-based edge partitioning in $\mathcal{O}(1)$, the algorithm needs to calculate \\ $\sum_{x=0}^{p-1}\left\lfloor\tfrac{|E|+x}{k}\right\rfloor$ in $\mathcal{O}(1)$.
$\sum_{x=0}^{p-1}\left\lfloor\tfrac{|E|+x}{k}\right\rfloor$ requires $\mathcal{O}(p)$ computational time in a naive way.

The summation can be modified as follows:
\begin{eqnarray*}
\sum_{x=0}^{p-1}\left\lfloor\tfrac{|E|+x}{k}\right\rfloor = \sum_{x=0}^{p-1}\left\{ \left\lfloor\tfrac{|E|}{k}\right\rfloor + \left\lfloor\tfrac{(|E| \text{ mod } k) + x}{k}\right\rfloor \right\}
\end{eqnarray*}
Here, $\left\lfloor\tfrac{(|E| \text{ mod } k) + x}{k}\right\rfloor$ is $0$ or $1$ in $x = 0,1,...,p-1$, as follows:
\begin{eqnarray*}
\left\lfloor\tfrac{(|E| \text{ mod } k) + x}{k}\right\rfloor = \begin{cases} 
                                                                  0 & \text{ if } (|E| \text{ mod } k) + x < k\\
                                                                  1 & \text{ otherwise}        
                                                                \end{cases}
\end{eqnarray*}
Therefore, 
\begin{eqnarray*}
\sum_{x=0}^{p-1}\left\lfloor\tfrac{(|E| \text{ mod } k) + x}{k}\right\rfloor &=& 
\begin{cases}
  0 \ \ \ \text{ if } k + (|E| \text{ mod } k) \geq p\\
  p - k + (|E| \text{ mod } k) \ \ \ \text{ otherwise}
\end{cases} \\
&=& \max \big(0, p-k+\left(|E| \text{ mod } k\right)\big)
\end{eqnarray*}
We define $\theta_{k}(p):= \max \big(0, p-k+\left(|E| \text{ mod } k\right)\big)$.
Then, the following formula is established:
\begin{equation*}
\sum_{x=0}^{p-1}\left\lfloor\tfrac{|E|+x}{k}\right\rfloor = p\left\lfloor\tfrac{|E|}{k}\right\rfloor + \theta_{k}(p).
\end{equation*}
This can be computed in $\mathcal{O}(1)$. \qed
\end{proof}

According to~\cite{dynamicscaling}, the migration cost is defined as the number of migrated edges.
The migration cost for the chunk-based edge partitioning is provided as follows:
\begin{theorem}[Migration Cost] \label{thr:migration}
Suppose a set of ordered edges is initially split into $k$ partitions via the chunk-based edge partitioning, and the edges are repartitioned into $k+x$ parts by adding $x$ new processes (i.e., scale out).
We assume that $|E|$ is much larger than $k$ and $x$ such that $(|E| \mod k + x) / |E| < (k+x) / |E| \approx 0$ and that the ids of new partitions are $k, k+1, ... , k+x-1$.

Then, the approximate number of migrated edges when applying repartitioning is 
\begin{equation*}
\frac{x |E|}{2 k (k+x)} \left\lceil \frac{k}{x}\right\rceil \left(\left\lceil \frac{k}{x}\right\rceil + 1\right) + \frac{|E|}{k}\left(k - \left\lceil\frac{k}{x}\right\rceil \right).
\end{equation*}
The cost for scaling in is the same (i.e., from $k+x$ to $k$ partitions) since it is a reverse operation of scaling out.
\end{theorem}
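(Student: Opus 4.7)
The plan is to work in the continuous approximation justified by the hypothesis $(|E| \bmod (k+x))/|E| \approx 0$, so that under chunk-based partitioning the old partition $p$ $(0 \le p < k)$ occupies the contiguous interval $[\,p|E|/k,\,(p+1)|E|/k\,)$ of the ordered edge list, and the new partition $p$ $(0 \le p < k+x)$ occupies $[\,p|E|/(k+x),\,(p+1)|E|/(k+x)\,)$. To minimize migration, I assume each original machine $p < k$ retains exactly those edges that still belong to it under the new scheme; the remainder is migrated, and the new machines $k,\dots,k+x-1$ receive whatever tail of the list falls into their new ranges. The migration cost is then $|E|$ minus the total number of retained edges on the old machines.

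Next I would characterize which old machines retain anything. The intersection of $p$'s old and new ranges is $[\,p|E|/k,\,(p+1)|E|/(k+x)\,)$, since $k < k+x$ forces the old lower endpoint to dominate and the new upper endpoint to be smaller. This is non-empty precisely when $p|E|/k < (p+1)|E|/(k+x)$, which simplifies to $px < k$, equivalently $p \le L-1$ where $L := \lceil k/x \rceil$. For such $p$, the length of the retention interval is
\begin{equation*}
\frac{(p+1)|E|}{k+x} - \frac{p|E|}{k} \;=\; \frac{|E|(k-px)}{k(k+x)}.
\end{equation*}
For $p \ge L$ the intersection is empty, so all $|E|/k$ edges on machine $p$ must migrate, contributing $(k-L)\cdot|E|/k$ to the migration total.

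Summing the retained counts gives
\begin{equation*}
\sum_{p=0}^{L-1}\frac{|E|(k-px)}{k(k+x)} \;=\; \frac{|E|}{k(k+x)}\left[kL - x\cdot\frac{L(L-1)}{2}\right],
\end{equation*}
after which the migration cost becomes
\begin{equation*}
|E| - \frac{|E|\bigl[2kL - xL(L-1)\bigr]}{2k(k+x)} \;=\; \frac{|E|\bigl[2k(k+x) - 2kL + xL(L-1)\bigr]}{2k(k+x)}.
\end{equation*}
The only remaining task is to rearrange this expression into the two-term form stated in the theorem: splitting off $(k-L)|E|/k$ from the $2k(k+x) - 2kL$ portion leaves $xL(L-1)/(2k(k+x)) \cdot |E|$ augmented by $xL\cdot|E|/(k(k+x))$ (which arises from regrouping the $2kx$ and $-2kL$ terms), and these two pieces combine as $xL(L+1)|E|/(2k(k+x))$, matching the target formula.

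The main obstacles are (i) keeping the floor/ceiling bookkeeping clean at the threshold $p = L-1$, which is handled cleanly by observing that $p < k/x$ if and only if $p \le L-1$ regardless of whether $k/x$ is integral, and (ii) the algebraic rearrangement in the last step, which is routine but easy to miscount by one. For the scale-in direction, I would appeal to the fact that chunk-based partitioning is deterministic and reversible: going from $k+x$ partitions back to $k$ undoes exactly the redistribution above, and the set of moved edges is identical, so the cost coincides.
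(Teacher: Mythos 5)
Your proposal is correct and follows essentially the same route as the paper's proof: both work in the continuous approximation, intersect each old partition's interval $[\,p|E|/k,\,(p+1)|E|/k)$ with its new interval, and identify the threshold $p < k/x$ (i.e., $p \le \lceil k/x\rceil - 1$) separating partially- from fully-migrated partitions. The only cosmetic difference is that you count retained edges and subtract from $|E|$, whereas the paper sums the migrated edges of each partition directly; the two tallies are complements and yield the same formula.
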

\begin{proof}
We consider a simple case where $|E| \mod k = 0, |E| \mod (k+1) = 0, |E| \mod (k+2) = 0, ... ,\ |E| \mod (k+x) = 0$.
Then, there are two cases in the edge migration for partition $i (i \in [0,k))$: (i) some of the edges in partition $i$ are migrated to other partitions, or (ii) all of the edges in partition $i$ are migrated to other partitions.


\medskip
\noindent
\underline{\textit{Case (i):}} 
In this case, for partition $i$, the edges from $i\frac{|E|}{k}$-th edge to $(i+1)\frac{|E|}{k+x}$-th are kept in partition $i$, while from $(i+1)\frac{|E|}{k+x}$-th to $(i+1)\frac{|E|}{k}$-th edges are migrated to other partitions.

Thus the number of migrated edges for partition $i$ is represented as follows:
\begin{eqnarray*}
(i+1)\frac{|E|}{k} - (i+1)\frac{|E|}{k+x} = (i+1)\frac{|E|n}{(k+x)k}
\end{eqnarray*}
Case (i) happens when $(i+1)\frac{|E|n}{(k+x)k} > \frac{|E|}{k}$.
\begin{eqnarray*}
(i+1)\frac{|E|n}{k(k+x)} > \frac{|E|}{k} \Leftrightarrow (i+1) > \frac{k+x}{x} \Leftrightarrow i > \frac{k}{x}
\end{eqnarray*}
Therefore, Case (i) happens when $i > \frac{k}{x}$.

\noindent
\underline{\textit{Case (ii):}} 
In the other case (i.e., $i \leq \frac{k}{x}$), all of the edges in partition $i$ are migrated to other partitions.
Thus, the number of migrated edges for partition $i$ is $\frac{|E|}{k}$.

\medskip
Therefore, to summarize Cases (i) and (ii), the total number of migrated edges from $i = 0$ to $i = k-1$ is formalized as follows:
\begin{eqnarray*}
&&\sum_{0 \leq i < \frac{k}{x}} (i+1) \frac{|E|x}{(k+x)k} + \sum_{\frac{k}{x} \leq i < k} \frac{|E|}{k} \\
&=& \frac{|E|x}{(k+x)k} \sum_{0 \leq i < \frac{k}{x}} (i+1) + \frac{|E|}{k} \sum_{\frac{k}{x} \leq i < k} 1 \\
&=& \frac{x |E|}{2 k (k+x)} \left\lceil \frac{k}{x}\right\rceil \left(\left\lceil \frac{k}{x}\right\rceil + 1\right) + \frac{|E|}{k}\left(k - \left\lceil\frac{k}{x}\right\rceil \right)
\end{eqnarray*}

The aforementioned simplified proof can be straightforwardly generalized for the case of $|E| \mod k \not= 0, |E| \mod k+1 \not= 0, ...,\ |E| \mod k+x \not= 0$, based on the assumption $(|E| \mod k + x) / |E| \approx 0$. \qed
\end{proof}

In practice, a process is typically added or removed incrementally, i.e., $x=1$.
We can simply obtain the following corollary from the theorem.
\begin{corollary}[Migration Cost in $x=1$]
The number of migrated edges for $x=1$ is approximately $\frac{|E|}{2}$.
\end{corollary}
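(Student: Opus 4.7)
The plan is a direct substitution of $x=1$ into the migration-cost formula established in Theorem~\ref{thr:migration}. First I would observe that when $x=1$, the ceiling expression collapses: $\left\lceil k / x \right\rceil = \lceil k \rceil = k$. This single observation drives the entire simplification, so I would state it explicitly before plugging in.

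Next I would substitute $x=1$ and $\lceil k/x \rceil = k$ into both summands of the theorem statement. The first term becomes
\begin{equation*}
\frac{1 \cdot |E|}{2 k (k+1)} \cdot k \cdot (k+1) = \frac{|E|}{2},
\end{equation*}
and the second term becomes $\frac{|E|}{k}(k-k) = 0$. Adding the two yields $|E|/2$, which is precisely the claimed approximation. The word ``approximately'' in the corollary statement is inherited directly from the approximation assumption $(|E| \bmod k + x)/|E| \approx 0$ already made in Theorem~\ref{thr:migration}, so no additional error analysis is required.

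There is no significant obstacle: the corollary is a one-line arithmetic consequence of the theorem, and the only subtlety worth flagging to the reader is the collapse of the ceiling to an exact integer when $x=1$, which is what makes the second term vanish and the first term telescope to the clean fraction $|E|/2$. I would therefore present the proof as a two-line display ending with a \qed.
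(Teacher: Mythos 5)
Your proposal is correct and is exactly the intended argument: the paper offers no explicit proof, stating only that the corollary follows ``simply'' from Theorem~\ref{thr:migration}, and your direct substitution of $x=1$ (with $\lceil k/x\rceil = k$ collapsing the first term to $|E|/2$ and annihilating the second) is that one-line derivation. Nothing is missing.
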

The result (i.e., $\frac{|E|}{2}$) is significantly smaller than the random way, which may migrate $\frac{k}{k+1}|E|$ edges from $k$ to $k+1$ partitions in average, i.e., approximately $\frac{k}{k+1}|E|$ edges are migrated while $\frac{1}{k+1}|E|$ are kept in the same partition. 



\subsection{Graph Edge Ordering}\label{sec:def}
To improve the partitioning quality of the chunk-based edge partitioning, the graph edge ordering orders the input edges in advance in such a way that closer edges in the graph have closer edge ids. 

\medskip
\noindent\textbf{\textit{Formulation of Graph Edge Ordering.}}
We formulate the graph edge ordering problem as an optimization problem.
It is theoretically derived from the balanced $k$-way edge partitioning problem and the chunk-based edge partitioning.

According to Sec.~\ref{sec:chunk}, the replication factor of $k$ edge partitions generated by the chunk-based edge partitioning is represented as follows:
\begin{equation*}
  \frac{1}{|V|} \sum_{p = 0}^{k-1} \left|V\left(E^{\phi}_{ch}\left(\sum_{x=0}^{p-1}\left\lfloor\tfrac{|E|+x}{k}\right\rfloor, \left\lfloor\tfrac{|E| + p}{k}\right\rfloor \right)\right)\right|
\end{equation*}
The goal of our problem is to minimize the replication factor for arbitrary $k$.
Let $k_{min}$ be the upper bound and $k_{max}$ be the lower bound, i.e., $k_{min} \leq k \leq k_{max}$ (as discussed in the empirical analysis of the distributed graph systems and partitioning~\cite{Han:2014:ECP:2732977.2732980,6877273,Verma:2017:ECP:3055540.3055543,abbas2018streaming,Gill:2018:SPP:3297753.3316427,Pacaci:2019:EAS:3299869.3300076}, $k_{min}$ is typically less than ten while $k_{max}$ is close to one hundred in practice).   
Thus, the objective is to find edge ordering which minimizes the summation of the above formula from $k=k_{min}$ to $k_{max}$.
\begin{definition}[Graph Edge Ordering I]\label{def:ordering1}
The objective of \textbf{the graph edge ordering problem} is formalized as follows:
\begin{equation}\label{eq:ordering1}
  \min_{\phi \in \Phi} \frac{1}{|V|}  \sum_{k=k_{min}}^{k_{max}} \sum_{p = 0}^{k-1} \left|V\left(E^{\phi}_{ch}\left(\sum_{x=0}^{p-1}\left\lfloor\tfrac{|E|+x}{k}\right\rfloor, \left\lfloor\tfrac{|E| + p}{k}\right\rfloor \right)\right)\right|,
\end{equation}
where $k_{min} \geq 2$; $k_{max} \leq |E|$; and $\Phi$ is the set of all orders for the edges.
\end{definition}

\smallskip
\noindent\textbf{\textit{NP-hardness of Graph Edge Ordering Problem.}}
The graph ordering problem is NP-hard because the graph partitioning is already NP-hard when the number of partitions is fixed.

\begin{theorem}[NP-hardness]\label{thr:nphard}
The graph edge ordering problem is NP-hard if $|E|$ is much larger than $k_{\mathit{max}}$ so that less than $k_{\mathit{max}}$ edges do not affect the optimized result.
\end{theorem}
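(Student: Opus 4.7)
The plan is to reduce from the balanced $k$-way edge partitioning problem, which the paper has already noted is NP-hard (Sec.~\ref{sec:introduction}, citing \cite{garey1974some,Andreev:2004:BGP:1007912.1007931,Bourse:2014:BGE:2623330.2623660,Zhang:2017:GEP:3097983.3098033}). Specifically, I would argue that the hardness already shows up in the special case $k_{min} = k_{max} = k^{*}$ of Definition~\ref{def:ordering1}, since if even this restricted instance is hard, then so is the general summation form (the general problem trivially contains this one as a sub-case by setting both bounds equal).

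First, I would fix the correspondence between orderings and partitions. Given an instance $(G, k^{*})$ of balanced $k^{*}$-way edge partitioning with $\epsilon = 0$, any balanced partition $\mathcal{E}_{k^{*}}$ in which each part has exactly $|E|/k^{*}$ edges can be realized by at least one ordering $\phi$: simply list the edges of $\mathcal{E}_{k^{*}}[0]$ first (in any internal order), then those of $\mathcal{E}_{k^{*}}[1]$, and so on. Conversely, any ordering $\phi$ induces via the chunk-based edge partitioning of Sec.~\ref{sec:chunk} a balanced partition whose $p$-th part is $E^{\phi}_{ch}\bigl(p|E|/k^{*}, |E|/k^{*}\bigr)$. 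Crucially, since $|V(\mathcal{E}_{k^{*}}[p])|$ depends only on the edge set of the chunk and not on the internal order of its elements, the value of the objective in~(\ref{eq:ordering1}) for such a $\phi$ coincides exactly with $|V| \cdot RF(\mathcal{E}_{k^{*}})$.

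Consequently, an optimal ordering $\phi^{*}$ yields, through a single pass of chunk-based partitioning, an optimal balanced $k^{*}$-way edge partition of $G$. Any polynomial-time algorithm for the graph edge ordering problem would therefore decide the balanced $k^{*}$-way edge partitioning problem in polynomial time, contradicting its NP-hardness. This establishes NP-hardness of Definition~\ref{def:ordering1}.

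The main technical nuisance is the non-divisibility of $|E|$ by $k^{*}$ and the fact that Def.~\ref{def:edgepartitioning} allows a slack $\epsilon > 0$, while chunk-based partitioning produces parts whose sizes differ by at most one. The hypothesis of the theorem, namely $|E| \gg k_{max}$, is designed exactly to absorb this discrepancy: the at-most-one-edge imbalance across chunks can reclassify at most $k^{*}$ edges, which by assumption does not change which ordering is optimal for~(\ref{eq:ordering1}). I would make this precise by arguing that an $\varepsilon$-approximate solution to the relaxed partitioning instance (with $\epsilon$ of order $k^{*}/|E|$) still implies an optimal solution to the original one when $|E|$ is sufficiently large, so the reduction is polynomial and preserves hardness.
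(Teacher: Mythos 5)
Your core reduction is essentially the paper's: for the single-$k$ case ($k_{min}=k_{max}=k^{*}$) you set up the correspondence between orderings and balanced partitions (any balanced partition is realized by listing its parts consecutively; any ordering induces a balanced partition by chunking), note that the ordering objective equals $|V|\cdot RF$ of the induced partition since $|V(\cdot)|$ is order-insensitive within a chunk, and conclude that an optimal ordering yields an optimal balanced $k^{*}$-way edge partition, reducing from the NP-hard problem of~\cite{Zhang:2017:GEP:3097983.3098033}. The paper phrases this same step as a contradiction (a strictly better partition would induce a strictly better ordering via \texttt{ID2P}$_k$), and both arguments lean on the $|E|\gg k_{max}$ hypothesis to absorb the $O(k^{*})$ slack coming from non-divisibility and the $\epsilon$-tolerance; your remark that $\epsilon$ of order $k^{*}/|E|$ suffices makes this slightly more explicit than the paper does. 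Where you genuinely diverge is the case $k_{min}<k_{max}$: you dispose of it by observing that the general problem contains the single-$k$ instances as a sub-family, which is the standard and logically clean way to establish NP-hardness of the problem as stated in Definition~\ref{def:ordering1}. The paper instead tries to show that the strictly multi-$k$ instances are themselves hard, resting on the claim that an optimizer of the sum over $k$ is also an optimizer of each individual summand --- a claim that does not hold for general sums and is left unjustified. So your route proves the theorem as stated with less machinery and without that fragile step, at the cost of saying nothing about the hardness of instances in which $k_{min}$ is strictly smaller than $k_{max}$, which is what the paper's second case aims to cover.
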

\begin{proof}
We first show that the graph edge ordering problem is NP-hard for single $k$, i.e., $k_{min} = k_{max}$.
We then prove the general case of multiple $k$, i.e., $k_{min} < k_{max}$.

\smallskip
\noindent
\underline{\emph{Case of Single $k$}:} 
Suppose $k_{min} = k_{max} = k$.
The objective of the graph edge ordering problem is represented as follows:
\begin{equation} \label{eq:k_0}
\min_{\phi \in \Phi} \frac{1}{|V|} \sum_{p = 0}^{k -1} \left|V\left(E^{\phi}_{ch}\left(
\sum_{x=0}^{p-1}\left\lfloor\tfrac{|E|+x}{k}\right\rfloor, \left\lfloor\tfrac{|E| + p}{k}\right\rfloor \right)\right)\right|.
\end{equation}

Now, we define a function to convert the edge order into the partition, \texttt{ID2P}$_{k}$: $i \mapsto p$, as Algorithm~\ref{alg:id2partition_app}.
By using \texttt{ID2P}$_{k}$, we can generate new edge partitions from the edge orders in linear time.

\setcounter{algocf}{1}
\begin{algorithm}[h]
\caption{Conversion from Edge ID to Partition}\label{alg:id2partition_app}
\SetKwInOut{Input}{Input}
\SetKwInOut{InputEmpty}{}
\SetKwInOut{Output}{Output}
\SetKwProg{Fn}{}{}{}
\SetKwFunction{IDtoP}{ID2P$_{k}$}
\Input{$i$ -- Ordered Edge ID}
\Output{$p$ -- Partition ID}

\DontPrintSemicolon
\BlankLine
\Fn{\IDtoP{$i$}}{
  $p$ $\gets$ $0$; $\mathit{cur}$ $\gets$ $\left\lfloor\tfrac{|E| + p}{k}\right\rfloor$\;
  \While{$i < \mathit{cur}$}{
    $p$ $\gets$ $p+1$; $\mathit{cur}$ $\gets$ $\mathit{cur} + \left\lfloor\tfrac{|E| + p}{k}\right\rfloor$\;
  }
  \Return $p$ \;
}
\end{algorithm}

Suppose the order~$\phi_{\mathit{opt}}$ is the optimal solution for the graph edge ordering problem.
Then, the edge partitions converted from $\phi_{\mathit{opt}}$ via \texttt{ID2P}$_{k}$ is also the optimal solution for the edge partitioning problem in a case when $\epsilon \approx 0$ in Def.~\ref{def:edgepartitioning}.

The reason is as follows.
If the edge partitions converted from $\phi_{\mathit{opt}}$ via \texttt{ID2P}$_{k}$ is \emph{not} the optimal solution (more specifically, more than $k_{\mathit{max}}$ edges are in the different partitions from the optimal partitions), then there exist another optimal edge partitions, $\mathcal{E}^{opt}_{k} := \{\mathcal{E}^{opt}_{k}[p]\ |\  0 \leq p < k\}$, which provides a better solution for the edge partitioning problem than $\phi_{\mathit{opt}}$.
Based on $\mathcal{E}^{opt}_{k}$, we can generate new edge ordering $\phi'$ in such a way that for $p$
\begin{equation*}
\mathcal{E}^{opt}_{k}[p] = \left\{E^{\phi'}\left[ \mathit{b} \right], E^{\phi'}\left[ \mathit{b} + 1 \right], ..., E^{\phi'}\left[\mathit{b} + \lfloor\tfrac{|E| + p}{k}\rfloor - 1\right] \right\},
\end{equation*}
where $\mathit{b} := \sum_{x=0}^{p-1}\left\lfloor\tfrac{|E|+x}{k}\right\rfloor$.
Since $\mathcal{E}^{opt}_{k}$ provides the optimal solution,
\begin{eqnarray*}
&&RF(\mathcal{E}^{opt}_{k}) := \frac{1}{|V|} \sum_{p = 0}^{k -1} |V\bigl(\mathcal{E}^{opt}_{k}[p]\bigr)| \\
&=&  \frac{1}{|V|} \sum_{p = 0}^{k -1} \left|V\left(E^{\phi'}_{ch}\left(
\sum_{x=0}^{p-1}\left\lfloor\tfrac{|E|+x}{k}\right\rfloor, \left\lfloor\tfrac{|E| + p}{k}\right\rfloor \right)\right)\right|
\end{eqnarray*}
is the optimal value. 
On the other hand, $\phi_{\mathit{opt}}$ provides the optimal value of Eq.~\eqref{eq:k_0} as follows:
\begin{equation*}
\frac{1}{|V|} \sum_{p = 0}^{k -1} \left|V\left(E^{\phi_{\mathit{opt}}}_{ch}\left(
\sum_{x=0}^{p-1}\left\lfloor\tfrac{|E|+x}{k}\right\rfloor, \left\lfloor\tfrac{|E| + p}{k}\right\rfloor \right)\right)\right|.
\end{equation*}
This is a contradiction to the assumption that $\mathcal{E}^{opt}_{k}$ provides the better solution than $\phi_{\mathit{opt}}$.
Thus, $\phi_{\mathit{opt}}$ can provide the optimal solution for the edge partitioning problem as well.


Therefore, the problem~\eqref{eq:k_0} is reducible to the balanced $k$-way edge partitioning problem, which is an NP-hard problem as proved in~\cite{Zhang:2017:GEP:3097983.3098033}.

\smallskip
\noindent
\underline{\emph{Case of $k_{min} < k_{max}$}:}
We explain the case when $k_{min} = 2$ and $k_{max} = 3$.
The following discussion can be straightforwardly generalized to any $k_{\mathit{min}}$ and $k_{\mathit{max}}$.

According to Def.~\ref{def:ordering1}, we define a function, $\mathit{Num}(k,p)$, for the normalized number of vertices involved in the chunk of edges as follows:
\begin{equation*}
   \mathit{Num}(k,p) := \frac{1}{|V|}\left|V\left(E^{\phi}_{ch}\left(\sum_{x=0}^{p-1}\left\lfloor\tfrac{|E|+x}{k}\right\rfloor, \left\lfloor\tfrac{|E| + p}{k}\right\rfloor\right)\right)\right|.
\end{equation*}
Suppose $k_{min} = 2$ and $k_{max} = 3$, we will show the NP-hardness of the optimization problem as follows:
\begin{align}
\min_{\phi \in \Phi}\sum_{k=2}^{3} \sum_{p = 0}^{k-1} N(k,p) = \min_{\phi \in \Phi}\{\mathit{Num}(2,0)+\mathit{Num}(2,1) \nonumber \\
+\mathit{Num}(3,0)+\mathit{Num}(3,1)+\mathit{Num}(3,2) \label{eq:opt} \}.
\end{align}
Here, based on the above discussion of the single $k$, the following optimization problems are already proved to be NP-hard:
\begin{eqnarray}
&&\min_{\phi \in \Phi} \left\{ \mathit{Num}(2,0) + \mathit{Num}(2,1) \right\} \label{eq:opt1} \\
&&\min_{\phi \in \Phi} \left\{ \mathit{Num}(3,0) + \mathit{Num}(3,1) + \mathit{Num}(3,2) \label{eq:opt2} \right\}.
\end{eqnarray}
Suppose $\phi_{\mathit{opt}}$ is the optimal order for \eqref{eq:opt}, then the order can be also the optimal for \eqref{eq:opt1} and \eqref{eq:opt2}.
Thus, if \eqref{eq:opt} is not NP-hard, it is a contradiction to the NP-hardness of \eqref{eq:opt1} and \eqref{eq:opt2}.
Therefore, \eqref{eq:opt} is also NP-hard.
To summarize, the graph edge ordering problem is NP-hard. \qed
\end{proof}

\section{Greedy Algorithm for Graph Edge Ordering} \label{sec:algorithm}
Due to the NP-hardness of the graph edge ordering problem, we require an approximation algorithm to solve the problem within an acceptable time.
In this section, we propose a greedy algorithm for the graph ordering problem.

Our key idea is \emph{greedy expansion}, as illustrated in Figure~\ref{fig:overviewgreedy}.
The algorithm initially selects a single vertex at random and assigns orders to its neighbors from $0$.
After that, it greedily selects a vertex from the frontier vertices of the already ordered part so that the score of the objective function becomes the local minimum.
Then, new orders are assigned to the neighbors of the selected vertex.
The expansion is executed iteratively until all edges are ordered.

\begin{figure}[h]
  \vspace{-10pt}
  \centering
   \includegraphics[width=\columnwidth]{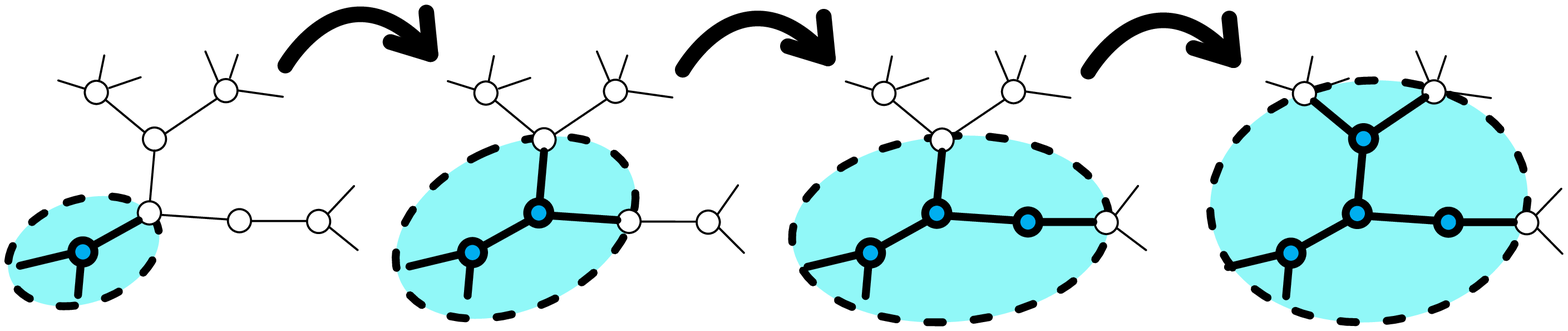}
 
  \caption{Greedy Expansion.}\label{fig:overviewgreedy}

\end{figure}

To find the local optimum in each iteration, the greedy expansion needs to calculate the objective function (Eq.~\eqref{eq:ordering1} in Def.~\ref{def:ordering1}) for \emph{partial ordered edges}, $X^{\phi}$($\subseteq E^{\phi}$). 
However, Eq.~\eqref{eq:ordering1} is defined only for the entire edges (i.e., $E^{\phi}$) and cannot be computed for $X^{\phi}$.
Thus, we modify the summation over $p$ (i.e.,~$\sum_{p}$) in Eq.~\eqref{eq:ordering1} into one over $E$ (i.e.,~$\sum_{E}$) so that the algorithm can evaluate $X^{\phi}$ in each iteration.

To do so, we additionally define a function $S$ that detects candidates for the splitting points when the edges will be partitioned via the chunk-based edge partitioning.
Based on $S$, Eq.~\eqref{eq:ordering1} is modified into an summation over $E$ (i.e.,~$\sum_{E}$).


\begin{definition}[Graph Edge Ordering II]\label{def:ordering2} \ \\ 
Suppose
\begin{eqnarray*}
&f_{k}(i, w) := \displaystyle S_{k}(i) \cdot \left|V\left( E^{\phi}_{ch}\left(i-w +1, w \right)\right)\right|, \\
&S_{k}(i) := \begin{cases} 
                  1 \ \text{ if } \texttt{ID2P}_{k}(i) \not= \texttt{ID2P}_{k}(i+1) \text{ or } i = |E| - 1\\
                  0 \ \text{otherwise}
               \end{cases}
\end{eqnarray*}
where we extend the definition of the edge chunk such that $E^{\phi}_{ch}(i', w) := E^{\phi}_{ch}(0, w)$ for $i' < 0$.

Then, the objective of \textbf{the graph edge ordering problem} is redefined as follows:
\begin{equation}\label{eq:ordering2}
    \min_{\phi \in \Phi} \frac{1}{|V|} \sum_{k = k_{\mathit{min}}}^{k_{\mathit{max}}} \sum_{i=0}^{|E|-1} f_{k}\left(i, \left\lfloor\tfrac{|E| + \texttt{ID2P}_{k}(i)}{k}\right\rfloor\right).
\end{equation}
\end{definition}

The following gives the correctness of the modification.
\begin{lemma}\label{lemm:eq_def}
Definition~\ref{def:ordering1} and \ref{def:ordering2} are equivalent.
\end{lemma}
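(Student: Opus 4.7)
The plan is to fix an arbitrary $k \in [k_{\min}, k_{\max}]$ and show that the inner sum over $i$ in Definition~\ref{def:ordering2} collapses, term by term, to the inner sum over $p$ in Definition~\ref{def:ordering1}. Since both outer sums range over the same values of $k$ and share the prefactor $1/|V|$, proving this reduction for a single $k$ suffices.

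First, I would characterize the support of the indicator $S_{k}$. By the chunk-based partitioning rule recalled in Section~\ref{sec:chunk}, the last edge index of partition $p$ is
\begin{equation*}
  i_{p} := \sum_{x=0}^{p}\left\lfloor\tfrac{|E|+x}{k}\right\rfloor - 1,
\end{equation*}
and for $0 \le p < k-1$ we have $\texttt{ID2P}_{k}(i_{p}) = p$ and $\texttt{ID2P}_{k}(i_{p}+1) = p+1$, whereas for $p = k-1$ we have $i_{p} = |E|-1$. For every other index $i$, both $i$ and $i+1$ lie in the same partition, so $\texttt{ID2P}_{k}(i) = \texttt{ID2P}_{k}(i+1)$. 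Hence $S_{k}(i) = 1$ iff $i = i_{p}$ for some $p \in \{0,\dots,k-1\}$, and $S_{k}(i) = 0$ otherwise.

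Next I would evaluate $f_{k}$ at each surviving index. At $i = i_{p}$, the width parameter becomes
\begin{equation*}
  w := \left\lfloor\tfrac{|E| + \texttt{ID2P}_{k}(i_{p})}{k}\right\rfloor = \left\lfloor\tfrac{|E| + p}{k}\right\rfloor,
\end{equation*}
which is exactly the size of partition $p$. Substituting into the chunk, its starting index is
\begin{equation*}
  i_{p} - w + 1 = \sum_{x=0}^{p-1}\left\lfloor\tfrac{|E|+x}{k}\right\rfloor,
\end{equation*}
which coincides with the starting index of partition $p$ (the extension rule $E^{\phi}_{ch}(i',w) := E^{\phi}_{ch}(0,w)$ for $i' < 0$ covers the degenerate case $p=0$ when the sum is empty and the starting index is already $0$). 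Therefore
\begin{equation*}
  f_{k}(i_{p}, w) = \left|V\!\left(E^{\phi}_{ch}\!\left(\sum_{x=0}^{p-1}\left\lfloor\tfrac{|E|+x}{k}\right\rfloor, \left\lfloor\tfrac{|E|+p}{k}\right\rfloor\right)\right)\right|.
\end{equation*}

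Finally, summing over $i$ amounts to summing only over the $k$ indices $i_{0}, i_{1}, \dots, i_{k-1}$ because $S_{k}$ zeros out the rest, and the previous display shows that each term equals the corresponding $p$-th summand in Definition~\ref{def:ordering1}. Repeating this for every $k$ and dividing by $|V|$ yields equality of the two objectives, and hence of their argmin sets. The main obstacle, which is not really difficult but requires care, is handling the boundary cases $p=0$ (empty prefix sum) and $p=k-1$ (where $i_{p} = |E|-1$ activates the second disjunct in the definition of $S_{k}$); both are covered by the explicit conventions in Definition~\ref{def:ordering2}.
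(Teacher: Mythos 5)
Your proof is correct and follows essentially the same route as the paper's: identify the support of $S_{k}$ as the partition-boundary indices, evaluate $f_{k}$ there to recover the chunk starting index $\sum_{x=0}^{p-1}\lfloor (|E|+x)/k\rfloor$ and width $\lfloor (|E|+p)/k\rfloor$, and collapse the sum over $i$ to the sum over $p$. Your treatment is somewhat more explicit about the boundary cases ($p=0$ and $p=k-1$) than the paper's, but the argument is the same.
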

\begin{proof}
In Eq.~\eqref{eq:ordering2}, according to the definition of $S_{k}(i)$, \\
$f_{k}\left(i, \left\lfloor\tfrac{|E| + \texttt{ID2P}_{k}(i)}{k}\right\rfloor\right)$ is non-zero only if
\begin{eqnarray*}
i = \left\lfloor\tfrac{|E|}{k}\right\rfloor - 1, \left\lfloor\tfrac{|E|}{k}\right\rfloor + \left\lfloor\tfrac{|E| + 1}{k}\right\rfloor - 1, ..., \sum_{x=0}^{k-1} \left\lfloor\tfrac{|E| + x}{k}\right\rfloor -1
\end{eqnarray*}
Therefore,
\begin{eqnarray*}
&&\displaystyle\sum_{k = k_{\mathit{min}}}^{k_{\mathit{max}}} \sum_{i=0}^{|E|-1}  f_{k}\left(i, \left\lfloor\tfrac{|E| + \texttt{ID2P}_{k}(i)}{k}\right\rfloor\right) \\
&=&\displaystyle\sum_{k = k_{\mathit{min}}}^{k_{\mathit{max}}} \sum_{p = 0}^{k-1} f_{k}\left(\sum_{x=0}^{p}\left\lfloor\tfrac{|E|+x}{k}\right\rfloor - 1, \left\lfloor\tfrac{|E|+p}{k}\right\rfloor\right) \\
&=&\displaystyle\sum_{k = k_{\mathit{min}}}^{k_{\mathit{max}}} \sum_{p = 0}^{k-1} \left|V\left( E^{\phi}_{ch}\left( \sum_{x=0}^{p-1}\left\lfloor\tfrac{|E|+x}{k}\right\rfloor, \left\lfloor\tfrac{|E| + p}{k}\right\rfloor \right)\right)\right|,
\end{eqnarray*}
which is equal to Eq.~\eqref{eq:ordering1}.
Therefore, Def.~\ref{def:ordering1} and Def.~\ref{def:ordering2} are equivalent. \qed
\end{proof}

Then, we extend the objective function in Def.~\ref{def:ordering2} for the partial ordered edges, $X^{\phi}$($\subseteq E^{\phi}$), as follows:
\begin{eqnarray}
    &\displaystyle \frac{1}{|V|} \sum_{k = k_{\mathit{min}}}^{k_{\mathit{max}}} \sum_{i=0}^{|E|-1} f_{k}\left(X^{\phi}, i, \left\lfloor \tfrac{|E|+\texttt{ID2P}_{k}(i)}{k}\right\rfloor\right), \label{eq:genobj}
\end{eqnarray}
where
\begin{eqnarray}
    &f_{k}\left(X^{\phi}, i, w\right) := S_{k}(i) \cdot \left|V\left(X^{\phi}_{ch}\left(i-w+1, w\right)\right)\right|, \nonumber \\
    &\mathit{X^{\phi}_{\mathit{ch}}}\left(i-w+1,w\right) :=  \left\{X^{\phi}\left[i-w+1\right],..., X^{\phi}[i]\right\}. \nonumber
\end{eqnarray}
Note that for $x \geq |X^{\phi}|$, $X^{\phi}[x]$ does not exist. These cases are defined as follows:
\begin{equation*}
\scriptstyle \mathit{X^{\phi}_{\mathit{ch}}}(i-w +1,w) := \begin{cases} 
                                                        \scriptstyle  \{X^{\phi}[i-w+1], ..., X^{\phi}[|X^{\phi}|-1]\} & \scriptstyle (i-w+1 < |X^{\phi}| \leq i) \\
                                                        \scriptstyle\varnothing &\scriptstyle(|X^{\phi}| \leq i-w+1)
                                                        \end{cases}
\end{equation*}
In the remaining of this section, we first propose a baseline algorithm straightforwardly derived from Def.~\ref{def:ordering2}. 
Then, we propose an efficient algorithm for larger graphs, that provides the equivalent ordering result to the baseline's one but is significantly faster.

\subsection{Baseline Greedy Algorithm}


Algorithm~\ref{alg:greedy} shows the baseline greedy algorithm.
The algorithm involves two main parts: greedy search and ordering.
Each vertex is greedily selected in Lines~\ref{alg:greedy_from}--\ref{alg:greedy_to}.
Then, its one-hop and two-hop neighbors are processed and appended to $X^{\phi}$ in Lines~\ref{alg:assign_from}--\ref{alg:assign_to}.

In the greedy search (Lines~\ref{alg:greedy_from}--\ref{alg:greedy_to}), the objective function (Eq.~\eqref{eq:genobj}) is calculated for every frontier vertex in the already ordered part (i.e., $V_{\mathit{rest}} \cap V(X^{\phi})$), and then, a vertex which minimizes Eq.~\eqref{eq:genobj}, $v_{\mathit{min}}$, is selected.

In the ordering part (Lines~\ref{alg:assign_from}--\ref{alg:assign_to}), the algorithm orders all of the $v_{\mathit{min}}$'s one-hop-neighbor edges.
Moreover, let $\delta$ be the range of two-hop-neighbor edges to be considered. 
A two-hop-neighbor edge of $v_{\mathit{min}}$ is considered for ordering if its destination, $w$, is involved in $V(X^{\phi}_{\mathit{ch}}(|X^{\phi}| - \delta, \delta))$.
Each neighbor edge is accessed in ascending order of the destination vertex id (we use the default vertex id of each dataset).
The reason why such a two-hop neighbor may improve partitioning quality is due to the well-known property that: \textit{if vertex $v$ and $u$ are included in partition $P$, then the vertex replications do not increase by adding $e_{v,u}$ to $P$}.
It is commonly used in the existing methods~\cite{joseph2012powergraph,Bourse:2014:BGE:2623330.2623660,Petroni:2015:HSP:2806416.2806424,Chen:2015:PDG:2741948.2741970,Zhang:2017:GEP:3097983.3098033,hanai2019distributed}.

For $\delta$, we choose the size of the smallest chunk (i.e., $\delta = 10^{0} \times \frac{|E|}{k_{\mathit{max}}}$) to maximize both the quality and performance as preliminary evaluated in Figure~\ref{fig:delta} (Replication Factor is the average value for $k = 4,8,16,32,64,128$. $k_{max} = 128$.). 

\begin{figure}[h]
  \centering
  \subfigure{\includegraphics[width=.25\textwidth]{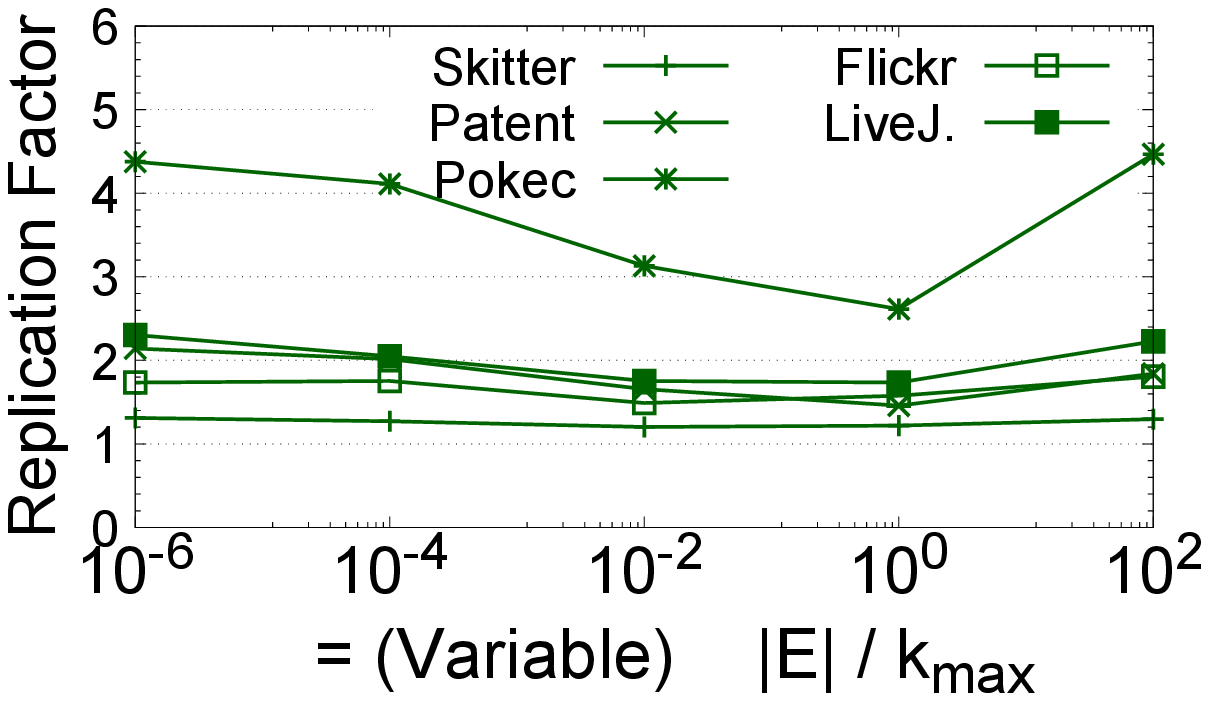}}%
  \subfigure{\includegraphics[width=.25\textwidth]{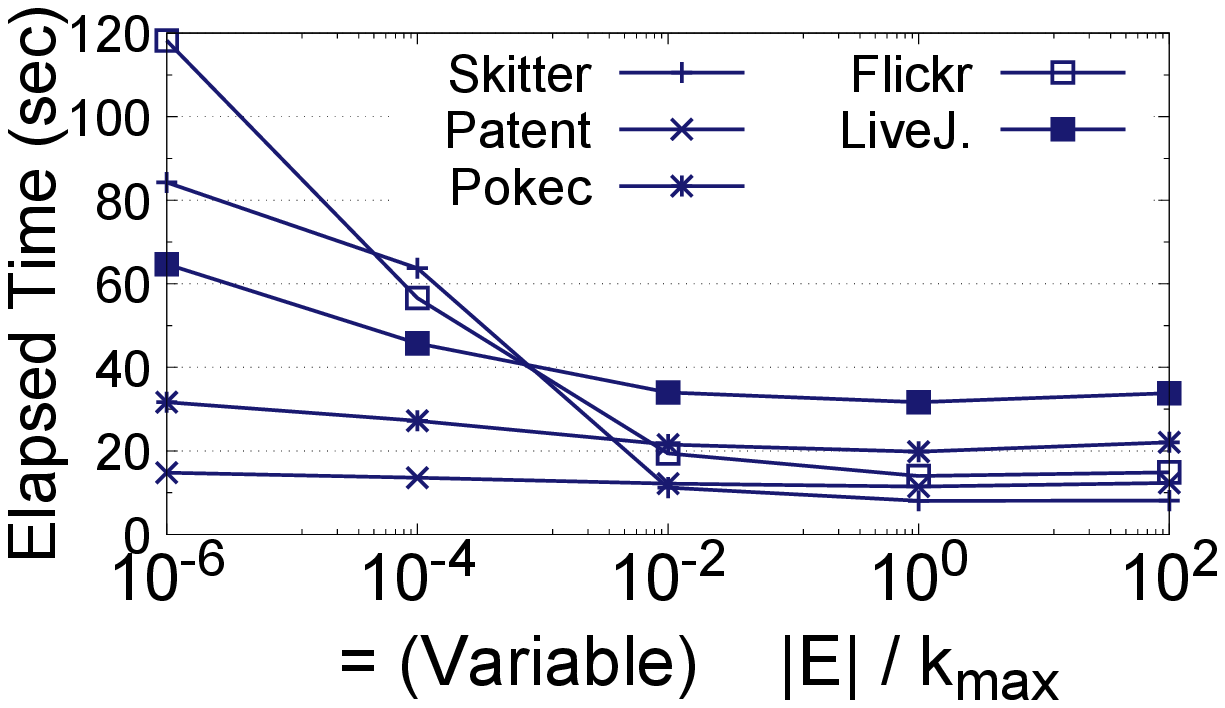}}%
  \caption{Quality and Performance for Different $\delta$.}\label{fig:delta}
\end{figure}


\begin{algorithm}[t]
\caption{Baseline Greedy Algorithm}\label{alg:greedy}
\SetKwInOut{Var}{Var}
\SetKwInOut{VarEmpty}{}
\SetKwInput{Input}{Input}
\SetKwInput{Output}{Output}
\Input{$G(V,E)$ -- Graph}
\Output{$X^{\phi}$ -- Ordered Edge List}
\SetKw{Break}{break}

\DontPrintSemicolon
\BlankLine

$X^{\phi}$ $\gets$ $\varnothing$; $i$ $\gets$ $0$; $V_{\mathit{rest}}$ $\gets$ $V$\;


\While{$|V_{\mathit{rest}}|$ $\not=$ $0$} { \label{alg:line:loopfrom}
  /* Greedy Search From Frontier Vertices */ \;
  $v_{\mathit{min}}$ $\gets$ $\varnothing$; $F_{\mathit{min}}$ $\gets$ $\infty$ \label{alg:greedy_from}\;
  \uIf{$V_{\mathit{rest}} \cap V(X^{\phi}) = \varnothing$}{
    $v_{\mathit{min}}$ $\gets$ $V_{\mathit{rest}}$\texttt{.RandomVertex()}\;
  }\Else{
    \For{$v \in V_{\mathit{rest}} \cap V(X^{\phi})$} { \label{alg:line:innerloopfrom}
      $X'^{\phi}$ $\gets$ $X^{\phi}$ + $\left(N(v) \setminus X^{\phi}\right)$\;
      $F_v$ $\gets$ $\displaystyle \tfrac{1}{|V|} \sum_{k = k_{\mathit{min}}}^{k_{\mathit{max}}} \sum_{i=0}^{|E|-1} \scriptstyle f_k\left(X'^{\phi}, i, \left\lfloor \tfrac{|E|+\texttt{ID2P}_{k}(i)}{k}\right\rfloor\right)$ \label{alg:line:f} \; 
      \lIf{$F_v$ $<$ $F_{\mathit{min}}$}{
        $F_{\mathit{min}}$ $\gets$ $F_v$; $v_{\mathit{min}}$ $\gets$ $v$
      } \label{alg:line:innerloopto}
    }
  }\label{alg:greedy_to}

  \BlankLine
  /* Assign New Edge Order */\;
  \For{$e_{v_{\mathit{min}},u} \in N(v_{\mathit{min}}) \setminus X^{\phi}$} {\label{alg:assign_from}
    $X^{\phi}[i]$ $\gets$ $e_{v_{\mathit{min}},u}$; $i$ $\gets$ $i + 1$ \;
    
    \For{$e_{u,w}\in N(u) \setminus X^{\phi}$} {
        \If{$w \in V(X^{\phi}_{\mathit{ch}}(|X^{\phi}| - \delta, \delta))$}{
            $X^{\phi}[i]$ $\gets$ $e_{u,w}$; $i$ $\gets$ $i + 1$ \;
        }
    }
  }\label{alg:assign_to}

  \BlankLine
  $V_{\mathit{rest}}$ $\gets$ $V_{\mathit{rest}} \setminus \{v_{\mathit{min}}\}$ \;\label{alg:line:loopto}
}
\Return $X^{\phi}$\;
\end{algorithm}


\begin{theorem}[Efficiency of Baseline Algorithm] \label{theorem:algo1complexity} \ \\
Efficiency of Algorithm~\ref{alg:greedy} is $O\left(\tfrac{k^2_{\mathit{max}}|E|^{2}\cdot|V|^{2}}{k_{\mathit{min}}}\right)$, where $k_{\mathit{max}}$ is much larger than $k_{\mathit{min}}$ s.t. $k_{\mathit{max}} - k_{\mathit{min}} \approx k_{\mathit{max}}$.
\end{theorem}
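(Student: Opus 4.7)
The plan is to decompose the total runtime by separately bounding the iterations of each nested loop in Algorithm~\ref{alg:greedy} and the dominant cost incurred per innermost operation. First I would note that the outer while loop at Lines~\ref{alg:line:loopfrom}--\ref{alg:line:loopto} executes at most $|V|$ times, because each pass removes the chosen $v_{\mathit{min}}$ from $V_{\mathit{rest}}$. Inside each pass, the greedy search scans the frontier $V_{\mathit{rest}} \cap V(X^{\phi})$, which has cardinality at most $|V|$, while the assignment block at Lines~\ref{alg:assign_from}--\ref{alg:assign_to} contributes only $O(|E|)$ aggregated over the entire algorithm (every edge is ordered exactly once, and each two-hop check is $O(\delta)$), so it is strictly dominated by the greedy search and can be ignored at this level of analysis.

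Next I would focus on the per-frontier-vertex cost, which is dominated by the evaluation of $F_v$ at Line~\ref{alg:line:f}. This quantity is a double sum: the outer index $k$ ranges over $(k_{\mathit{max}} - k_{\mathit{min}} + 1)$ values, which is $O(k_{\mathit{max}})$ under the hypothesis $k_{\mathit{max}} - k_{\mathit{min}} \approx k_{\mathit{max}}$, and the inner index $i$ ranges over all $|E|$ edge positions. Thus there are $O(k_{\mathit{max}} \cdot |E|)$ summand evaluations per vertex $v$.

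For each summand $f_k(X'^{\phi}, i, w)$ two subroutines must be performed. First, the partition index $\texttt{ID2P}_{k}(i)$ is needed to determine $S_k(i)$ and the chunk width $w$; invoked through the baseline Algorithm~\ref{alg:id2partition_app}, this costs $O(k) \leq O(k_{\mathit{max}})$. Second, the cardinality $\left|V\left(X'^{\phi}_{ch}(i-w+1, w)\right)\right|$ is obtained by scanning the chunk and recording distinct endpoints, which costs $O(w)$; since $w = \left\lfloor (|E| + \texttt{ID2P}_{k}(i))/k \right\rfloor \leq |E|/k_{\mathit{min}} + 1$, this step is $O(|E|/k_{\mathit{min}})$. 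Adding and then coarsely absorbing via $a + b = O(ab)$ (valid since both quantities are at least one), each summand costs $O(k_{\mathit{max}} \cdot |E| / k_{\mathit{min}})$.

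Putting the layers together yields $O(|V|) \cdot O(|V|) \cdot O(k_{\mathit{max}} \cdot |E|) \cdot O(k_{\mathit{max}} \cdot |E| / k_{\mathit{min}})$, which simplifies to $O\!\left(k_{\mathit{max}}^{2} |E|^{2} |V|^{2} / k_{\mathit{min}}\right)$, matching the claim. The main obstacle I anticipate is justifying the per-summand bound cleanly: in particular, arguing that it is legitimate to use the uniform worst-case chunk width $|E|/k_{\mathit{min}}$ across all $k$ rather than the tighter $|E|/k$, and that the multiplicative absorption $k_{\mathit{max}} + |E|/k_{\mathit{min}} = O(k_{\mathit{max}} \cdot |E|/k_{\mathit{min}})$ is acceptable for the stated bound. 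A sharper analysis exploiting that $S_k(i) = 1$ only at the $k$ chunk boundaries could yield a tighter estimate, but is not required to establish the claimed complexity and would motivate the improved algorithm presented subsequently.
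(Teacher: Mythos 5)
Your proposal is correct and follows essentially the same route as the paper's proof: an $O(|V|)$ outer loop, an $O(|V|)$ frontier scan, $O(k_{\mathit{max}}\cdot|E|)$ summand evaluations at Line~\ref{alg:line:f}, and a per-summand cost of $O(k_{\mathit{max}}|E|/k_{\mathit{min}})$, multiplied together under $k_{\mathit{max}}-k_{\mathit{min}}\approx k_{\mathit{max}}$. The only difference is that you spell out where the per-summand cost comes from (the $O(k)$ call to \texttt{ID2P}$_k$ plus the $O(|E|/k_{\mathit{min}})$ chunk scan, coarsely absorbed into a product), whereas the paper simply asserts that bound for $f_k$.
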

\begin{proof}
The outermost loop (Lines \ref{alg:line:loopfrom}--\ref{alg:line:loopto}) computes $O(|V|)$ iterations. 
In each iteration, the inner loop from Line \ref{alg:line:innerloopfrom} to \ref{alg:line:innerloopto} also computes $O(|V|)$ iterations.
Moreover, at Line~\ref{alg:line:f}, the summation requires $O(|E|\cdot(k_{\mathit{max}} - k_{\mathit{min}}))$ while $f_k\left(X^{\phi}, i, \tfrac{|E|}{k}\right)$ requires $O\left(\tfrac{k_{\mathit{max}}|E|}{k_{\mathit{min}}}\right)$.
Therefore, in total, it requires \\ $O\left(\tfrac{|V|^{2} \cdot |E| (k_{\mathit{max}} - k_{\mathit{min}}) \cdot k_{\mathit{max}} |E|}{k_{\mathit{min}}}\right) = O\left(\tfrac{k^2_{\mathit{max}}|E|^{2}\cdot|V|^{2}}{k_{\mathit{min}}}\right)$
under the assumption that $k_{\mathit{max}} - k_{\mathit{min}} \approx k_{\mathit{max}}$. \qed
\end{proof}


\begin{algorithm}[t]
\caption{PQ-based Fast Algorithm}\label{alg:greedy-opt}
\SetKwInOut{Var}{Var}
\SetKwInOut{VarEmpty}{}
\SetKwInput{Input}{Input}
\SetKwInput{Output}{Output}
\Input{$G(V,E)$ -- Graph}
\Output{$X^{\phi}$ -- Ordered Edge List}
\SetKw{Break}{break}

\DontPrintSemicolon
\BlankLine

$X^{\phi}$ $\gets$ $\varnothing$; $i$ $\gets$ $0$; $V_{\mathit{rest}}$ $\gets$ $V$; $\mathit{PQ}$ $\gets$ $\varnothing$ \;
$D[v]$ $\gets$ $|N(v)|$, $M[v]$ $\gets$ $0$ for each $v \in V$\;

\BlankLine
\While{$|V_{\mathit{rest}}|$ $\not=$ $0$} { \label{alg:greedy-opt:line:loopfrom}
  $v_{\mathit{min}}$ $\gets$ $\varnothing$\;
  \lIf{$\mathit{PQ} = \varnothing$}{
    $v_{\mathit{min}}$ $\gets$ $V_{\mathit{rest}}$\texttt{.RandomVertex()}
  }\lElse{
    $v_{\mathit{min}}$ $\gets$ $\mathit{PQ}$\texttt{.dequeue()}
  }

  \BlankLine
  \For{$e_{v_{\mathit{min}},u} \in N(v_{\mathit{min}}) \setminus X^{\phi}$} {\label{alg:greedy-opt:line:innerloopfrom}
    $X^{\phi}[i]$ $\gets$ $e_{v_{\mathit{min}},u}$; $i$ $\gets$ $i + 1$ \;
    $D[u]$ $\gets$ $D[u] - 1$; $M[u]$ $\gets$ $i$ \; \label{alg:line:updatem0}

    \For{$e_{u,w}\in N(u) \setminus X^{\phi}$} { \label{alg:greedy-opt:line:innermostloopfrom}
        \If{$w \in V(X^{\phi}_{\mathit{ch}}(|X^{\phi}| - \delta, \delta))$}{
            $X^{\phi}[i]$ $\gets$ $e_{u,w}$; $i$ $\gets$ $i + 1$ \;
            $D[u]$ $\gets$ $D[u] - 1$; $D[w]$ $\gets$ $D[w] - 1$ \;
            $M[w]$ $\gets$ $i$; $M[u]$ $\gets$ $i$ \; \label{alg:line:updatem1}
            $\mathit{PQ}$\texttt{.update($D[w]$,$M[w]$,$w$)}\; \label{alg:line:updatepq}
        }
    }\label{alg:greedy-opt:line:innermostloopto}

    \lIf{$u \not\in \mathit{PQ}$} {
        $\mathit{PQ}$\texttt{.enqueue($D[u]$,$M[u]$,$u$)}
    }\lElse {
        $\mathit{PQ}$\texttt{.update($D[u]$,$M[u]$,$u$)}
    }
  }\label{alg:greedy-opt:line:innerloopto}

  \BlankLine
  $V_{\mathit{rest}}$ $\gets$ $V_{\mathit{rest}} \setminus \{v_{\mathit{min}}\}$ \;\label{alg:greedy-opt:line:loopto}
}
\Return $X^{\phi}$ \;
\end{algorithm}

\subsection{Fast Algorithm Based on Priority Queue}

We propose an efficient greedy algorithm based on the priority queue, which provides the equivalent result to the baseline algorithm (Algorithm~\ref{alg:greedy}).
Although Algorithm~\ref{alg:greedy} provides an approximate solution of the NP-hard problem, its computational cost 
(Theorem~\ref{theorem:algo1complexity}) 
is still high as the real-world graph is typically large, including billions of elements.

Algorithm~\ref{alg:greedy-opt} shows the efficient algorithm.
Overall, its computation is the same as Algorithm~\ref{alg:greedy}, which begins with a random vertex. Then, the algorithm iteratively and greedily expands the ordered parts until all the edges are ordered.

The key difference from Algorithm~\ref{alg:greedy} is that Algorithm~\ref{alg:greedy-opt} utilizes a priority queue, $\mathit{PQ}$, to evaluate the objective function (Eq.~\eqref{eq:genobj}) instead of calculating the equation for every frontier vertex in $V_{\mathit{rest}} \cap V(X^{\phi})$.
Specifically, $\mathit{PQ}$ uses a priority represented as follows:
\begin{equation}
p(v) := \alpha \cdot D[v] - \beta \cdot M[v] \label{eq:priority}, 
\end{equation}
where $\alpha := \sum_{k = k_{\mathit{min}}}^{k_{\mathit{max}}}\left\lfloor \tfrac{|E|}{k}\right\rfloor$ and $\beta := k_{\mathit{max}} - k_{\mathit{min}}$ are computed in advance of the greedy expansion.
The frontier vertices are sorted in the ascending order by $p$.
$D[v]$ $(v \in V)$ is the $v$'s degrees for the rest of the edges (i.e., $D[v] := |N(v) \setminus X^{\phi}|$).
$M[v]$ stores the latest order of an edge which involves $v$.
$M[v]$ is updated each time a new edge order is assigned (Line~\ref{alg:line:updatem0} and Line~\ref{alg:line:updatem1} in Algorithm~\ref{alg:greedy-opt}).

\begin{figure}[t]
  \centering
   \includegraphics[width=\columnwidth]{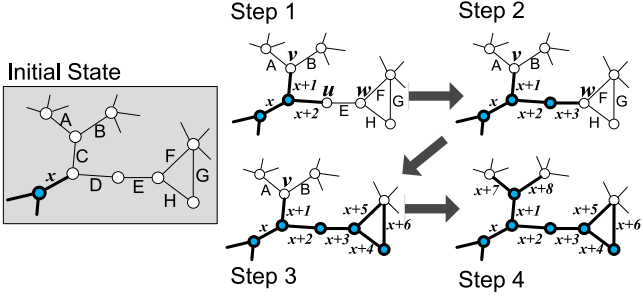}
  \caption{Greedy Expansion in Fast Algorithm. Edge~A,B,..,H is ordered into $x+1, x+2, ... x+8$.}\label{fig:expansion}
\end{figure}

Figure~\ref{fig:expansion} shows an example of the greedy expansion. 
After assigning the order to Edge~C and Edge~D, two frontier vertices, $v$ and $u$, exist in the graph. The algorithm selects $u$ and assigns $x+3$ to Edge~E because $p(u) < p(v)$.
Then, $v$ and $w$ become the frontier. Then, $w$ is selected because $p(w) < p(v)$.
The algorithm assigns $x+4$, $x+5$, $x+6$ to Edge~F,~H,~G, respectively.
Finally, $v$ is selected. Then, Edge A and B are ordered to $x+7$ and $x+8$, respectively.

The equivalence of Algorithm~\ref{alg:greedy} and Algorithm~\ref{alg:greedy-opt} is established by the following lemma.
It shows that the calculation of Eq.~\eqref{eq:genobj} can be replaced into $PQ$ based on $p$ (Eq.~\eqref{eq:priority}):
\begin{lemma}\label{lm:priority}
Suppose $|E|$ is much larger than $k_{max}$ such that $w:= \left\lfloor \tfrac{|E|}{k}\right\rfloor = \left\lfloor \tfrac{|E|+\texttt{ID2P}_{k}(\cdot)}{k}\right\rfloor$ and $D[v] < \tfrac{|E|}{k_{max}}$ for $\forall v \in V$.

Then, $\forall v, u \in V_{\mathit{rest}} \cap V(X^{\phi})$, \ \ $p(v) > p(u) \Rightarrow F_v >  F_u$,
where $F_v$ and $F_u$ are the value of Eq.~\eqref{eq:genobj} for $X^{\phi} + N(v)$ and $X^{\phi} + N(u)$ respectively, as shown in Line~\ref{alg:line:f} of Algorithm~\ref{alg:greedy}.
\end{lemma}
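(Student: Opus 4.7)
The plan is to reduce the claim $p(v) > p(u) \Rightarrow F_v > F_u$ to a direct comparison of the marginal contributions $\Delta_v := F_v - F(X^\phi)$ and $\Delta_u := F_u - F(X^\phi)$. Because both $F_v$ and $F_u$ are anchored at the common value $F(X^\phi)$, it suffices to show that $\Delta_v$ admits a closed-form expression that is affine in $D[v]$ and $M[v]$, with coefficients matching $\alpha/|V|$ and $-\beta/|V|$ respectively, plus a remainder that depends only on the common prefix $X^\phi$.

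First I would use the hypothesis $w = \lfloor|E|/k\rfloor = \lfloor(|E|+\texttt{ID2P}_k(\cdot))/k\rfloor$ to fix every chunk at uniform width $w_k := \lfloor|E|/k\rfloor$, so that for each $k$ the chunk boundaries form an arithmetic progression. Setting $i_0 := |X^\phi|$, $q_k := \lfloor i_0/w_k\rfloor$, and $r_k := (q_k+1)w_k - i_0$, I would then analyze $\Delta_k(v)$---the change in $\sum_p|V(\mathcal{E}_k[p])|$ attributable to appending $N(v)\setminus X^\phi$---by splitting into the cases $D[v] \leq r_k$ (no chunk boundary is crossed) and $D[v] > r_k$ (exactly one is crossed, guaranteed by the second hypothesis $D[v] < |E|/k_{max} \leq w_k$, which rules out traversing two boundaries). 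In each case I would enumerate the freshly introduced vertices: the $D[v]$ distinct neighbors $u_i$ of $v$ each contribute one new vertex to whichever chunk receives its edge, while the endpoint $v$ is fresh to a chunk exactly when $M[v]$ lies outside that chunk's index range---equivalently, $M[v] \leq q_k w_k$ for the current chunk $q_k$, and always for a successor chunk when one is opened.

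Summing the chunk-wise contributions over $k$, I expect the identity
\begin{equation*}
|V|\cdot\Delta_v \;=\; \alpha D[v] - \beta M[v] + C_{X^\phi}
\end{equation*}
to emerge, where $C_{X^\phi}$ depends only on the common prefix $X^\phi$ and therefore cancels in the comparison of $\Delta_v$ against $\Delta_u$. The coefficient $\beta = k_{max} - k_{min}$ should arise from counting, across the $k$-schemes, the number for which $v$ must be re-added as a fresh vertex to the current chunk---an aggregate that becomes linear in $M[v]$ under the uniform-width hypothesis. The coefficient $\alpha = \sum_k w_k$ should arise once the boundary-crossing indicator $\mathbb{1}[D[v] > r_k]$ is aggregated against $D[v]$ using the smallness hypothesis $D[v] < |E|/k_{max}$. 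With this identity in hand, $p(v) > p(u)$ immediately yields $\Delta_v > \Delta_u$ and hence $F_v > F_u$.

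The main obstacle will be the bookkeeping in the case analysis: in particular, verifying that the aggregate $\sum_k \mathbb{1}[M[v] \leq q_k w_k]$ really is affine in $M[v]$ with slope exactly $-\beta$ under the uniform-chunk hypothesis, and that the boundary-crossing contribution collapses to $\alpha D[v]$ rather than some other linear combination of $D[v]$ and the $w_k$'s. A secondary subtlety is confirming that each $u_i \in N(v)\setminus X^\phi$ is genuinely a distinct new vertex in its receiving chunk, which follows from the fact that those edges are not yet ordered anywhere in $X^\phi$ so that no $u_i$ can have been registered to a prior chunk through another edge of $v$.
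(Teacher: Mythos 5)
Your overall strategy---anchor $F_v$ and $F_u$ at the common prefix and show that the marginal gain is affine in $D[v]$ and $M[v]$ with coefficients matching $\alpha$ and $-\beta$---has the same shape as the paper's argument, but the computation you propose cannot produce that affine form, and the two ``obstacles'' you flag at the end are exactly where it breaks. You analyze the true objective \eqref{eq:genobj}, which (because of the factor $S_k(i)$) evaluates the window only at the $k$ chunk endpoints; your per-$k$ case analysis then yields a marginal gain of the form $\Delta_k(v)\approx D[v]+\mathbb{1}[v\ \text{fresh to the current chunk}]+\mathbb{1}[\text{a boundary is crossed}]$. Summing over $k$ gives a coefficient of roughly $k_{max}-k_{min}+1$ on $D[v]$, not $\alpha=\sum_k\lfloor|E|/k\rfloor$, and the $M[v]$-dependence enters only through $\sum_k \mathbb{1}[M[v]\le q_k w_k]$, an integer bounded by the number of $k$-values, which cannot be affine in $M[v]$ with slope $-\beta=-(k_{max}-k_{min})$; likewise $\sum_k \mathbb{1}[D[v]>r_k]\le k_{max}-k_{min}+1$ cannot supply anything proportional to $\alpha D[v]$. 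The identity $|V|\cdot\Delta_v=\alpha D[v]-\beta M[v]+C_{X^{\phi}}$ you are aiming for is therefore false for the split-point objective, and the final step of your plan does not go through.

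The missing idea is the paper's relaxation step: it first passes (via a one-directional ``$\Leftarrow$'') from the sum over chunk endpoints to the sum over \emph{all} window positions $i=|X^{\phi}|,\dots,|E|-1$ of the width-$w$ sliding window $X^{\phi}_{ch}(i-w+1,w)$. For that relaxed quantity the per-position gain decomposes as $\chi(i)+n(i)$, where $n(i)$ is a trapezoidal ramp whose sum over $i$ equals $wD[v]$ up to lower-order terms, and $\chi(i)$ is the indicator of an interval of length $|X^{\phi}|+D[v]-M[v]$, whose sum supplies the $-M[v]$ term; only after this integration over $i$ do the coefficients $\alpha$ and $\beta$ of Eq.~\eqref{eq:priority} emerge. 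Without that step the comparison with $p(v)$ cannot be closed. A secondary point: your justification that each $u_i\in N(v)\setminus X^{\phi}$ is a genuinely new vertex in its receiving chunk (``no $u_i$ can have been registered to a prior chunk through another edge of $v$'') is not sound, since $u_i$ may already appear in the current chunk via an edge to some third vertex; the paper makes the same simplification tacitly, but the reason you give does not establish it.
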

\begin{proof}
Suppose $\mathit{Xv}^{\phi} := X^{\phi} + N(v)$, $\mathit{Xu}^{\phi} := X^{\phi} + N(u)$.

\allowdisplaybreaks
\begin{eqnarray}
 && F_v > F_u \ \ \Leftrightarrow \ \ F_v - F_u > 0\nonumber \\
&\Leftrightarrow& \displaystyle \sum_{k = k_{\mathit{min}}}^{k_{\mathit{max}}} \sum_{i=0}^{|E|-1} \left\{ f\left(\mathit{Xv}^{\phi}, i, w \right) - f\left(\mathit{Xu}^{\phi}, i, w \right) \right\} {\displaystyle > 0} \nonumber  \\
&\Leftarrow& \displaystyle \sum_{k = k_{\mathit{min}}}^{k_{\mathit{max}}} \sum_{i=0}^{|E|-1} \scriptstyle \left\{ \left|V\left(\mathit{Xv}^{\phi}_{ch}\left(i-w+1, w \right)\right)\right| - \left|V\left(\mathit{Xu}^{\phi}_{ch}\left(i-w+1, w\right)\right)\right|\right\} \nonumber \\
&& {\displaystyle > 0}  \nonumber \\
&\Leftrightarrow& \displaystyle \displaystyle \sum_{k = k_{\mathit{min}}}^{k_{\mathit{max}}} \sum_{i=|X|^{\phi}}^{|E|-1} \left\{\Delta V(v,i) - \Delta V(u,i) \right\} > 0, \label{eq:ff} 
\end{eqnarray} 
where $\Delta V(v,i) := |V(\mathit{Xv}^{\phi}_{ch}(\scriptstyle{i-w+1, w}\displaystyle))|- |V(X^{\phi}_{ch}(\scriptstyle{i-w+1, w}\displaystyle))|$. 

Next, we will calculate $\Delta V(v,i)$ for $i \geq |X^{\phi}|$.
Intuitively, $\Delta V(v,i)$ means the number of additional replicated vertices in a chunk when we select $v$ to expand the ordered edges.
For each chunk determined by $i$, each additional replicated vertex comes from $v$ or $N(v)$.
Thus, $\Delta V(v,i)$ can be represented by the sum of two functions:
\begin{eqnarray}
\Delta V(v,i) = \chi(i) + n(i), \nonumber
\end{eqnarray}
where $\chi(i)$ is the number of replicated vertices caused by $v$; $n(i)$ is caused by $N(v)$.

First, $\chi(i)$ is the indicator function.
If $X^{\phi}_{ch}(i-w+1, w)$ already involves $v$, then the number of replicated vertices does not increase due to the additional $v$. Therefore, $\chi(i)$ is 0.
Specifically, this case appears if $i > M[v] + w$, because $X^{\phi}_{ch}(i-w+1, w)$ involves an edge $e$ whose order is $M[v]$ (i.e., $\phi(e) = M[v]$).
Otherwise, $v$'s replication is newly added to the chunk $Xv^{\phi}_{ch}(i-w+1, w)$, and thus $\chi(i)$ is 1.
Therefore, $\chi(i)$ is represented as follows:
\begin{eqnarray}
\chi(i)  = \begin{cases} 1 & \text{if } i \in [M[v] + w, |X^{\phi}| + D[v] + w) \\  0 & \text{if } i \not\in [M[v] + w, |X^{\phi}| + D[v] + w) \end{cases} \nonumber
\end{eqnarray}
where we also consider a case that $i$ is larger so that $Xv^{\phi}_{ch}(i-w+1, w)$ is empty. In this case, $\chi(i)$ is obviously 0.

Second, $n(i)$ is the number of the additional vertices derived from $N(v)$.
Its value can be represented as follows:
\begin{eqnarray}
n(i) = \begin{cases} 
i - |X^{\phi}| & (|X^{\phi}| \leq i < |X^{\phi}| + D[v]) \\
D[v] & (|X^{\phi}| + D[v] \leq i < |X^{\phi}| + w) \\
\scriptstyle D[v] - i + |X^{\phi}| + w & (|X^{\phi}| + w \leq i < |X^{\phi}| + D[v] + w) \\
0 & (|X^{\phi}| + D[v] + w < i) \\
\end{cases} \nonumber
\end{eqnarray}

Figure~\ref{fig:lemma} shows an example of these cases.
Suppose $v$ is selected in the greedy algorithm and new edge orders are assigned to $v$'s neighbor edges, $e_{v,u_0}$, $e_{v,u_1}$, $e_{v,u_2}$, $e_{v,u_3}$, and $e_{v,u_4}$.
Then, if $|X^{\phi}| \leq i < |X^{\phi}| + D[v]$, a part of $N(v)$ are added (e.g., $u_0, u_1, u_2$ in Figure~\ref{fig:lemma}).
If $|X^{\phi}| + D[v] \leq i < |X^{\phi}| + w$, all vertices in $N(v)$ are added (e.g., $u_0, u_1, u_2, u_3, u_4$ in Figure~\ref{fig:lemma}).
If $|X^{\phi}| + w \leq i < |X^{\phi}| + D[v] + w$, also a part of $N(v)$ are added (e.g., $u_2, u_3, u_4$ in Figure~\ref{fig:lemma}).
If $|X^{\phi}| + D[v] + w \leq i$, then $Xv^{\phi}_{ch}(i-w+1, w)$ involves no vertices.

\begin{figure}[t]
  \centering
   \includegraphics[width=\columnwidth]{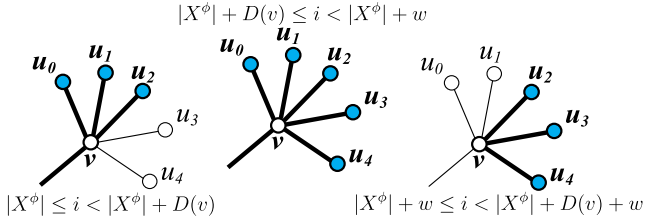}
  \caption{The value of $n(i)$: \# of the additional vertices derived from $N(v) = \{u_0, u_1, ..., u_4\}$. Blue vertices are the additional when $v$ is selected for the expansion.}\label{fig:lemma}
\end{figure}

Therefore,
\begin{eqnarray}
\displaystyle \sum_{\mathclap{i=|X^{\phi}|}}^{\mathclap{|E|-1}} \Delta V(v,i) =&\displaystyle \sum_{\mathclap{i=|X^{\phi}|}}^{\mathclap{|E|-1}} \chi(i)+ \ \ \sum_{\mathclap{i=|X^{\phi}|}}^{\mathclap{|X^{\phi}| + D[v] -1}} \left\{i - |X^{\phi}| \right\} \ + \  \sum_{\mathclap{i = |X^{\phi}| + D[v]}}^{\mathclap{|X^{\phi}| + w - 1}} D[v]  \nonumber \\
&\displaystyle + \sum_{\mathclap{i = |X^{\phi}| + w}}^{\mathclap{|X^{\phi}| + w + D[v] - 1}}\left\{ D[v] - i + |X^{\phi}| + w \right\} \nonumber \\
=& w D[v] + |X^{\phi}| + D[v] - M[v] \nonumber
\end{eqnarray}
Let $\Delta D := D[v] -  D[u]$ and $\Delta M := M[v] -  M[u]$.
\begin{eqnarray}
\displaystyle \sum_{\mathclap{i=|X^{\phi}|}}^{\mathclap{|E|-1}} \left\{\Delta V(v,i) -  \Delta V(u,i) \right\} = w \Delta D +  \Delta D - \Delta M\nonumber \\
\sim w \Delta D - \Delta M \ \ (\because w > \tfrac{|E|}{k_{max}} \gg 1) \nonumber
\end{eqnarray}
Therefore,
\begin{eqnarray*}
&& p(v) > p(u) \\
&\Leftrightarrow& \alpha \cdot D[v] - \beta \cdot M[v] > \alpha \cdot D[u] - \beta \cdot M[u] \\
&\Leftrightarrow& \sum_{k=k_{min}}^{k_{max}} \left(w \cdot D[v] - M[v]\right) > \sum_{k=k_{min}}^{k_{max}} \left(w \cdot D[u] - M[u]\right)\\
&\Leftrightarrow& \sum_{k=k_{min}}^{k_{max}}\left(w \Delta D - \Delta M\right) > 0 \\ 
&\Leftrightarrow& \displaystyle \sum_{k=k_{min}}^{k_{max}} \sum_{i=|X^{\phi}|}^{|E|-1} \left\{\Delta V(v,i) -  \Delta V(u,i) \right\} > 0 \\
&\Rightarrow& F_v > F_u. \ \ \ (\because \eqref{eq:ff})
\end{eqnarray*}
Thus, the lemma is proved. \qed
\end{proof}

\smallskip
Algorithm~\ref{alg:greedy-opt} significantly reduces the time complexity.
\begin{theorem}[Efficiency of Fast Algorithm] \label{theorem:algo2complexity}
Suppose $\mathit{PQ}$ be a standard priority-queue implementation, where dequeue, update or enqueue can be operated in $O(\log n)$. 
Then, time complexity of Algorithm~\ref{alg:greedy-opt} is \\
$O(d_{\mathit{max}}^2 |V| \log|V|)$, where $d_{\mathit{max}}$ is the maximum degree.
\end{theorem}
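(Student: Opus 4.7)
The plan is to bound the cost of Algorithm~\ref{alg:greedy-opt} by a careful accounting of the three nested loops together with the priority-queue operations. First I would observe that the outer \texttt{while} loop (Lines~\ref{alg:greedy-opt:line:loopfrom}--\ref{alg:greedy-opt:line:loopto}) shrinks $V_{\mathit{rest}}$ by at least one vertex per iteration (the selected $v_{\mathit{min}}$ is removed), so it executes at most $|V|$ times. At the start of each iteration we either pick a random vertex from $V_{\mathit{rest}}$ or dequeue from $\mathit{PQ}$; both cost $O(\log |V|)$ since $|\mathit{PQ}| \leq |V|$.

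Next, I would bound the middle \texttt{for} loop (Lines~\ref{alg:greedy-opt:line:innerloopfrom}--\ref{alg:greedy-opt:line:innerloopto}), which iterates over $N(v_{\mathit{min}}) \setminus X^{\phi}$. The number of iterations is at most $|N(v_{\mathit{min}})| \leq d_{\mathit{max}}$. Inside, the enqueue/update call on $u$ costs $O(\log |V|)$. The innermost \texttt{for} loop (Lines~\ref{alg:greedy-opt:line:innermostloopfrom}--\ref{alg:greedy-opt:line:innermostloopto}) iterates over $N(u) \setminus X^{\phi}$, again at most $d_{\mathit{max}}$ times, and each iteration performs at most one $\mathit{PQ}$\texttt{.update} costing $O(\log |V|)$ together with constant-time updates of $D$, $M$, and $X^{\phi}$.

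The subtle point, and what I expect to be the main obstacle, is justifying that the guard $w \in V(X^{\phi}_{\mathit{ch}}(|X^{\phi}| - \delta, \delta))$ can also be evaluated in $O(1)$ amortized time; otherwise a naive scan of the last $\delta$ entries of $X^{\phi}$ would blow up the inner cost by a factor of $\delta$. I would resolve this by maintaining a sliding-window hash set (or a per-vertex counter indexed by a ring-buffer of the $\delta$ most recent edges) that is updated each time a new edge is appended to $X^{\phi}$: one insertion and, once the window is full, one deletion per appended edge, both $O(1)$ expected. Since the total number of edges appended across the whole algorithm is $|E|$, the total maintenance cost is $O(|E|) = O(d_{\mathit{max}} |V|)$, which is dominated by the main term.

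Putting the pieces together, the per-outer-iteration cost is at most $O(d_{\mathit{max}}) \cdot O(d_{\mathit{max}}) \cdot O(\log |V|) = O(d_{\mathit{max}}^{2} \log |V|)$, and multiplying by the $O(|V|)$ outer iterations gives the stated bound $O(d_{\mathit{max}}^{2} |V| \log |V|)$. Finally, I would remark that because every edge is appended to $X^{\phi}$ exactly once across the entire execution, the total work charged to the two inner loops can alternatively be amortized as $O(|E| \cdot d_{\mathit{max}} \log |V|)$, which for graphs with $|E| = O(d_{\mathit{max}} |V|)$ agrees with the bound above and confirms that the analysis is tight up to the priority-queue factor.
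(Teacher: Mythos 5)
Your proof is correct and follows essentially the same route as the paper's: bound the outer loop by $O(|V|)$, the two nested loops by $O(d_{\mathit{max}})$ each, and charge $O(\log|V|)$ per priority-queue operation, yielding $O(d_{\mathit{max}}^2 |V|\log|V|)$. Your additional care about evaluating the membership test $w \in V(X^{\phi}_{\mathit{ch}}(|X^{\phi}|-\delta,\delta))$ in amortized $O(1)$ time via a sliding-window structure addresses a detail the paper's proof silently assumes, but it does not change the overall argument.
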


\begin{proof}
The outermost loop (Lines~\ref{alg:greedy-opt:line:loopfrom}~--~\ref{alg:greedy-opt:line:loopto}) requires $O(|V|)$ for each vertex.
Then, the inner loop (Lines~\ref{alg:greedy-opt:line:innerloopfrom}~--~\ref{alg:greedy-opt:line:innerloopto}) needs $O(d_{\mathit{max}})$ for each neighbors of $v_{\mathit{min}}$.
Finally, the innermost loop (Lines~\ref{alg:greedy-opt:line:innermostloopfrom}~--~\ref{alg:greedy-opt:line:innermostloopto}) requires $O(d_{\mathit{max}})$ for each neighbors and $O(\log |V|)$ for updating $PQ$ at Line~\ref{alg:line:updatepq}.
To sum up, the total time complexity of Algorithm~\ref{alg:greedy-opt} is $O(d_{\mathit{max}}^2 |V| \log|V|)$. \qed
\end{proof}

\section{Theoretical Analysis}\label{sec:upperbound}
In this section, we provide a theoretical analysis for the upper bound of the partitioning quality achieved by our method.
Our analysis focuses on the observation that in typical practical situations, the minimum partition size is much larger than the maximum degree of the graph. In this case, the number of new ordered edges in each iteration of Algorithm~\ref{alg:greedy-opt} becomes smaller than the smallest partition size. Based on this fact, we formulate the following  theorem.

\begin{theorem}[Upper Bound of Partitioning Quality] \label{theorem:upperbound}
Consider that in each iteration of Algorithm~\ref{alg:greedy-opt} (Lines \ref{alg:greedy-opt:line:loopfrom}--\ref{alg:greedy-opt:line:loopto}) the number of new ordered edges is smaller than the smallest partition size and that $\delta = \left\lfloor\frac{|E|}{k_{max}}\right\rfloor - 1$.



Let $E$ be ordered by Algorithm~\ref{alg:greedy-opt} and then partitioned into $k$ parts, $\mathcal{E}_{k}$, via chunk-based partitioning. 
Then, the replication factor, $RF_k$, for the $k$ edge partitions has an upper bound as follows:
\begin{equation*}
RF_k := \sum_{p=0}^{k-1}\frac{|V(\mathcal{E}_{k}[p])|}{|V|} \leq \frac{|V| + |E| + k}{|V|} 
\end{equation*}
\end{theorem}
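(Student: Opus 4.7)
The plan is to reduce the claim to a counting argument on connected components of the partitioned subgraphs. For any edge set $F \subseteq E$, the standard spanning-tree bound gives $|V(F)| \leq |F| + c(F)$, where $c(F)$ is the number of connected components of $(V(F), F)$. Applied per partition and summed, this yields
\begin{equation*}
\sum_{p=0}^{k-1} |V(\mathcal{E}_k[p])| \;\leq\; |E| \;+\; \sum_{p=0}^{k-1} c_p,
\end{equation*}
with $c_p := c(\mathcal{E}_k[p])$. It therefore suffices to prove $\sum_p c_p \leq |V| + k$; dividing the resulting inequality by $|V|$ then delivers the stated bound on $RF_k$.

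Next I would exploit the structure of Algorithm~\ref{alg:greedy-opt}. Each iteration appends a contiguous block of edges anchored at a single pivot $v_{\min}$: first every still-unordered edge $e_{v_{\min}, u}$, and immediately after each such edge a batch of two-hop closing edges $e_{u, w}$ whose endpoint $w$ lies in the recent $\delta$-window. The key structural observation is that the pivot threads through every edge of the block, so the whole block (and indeed every prefix of it starting from $e_{v_{\min}, u_1}$) is connected. Because each iteration consumes a distinct vertex from $V_{\mathit{rest}}$, there are at most $|V|$ iterations overall, and the hypothesis that each iteration's block is smaller than the smallest partition size implies that a block can straddle at most one partition boundary, hence touches at most two partitions.

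From here I would charge components to iterations and boundary crossings. An iteration whose block lies entirely in a single partition contributes at most one component to that partition, by connectivity of the whole block. A straddling iteration contributes a prefix (connected through $v_{\min}$) as one component in the left partition, and at most one \emph{additional} component in the right partition, which I would charge to the unique boundary crossed. Since there are at most $|V|$ iterations and at most $k-1$ partition boundaries, this bookkeeping gives $\sum_p c_p \leq |V| + (k-1) \leq |V| + k$, which combined with the earlier inequality completes the proof.

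The delicate step, and where I expect the main obstacle to lie, is the accounting on the right side of a straddling boundary. When the cut falls inside the batch of closing edges $\{e_{u_a, w_{a,\cdot}}\}$ emitted after $e_{v_{\min}, u_a}$, the fragment $\{e_{u_a, w_{a,j}}, \ldots, e_{u_a, w_{a,k_a}}\}$ carried into the next partition is disconnected \emph{within the current iteration} from the $v_{\min}$-rooted remainder of that iteration. Establishing that this tail fragment either merges with a previously placed component in the destination partition, or can be absorbed into the single boundary charge, is precisely where the choice $\delta = \lfloor |E|/k_{\max}\rfloor - 1$ and the block-size hypothesis must be used in concert: together they guarantee that the $w_{a,\cdot}$ vertices were ordered sufficiently recently to already appear as boundary vertices of the destination partition, closing the gap and yielding the tight $|V| + k$ component bound.
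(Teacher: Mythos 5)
Your overall route is genuinely different from the paper's: you reduce the theorem to the combinatorial bound $\sum_p c_p \leq |V| + k$ via the spanning-tree inequality $|V(F)| \leq |F| + c(F)$, whereas the paper runs an amortized potential-function argument on $\Phi(t) = |V_{rest}| + |E_{rest}| + p_{rest} + \sum_p |V(\mathcal{E}_k(t)[p])|$ and shows $\Delta\Phi \leq 0$ per iteration by a two-case analysis. Your reduction, the connectivity of each block through $v_{\min}$, the bound of $|V|$ on the number of iterations, and the observation that a block straddles at most one boundary are all correct. However, the resolution you sketch for the delicate straddling step does not work, and this is a genuine gap. When the cut falls inside a two-hop batch, the orphaned tail $\{e_{u_a,w_{a,j}},\dots\}$ lands in the \emph{newly opened} partition $\mathcal{E}_k[p_1]$, which at that moment contains no previously placed edges at all — so there is no earlier component in the destination partition for the tail to merge with. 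Moreover, the $\delta = \lfloor |E|/k_{max}\rfloor - 1$ condition points in the opposite direction from what you need: since the $\delta$-window of an edge at position $i+1$ lies entirely inside $\mathcal{E}_k[p_0]$, it guarantees the $w_{a,\cdot}$ already appear in the \emph{source} partition $p_0$, not the destination $p_1$. A straddling iteration can therefore genuinely deposit two components into $p_1$ (the $u_a$-star orphan and the $v_{\min}$-rooted remainder) in addition to one in $p_0$, and your bookkeeping only yields $\sum_p c_p \leq |V| + 2(k-1)$, i.e., $RF_k \leq (|V|+|E|+2k-2)/|V|$ — asymptotically the same but strictly weaker than the stated bound.

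The reason the paper recovers the missing two units per boundary is instructive: it exploits the fact that the last two two-hop edges assigned to $p_0$ before the split have \emph{both} endpoints already present in $V(\mathcal{E}_k[p_0])$ (that is where $\delta$ is actually used), so they close cycles and add no new replicated vertices to the left partition. In your framework this is precisely slack in the spanning-tree inequality — $|V(F)| \leq |F| + c(F) - (\text{independent cycles})$ — which you discard the moment you apply the inequality with equality-style accounting. To repair your proof you would either have to retain a cycle-rank correction term for the source partition at each boundary (essentially re-deriving the paper's Case B computation), or accept the slightly weaker constant $|V| + |E| + 2k - 2$ in the numerator.
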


\begin{proof}
Assume that $k$ is given in advance.
We consider a new partitioning algorithm based on the ordering algorithm (Algorithm~\ref{alg:greedy-opt}) and conversion function $\texttt{ID2P}_{k}(i)$ (Algorithm~\ref{alg:id2partition_app}) as follows: 
\begin{itemize}
\item{Initially, the $k$ edge partitions, $\mathcal{E}_{k}$, are empty.}
\item{Run Algorithm~\ref{alg:greedy-opt} and insert $i$-th ordered edge \\ to $\mathcal{E}_{k}[\texttt{ID2P}_{k}(i)]$.}
\end{itemize}
In the new partitioning algorithm, the edge partitions are incrementally determined from $\mathcal{E}_{k}[0]$, $\mathcal{E}_{k}[1]$, ..., to $\mathcal{E}_{k}[k-1]$. 
Obviously, the partitioning results obtained by the new partitioning algorithm is the same as the ones by our proposed method (i.e., completing Algorithm~\ref{alg:greedy-opt} \emph{before} the chunk-based edge partitioning for $k$.)

Let $t$ be an iteration counter for Lines \ref{alg:greedy-opt:line:loopfrom}--\ref{alg:greedy-opt:line:loopto} of Algorithm~\ref{alg:greedy-opt}, and $\Phi(t)$ be a potential function over $t$ defined as follows:
\begin{equation*}
    \Phi(t) := |V_{rest}(t)| + |E_{rest}(t)| + p_{rest}(t) + \sum_{p=0}^{k-1}|V(\mathcal{E}_{k}(t)[p])|, 
\end{equation*}
where $V_{rest}(t)$ is a set of vertices adjacent to at least one non-ordered edge; $E_{rest}(t)$ is a set of non-ordered edges at $t$; $p_{rest}(t)$ is the number of partitions which still have spaces to insert edges; $\mathcal{E}_{k}(t)$ is a set of edge partitions at $t$.  

Suppose the ordering algorithm terminates at $T$. 
We will show that (a) $\Phi(0) = |V| + |E| + k$, (b) $\Phi(T) = \sum_{p=0}^{k-1} V(\mathcal{E}_{k}[p])$, and (c) $\Phi(0) \geq \Phi(T)$.
The first two equations (a) and (b) are obvious, i.e., 
\begin{align*}
    \Phi(0) &:= |V_{rest}(0)| + |E_{rest}(0)| \\
    &\ \ \ \ \ \ \ \ + p_{rest}(0) + \sum_{p=0}^{k-1}|V(\mathcal{E}_{k}(0)[p])| \\
    &= |V|+|E|+k \\
    \Phi(T) &:= |V_{rest}(T)| + |E_{rest}(T)| \\ 
    &\ \ \ \ \ \ \ \ + p_{rest}(T) + \sum_{p=0}^{k-1}|V(\mathcal{E}_{k}(T)[p])| \\
    &= \sum_{p=0}^{k-1}|V(\mathcal{E}_{k}[p])|
\end{align*}

For (c), we will show $\Phi(t) - \Phi(t-1) \geq 0$ for $0 < t \leq T$.
Let $\Delta\Phi := \Phi(t) - \Phi(t-1)$, $\Delta V_{rest} := |V_{rest}(t)| - |V_{rest}(t-1)|$, $\Delta E_{rest} := |E_{rest}(t)| - |E_{rest}(t-1)|$, $\Delta p_{rest} := p_{rest}(t) - p_{rest}(t-1)$, and $\Delta V(\mathcal{E}) := \sum_{p=0}^{k-1}|V(\mathcal{E}_{k}(t)[p])| - \sum_{p=0}^{k-1}|V(\mathcal{E}_{k}(t-1)[p])|$.

For $t$-th iteration where $v_{min}$ is selected for expansion, we define the number of $v_{min}$'s one-hop neighbor edges which are ordered at $t$ as $n_{one}$ and the number of $v_{min}$'s two-hop neighbor edges which are ordered at $t$ as $n_{two}$.
For example in Figure~\ref{fig:upperbound}, $v_{min}$ is selected and the edges are ordered from $x$ to $x+9$. Here, $n_{one} = 3$ (Edges $x$, $x+5$, $x+9$) and $n_{two} = 7$ (Edges $x+1, x+2, x+3, x+4, x+6, x+7, x+8$).

\begin{figure}[h]
  \centering
   \includegraphics[width=1.0\columnwidth]{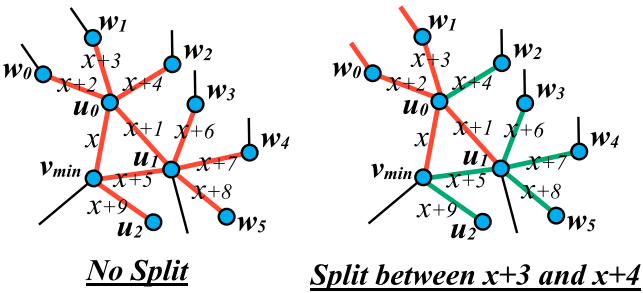}
  \caption{One Iteration for Ordering.}\label{fig:upperbound}
\end{figure}

Then, $\Delta V_{rest} \leq -1$ because all of $v_{min}$'s neighbor edges are assigned at the iteration.
$\Delta E_{rest} = -(n_{one} + n_{two})$ from the definition.

$\Delta p_{rest} = 0$ if all the ordered edges during the iteration are inserted to the same partition as the previous iteration and the partition has still have free space (Case A).
Otherwise (Case B), i.e., if the partitioning set becomes full during the current iteration, then $\Delta p_{rest} = -1$. 
Note that during an iteration of the algorithm, there cannot be more than one partitioning set that becomes full due to the assumption that the number of new assigned edges in each iteration is smaller than the smallest partition size.


For $\Delta V(\mathcal{E})$, we consider Case A and B as well. 
In Case A, $\Delta V(\mathcal{E}) \leq 1 + n_{one} + n_{two}$ because $v_{min}$ may be newly inserted; $n_{one}$ vertices and up to $n_{two}$ vertices may be to the current partition.
For example in Figure~\ref{fig:upperbound}, $u_i (i = 0,1,2)$ are newly inserted as $n_{one}$ and $w_i (i = 0,..,5)$ are as $n_{two}-1 (\leq n_{two})$.

In Case B, $\Delta V(\mathcal{E})$ satisfies the following equation:
\begin{equation}
    \Delta V(\mathcal{E}) \leq 2 + (n_{one} + 2) + (n_{two} - 2) \label{eq:caseb}.
\end{equation}
As explained above in Case B, there cannot be more than one partitioning set that becomes full.
This is translated into having up to two partitions at an iteration.
Thus, let the two partitions be $\mathcal{E}_k[p_0]$ and $\mathcal{E}_k[p_1]$, where each edge is inserted to $\mathcal{E}_k[p_0]$ at first and then $\mathcal{E}_k[p_1]$ after splitting.

The first ``$2$'' in Eq.\eqref{eq:caseb} means that $v_{min}$ may be inserted to up to two partitions ($\mathcal{E}_k[p_0]$ and $\mathcal{E}_k[p_1]$).
The second ``$(n_{one} + 2)$'' in Eq.\eqref{eq:caseb} means that $n_{one}$ vertices are inserted to either $\mathcal{E}_k[p_0]$ or $\mathcal{E}_k[p_1]$. In addition, up to two of $n_{one}$ vertices may be inserted to both $\mathcal{E}_k[p_0]$ and $\mathcal{E}_k[p_1]$. This is because there may be up to two partitions at an iteration.

Let the splitting point be between $i$ and $i+1$. 
The last ``$(n_{two} - 2)$'' in Eq.\eqref{eq:caseb} means that up to $n_{two}$ vertices are inserted to either $\mathcal{E}_k[p_0]$ or $\mathcal{E}_k[p_1]$. But at least the last two edges for $\mathcal{E}_k[p_0]$ (i.e., $i-1$-th and $i$-th edges) never increase the number of duplicated vertices. 
This is because the two-hop vertices adjacent to $i-1$-th and $i$-th edges must belong to $\mathcal{E}_k[p_0]$.
The above is proved as follows.

Let the two-hop vertices adjacent to $i-1$-th and $i$-th edges be $w$ and $w'$; $X^{\phi}$ be the ordered edges up to $i$-th edge.
Note that $w$ and $w'$ must be in $V(X^{\phi}_{\mathit{ch}}(i - \delta, \delta))$ and $V(X^{\phi}_{\mathit{ch}}(i + 1 - \delta, \delta))$ respectively, due to the condition in Line~11 of Algorithm~\ref{alg:greedy-opt}.
Then, according to the assumption that $\delta = \left\lfloor\frac{|E|}{k_{max}}\right\rfloor-1$, $w$ is in
\begin{align*}
&V\left(X^{\phi}_{\mathit{ch}}\left(i - \left(\left\lfloor\frac{|E|}{k_{max}}\right\rfloor- 1\right), \left\lfloor\frac{|E|}{k_{max}}\right\rfloor\right)\right) \\
&=V\left(X^{\phi}_{\mathit{ch}}\left(i + 1 - \left\lfloor\frac{|E|}{k_{max}}\right\rfloor, \left\lfloor\frac{|E|}{k_{max}}\right\rfloor\right)\right) \\
&\subseteq V\left(X^{\phi}_{\mathit{ch}}\left(i + 1 -\left\lfloor\frac{|E| + p_0}{k_{max}}\right\rfloor, \left\lfloor\frac{|E|+p_0}{k_{max}}\right\rfloor\right)\right) \\ 
&\subseteq V\left(X^{\phi}_{\mathit{ch}}\left(i + 1 - \left\lfloor\frac{|E|+p_0}{k}\right\rfloor, \left\lfloor\frac{|E|+p_0}{k}\right\rfloor\right)\right)\\
&=V(\mathcal{E}_k[p_0])
\end{align*}
Similarly, $w'$ is also in $V(\mathcal{E}_k[p_0])$.


For example in Figure~\ref{fig:upperbound}, the ordered edges are split between $x+3$ and $x+4$ ($\mathcal{E}_k[p_0]$ is red. $\mathcal{E}_k[p_1]$ is green).
In this case, $v_{min}$ is inserted to both $\mathcal{E}_k[p_0]$ and $\mathcal{E}_k[p_1]$. 
$u_0$ and $u_1$ are also inserted to both, but $u_2$ is inserted only to $\mathcal{E}_k[p_1]$.
$w_0$--$w_5$ are inserted to either $\mathcal{E}_k[p_0]$ or $\mathcal{E}_k[p_1]$, but $w_0$ and $w_1$ are not duplicated because they include edges which have already been assigned to $\mathcal{E}_k[p_0]$.
Thus, in total, $\Delta V(\mathcal{E}) = 2+2+2+1+4 = 11$, which is smaller than $2 + (n_{one} + 2) + (n_{two} - 2) = 12$.

Then, to summarize the above discussion, $\Delta\Phi$ can be calculated as follows:
\begin{align*}
    &\Delta\Phi = \Delta V_{rest} + \Delta E_{rest} + \Delta p_{rest} + \Delta V(\mathcal{E}) \\
               &\leq \begin{cases}\scriptstyle -1 - (n_{one} + n_{two}) + 0 + 1 + n_{one} + n_{two} \ (\text{Case A})\\ 
                                  \scriptstyle - 1 - (n_{one} + n_{two}) - 1 + 2 + (n_{one}+2) + (n_{two}-2) \ (\text{Case B})
                     \end{cases}
               \\
               &= 0
\end{align*}
Therefore, $\Delta\Phi \leq 0$ and thus (c) hold.

Based on (a), (b), (c), we establish the following equation:
\begin{align*}
 RF_k &:= \sum_{p=0}^{k-1}\frac{|V(\mathcal{E}_{k}[p])|}{|V|} = \frac{\Phi(T)}{|V|} \leq \frac{\Phi(0)}{|V|} = \frac{|V|+|E|+k}{|V|}
\end{align*}
Finally, the theorem is proved. \qed
\end{proof}

\smallskip
\noindent\textbf{\textit{Comparison to Existing Upper Bounds.}}
We compare our upper bound to the existing edge partitioning methods.
Our method provides an upper bound for general graphs, but most of the existing methods provide an upper bound only for power-law graphs. 
Thus, we apply our upper bound to the power-law graph. 

By using Clauset's power-law model~\cite{clauset2009power}, we model a graph, $G_{\zeta}(V,E)$, satisfying the following condition:
\begin{equation}
    Pr[d] = d^{-\alpha} \cdot \zeta(d,d_{min})^{-1}, \label{eq:degree}
\end{equation}
where $Pr[d]$ is the probability that vertex's degree becomes $d$; $\alpha$ is the scaling
parameter (typically, $2 < \alpha < 3$ for real-world graphs); $\zeta(d,d_{min})$ is the generalized/Hurwitz zeta function; and $d_{min}$ is the minimum degree.
We assume that $d_{min} = 1$ (then, $\zeta(d,d_{min})$ becomes Riemann zeta function) and that for any $k$, $|V| \gg k$ s.t. $k / |V| \approx 0$.

The expected value of the upper bound for $G_{\zeta}$ is formalized as follows:
\begin{align*}
    &\mathbb{E}\left[\frac{|V| + |E| + k}{|V|}\right] \approx 1+\mathbb{E}\left[\frac{|E|}{|V|}\right] \\ 
    &= 1+\frac{(G_{\zeta}\text{'s mean degree})}{2} \\
    &= 1+\frac{(\text{Mean of Eq.\eqref{eq:degree}}, d_{min} = 1)}{2} \\ 
    &= 1+\frac{1}{2}\cdot\frac{\zeta(\alpha - 1, 1)}{\zeta(\alpha, 1)}, 
\end{align*}
where the last equation is given by the mean value of the zeta distribution.

Table~\ref{tbl:upperbound} shows the calculation results under various $\alpha$ when $k = 256$.
We calculate the existing upper bounds based on~\cite{NIPS2014_5396,Petroni:2015:HSP:2806416.2806424,Zhang:2017:GEP:3097983.3098033,dynamicscaling}.
\textit{NE}~\cite{Zhang:2017:GEP:3097983.3098033} provides the best quality.
Our method is the second best and its score is very similar to \textit{NE}. 
The quality gap between the top two methods and the other ones is significant especially when $\alpha$ is small (i.e., a graph is more skewed).
Also, the small quality difference between our method and \textit{NE} is due to the fact that
our method can be applied with arbitrary $k$ values while \textit{NE} is restricted with the fixed $k$.
Such trends also appear in the empirical result in Sec.~\ref{sec:evaluation}. 

\begin{table}[h]
\centering
\caption{Theoretical Upper Bound of Replication Factor in Power-law Graph ($256$ Partition, $|V| = 10^6$).}
\label{tbl:upperbound}
\scalebox{1.0}{
\begin{tabular}{|l|r|r|r|r|}
\hline
Partitioner & $\alpha = $ 2.2 & 2.4 & 2.6 & 2.8 \\ \hlinewd{2\arrayrulewidth}
\textit{Random} (1D-hash)         & 5.88 & 3.46 & 2.64 & 2.23 \\
\textit{Grid} (2D-hash) & 4.82    & 3.13 & 2.47 & 2.13 \\
\textit{DBH}~\cite{NIPS2014_5396} & 5.59 & 3.21 & 2.43 & 2.05 \\
\textit{HDRF}~\cite{Petroni:2015:HSP:2806416.2806424} & 5.36 & 4.23 & 3.61 & 3.24 \\
\textit{NE}~\cite{Zhang:2017:GEP:3097983.3098033} & \textbf{2.81} & \textbf{1.68} & \textbf{1.31} & \textbf{1.13} \\
\textit{BVC}~\cite{dynamicscaling}  & 11.10 & 6.39 & 4.85 & 4.10 \\
\textit{Proposed Method}  & \textit{2.88} & \textit{2.12} & \textit{1.88} & \textit{1.75} \\ \hline
\end{tabular}
}
\end{table}

\section{Evaluation} \label{sec:evaluation}
In this section, we provide a comprehensive evaluation of our two proposed methods: the \emph{graph edge ordering} and the \emph{chunk-based edge partitioning}.  

Our main results are summarized as follows:

\smallskip
\noindent\textit{\textbf{Highest Scaling Efficiency.}}
The chunk-based edge partitioning is significantly faster than the existing methods.
Even compared to the very efficient simple hashing method, the chunk-based edge partitioning is over three orders of magnitude faster.

\smallskip
\noindent\textit{\textbf{Similar Partitioning Quality to the Best Graph Partitioning Method.}}
The quality of edge partitions that the graph edge ordering and the chunk-based edge partitioning generate is comparable to the high-quality but time-consuming $k$-way graph partitioning methods.
The quality-loss due to variable $k$ in our method is small.

\smallskip
\noindent\textit{\textbf{Highest Partitioning Quality Compared to Graph Ordering Methods.}} 
Among the existing ordering methods, the graph edge ordering delivers the best improvement to the partitioning quality of the chunk-based edge partitioning.

\smallskip
\noindent\textit{\textbf{Acceptable Preprocessing Time.}} 
Due to the efficient greedy algorithm, the elapsed time for the graph edge ordering is similar to the other ordering methods.
It can order billion-edge graphs within an acceptable time.

\smallskip
\noindent\textit{\textbf{Performance Improvement in Distributed Graph Applications.}} 
Edge partitions generated by the chunk-based edge partitioning and the graph edge ordering highly improve the performance of the distributed graph applications (SSSP, WCC, and PageRank) due to the large reduction of communication volumes.

Moreover, the evaluation result for the migration cost and the scalability of our proposed algorithm are shown as additional experiments. 

\subsection{Experimental Frame}

\noindent\textit{\textbf{Graph Data Sets.}}
We use various types of real-world large graphs (over 1 million vertices) provided by SNAP~\cite{snapnets} and KONNECT~\cite{KONECT} as summarized in Table~\ref{tbl:real_world}.
\texttt{Road-CA} is the road network of California.
\texttt{Skitter} is the internet topology graph of autonomous systems.
\texttt{Patents} is the citation network.
\texttt{Pokec}, \texttt{Flickr}, \texttt{LiveJ.}, \texttt{Orkut}, \texttt{Twitter}, and \texttt{FriendS.} are online social networks in each service.
\texttt{Road-CA} is a non-skewed graph, whereas the others are skewed graphs, namely, they have skewed degree distribution.

\begin{table}[t]
\centering
\caption{Datasets (M = Million, B = Billion)}
\label{tbl:real_world}
\scalebox{1.0}{
\begin{tabular}{|l|r|r|r|}
\hline
\textbf{Dataset}                & $|V|$ & $|E|$ & \textbf{Type} \\ \hlinewd{4\arrayrulewidth}
\texttt{Road-CA} \cite{leskovec2009community} & 1.96 M &  2.76 M & Traffic\\
\texttt{Skitter} \cite{leskovec2005graphs}  & 1.70 M & 11.09 M & Internet \\ 
\texttt{Patents} \cite{leskovec2005graphs}    & 3.77 M & 16.51 M & Citation\\ 
\texttt{Pokec} \cite{takac2012data}           & 1.63 M & 30.62 M & Social Net.\\ 
\texttt{Flickr} \cite{Mislove:2008:GFS:1397735.1397742}  & 2.30 M & 33.14 M & Social Net.\\
\texttt{LiveJ.} \cite{Backstrom:2006:GFL:1150402.1150412}& 4.8 M & 68 M & Social Net.\\ 
\texttt{Orkut} \cite{6413740}                 & 3.1 M & 117 M & Social Net.\\ 
\texttt{Twitter} \cite{Kwak:2010:TSN:1772690.1772751}  & 41.6 M     & 1.46 B & Social Net.\\
\texttt{FriendS.} \cite{yang2012defining} & 65.6 M & 1.80 B & Social Net.\\ \hline
\end{tabular}
}
\end{table}

\begin{table}[t]
\centering
\caption{Graph Partitioning Methods.}
\label{tbl:palgorithms}
\scalebox{1.0}{
 \begin{tabular}{|l|l|l|l|}
\hline
\textbf{Method}       & \textbf{Part by} & \textbf{Description}              \\ \hlinewd{4\arrayrulewidth}
\textit{BVC}~\cite{dynamicscaling} & Edge & State-of-the-art dynamic scaling \\ \hline
\textit{NE}~\cite{Zhang:2017:GEP:3097983.3098033} & Edge & Highest-quality offline method \\ \hline
\textit{DBH}~\cite{NIPS2014_5396} & Edge & Degree-based hashing method \\ \hline
\textit{HDRF}~\cite{Petroni:2015:HSP:2806416.2806424} & Edge & High-Degree Replicated First \\ \hline
\textit{1D}/\textit{2D}          & Edge & 1D / 2D random hash \\ \hline
\textit{MTS}~\cite{Karypis:1998:FHQ:305219.305248} & Vertex & \textit{METIS} \\ \hline
\textit{CVP}~\cite{zhu2016gemini} & Vertex & Chunk-based vertex partitioning \\ \hline
\textbf{\textit{CEP}} & \textbf{\textit{Edge}} & \textbf{\textit{Chunk-based edge partitioning}} \\\hline
\end{tabular}
}
\end{table}

\begin{table}[t]
\centering
\caption{Graph Ordering Methods}
\label{tbl:oalgorithms}
\scalebox{1.0}{
 \begin{tabular}{|l|l|l|l|}
\hline
\textbf{Method}       & \textbf{Order by} & \textbf{Description} \\ \hlinewd{4\arrayrulewidth}
\textit{GO}~\cite{wei2016speedup} & Vertex & Optimized to L1-cache \\ \hline
\textit{RO}~\cite{arai2016rabbit} & Vertex & \textit{RabbitOrder} \\ \hline
\textit{RGB}~\cite{dhulipala2016compressing} & Vertex & \textit{Recursive Graph Bisection} \\ \hline
\textit{LLP}~\cite{boldi2011layered} & Vertex & \textit{Layered Label Propagation} \\ \hline
\textit{RCM}~\cite{Cuthill:1969:RBS:800195.805928} & Vertex & \textit{Reverse Cuthill--McKee} \\ \hline
\textit{DEG}      & Vertex & Simple degree sorting \\ \hline
\textit{DEF}          & Vertex & Default ordering \\ \hline
\textbf{\textit{GEO}} & \textbf{\textit{Edge}} & \textbf{\textit{Proposed Greedy Algorithm}}\\ \hline
\end{tabular}%
}%
\end{table}


\noindent\textit{\textbf{Comparing Methods.}}
We compare our two methods with 15 existing methods, as shown in Table~\ref{tbl:palgorithms} and Table~\ref{tbl:oalgorithms}.
We refer to the chunk-based edge partitioning as \textit{CEP}, and to the efficient greedy algorithm (i.e., Algorithm~\ref{alg:greedy-opt}) for the graph edge ordering as \textit{GEO}.
We classify all the methods into five categories: dynamic scaling, edge/vertex partitioning, and edge/vertex ordering.

Table~\ref{tbl:palgorithms} shows algorithms of the dynamic scaling and graph partitioning. \textit{BVC} (precisely, \textit{BVC}$^{+/-}$) is the state-of-the-art dynamic scaling method for graph partitions based on consistent hashing.
\textit{NE} is the latest offline edge partitioning method, which basically provides the best-quality edge partitioning in practice.
\textit{DBH} is the degree-based-hashing edge partitioning.
\textit{HDRF} is High-Degree Replicated First streaming edge partitioning.
\textit{1D} and \textit{2D} are simple hash-based edge partitioning.
In \textit{1D}, each edge is randomly assigned to 1D integer partitioning id space (0,1,2,...) by hashing its edge id. In \textit{2D}, each edge is randomly assigned to 2D partitioning id space ($<0,0>$, $<0,1>$,.., $<1,0>$, $<1,1>$, ...) by separately hashing its source id and destination id. The hash value of the source id determines the first dimension while that of the destination id does the second dimension.
\
\textit{MTS} (\textit{METIS}) is the high-quality offline vertex partitioning method.
\textit{CVP} is the chunk-based vertex partitioning, where the ordered vertex is simply divided into the same size of vertex chunks.
Table~\ref{tbl:oalgorithms} shows algorithms of graph ordering.
\textit{GO} and \textit{RO} are vertex id ordering methods for maximizing CPU-cache utilization. 
\textit{RGB} and \textit{LLP} are for graph compression. 
\textit{DEG} is simple degree sorting. 
\textit{DEF} is default ordering.

\noindent\textit{\textbf{Computational Infrastructure.}}
We use Ubuntu server (ver. 18.04) with dual sockets of Intel Xeon CPU E5-2697 v4 (18 cores per socket, 2.30GHz) and 500GB RAM.
All the programs except for \textit{LLP} and \textit{HDRF} are written in C/C++, which we compile via GCC 7.4.0 with --O3 optimization flag. 
For \textit{LLP} and \textit{HDRF}, we use OpenJDK version 11.0.4 on 400GB JVM memory.
Although some of the algorithms, such as \textit{RO} or \textit{BVC}, support the parallel execution, all the programs are run on a single core for a fair comparison.
The parallelization of our methods, especially \textit{GEO}, is an interesting problem but out of scope in this paper. 
We list it as our future work in Sec.~\ref{sec:conclusion}.

\noindent\textit{\textbf{Parameters.}} According to the existing experimental studies on distributed graph processing systems and graph partitioning~\cite{Han:2014:ECP:2732977.2732980,6877273,Verma:2017:ECP:3055540.3055543,abbas2018streaming,Gill:2018:SPP:3297753.3316427,Pacaci:2019:EAS:3299869.3300076}, the number of distributed processes (i.e., partitions) for graph applications usually ranges from less than 10 to around one hundred.
Thus, in the evaluation, we change the number of partitions, $k$, from 4 to 128.
For \textit{GEO}, $k_{min}$ and $k_{max}$, as defined in Def.~\ref{def:ordering1} of Sec.~\ref{sec:def}, are $4$ and $128$, respectively.

\subsection{Comparison with Graph Partitioning}
We compare our methods to the existing graph partitioning methods and dynamic scaling methods, as shown in Table~\ref{tbl:palgorithms}.
Note that for 128 partitions, \textit{NE} and \textit{MST} cannot correctly execute \texttt{FriendSter}, nor can \textit{NE} do \texttt{Twitter}. 
For \textit{BVC}, we run the algorithm as $k$ is $4 \rightarrow 8 \rightarrow 16 \rightarrow 32 \rightarrow 64 \rightarrow 128$, and we set the balance factor $\epsilon = 0.001$ as defined in Def.~\ref{def:edgepartitioning}.

\noindent\textit{\textbf{Scaling Efficiency.}}
Efficiency is measured by the elapsed time.
Figure~\ref{fig:performance} shows the elapsed time for each method.
In \textit{BVC}, we measure the repartitioning time from the previous partitions (e.g., for $k=8$, time from $k=4$ to $k=8$ is used.).
We ignore the initialization phase, such as, data loading and graph construction, and the graph data migration phase.

As expected, \textit{CEP} is significantly faster than the others due to its $\mathcal{O}(1)$ time complexity.
It is over 1,000 times faster than the other methods for all the data sets.
Also, the performance of \textit{CEP} is not changed with the increase of the graph size, which is a consistent result to Theorem~\ref{thr:cep}.
In the other existing methods, even for the very simple partitioning, such as \textit{1D/2D} or \textit{CVP}, each edge needs to be processed one-by-one, resulting in that the elapsed time increases proportionally to the graph size.

\begin{figure*}[t]
 \centering
  \subfigure[\texttt{Road-CA}]{\includegraphics[width=.25\textwidth]{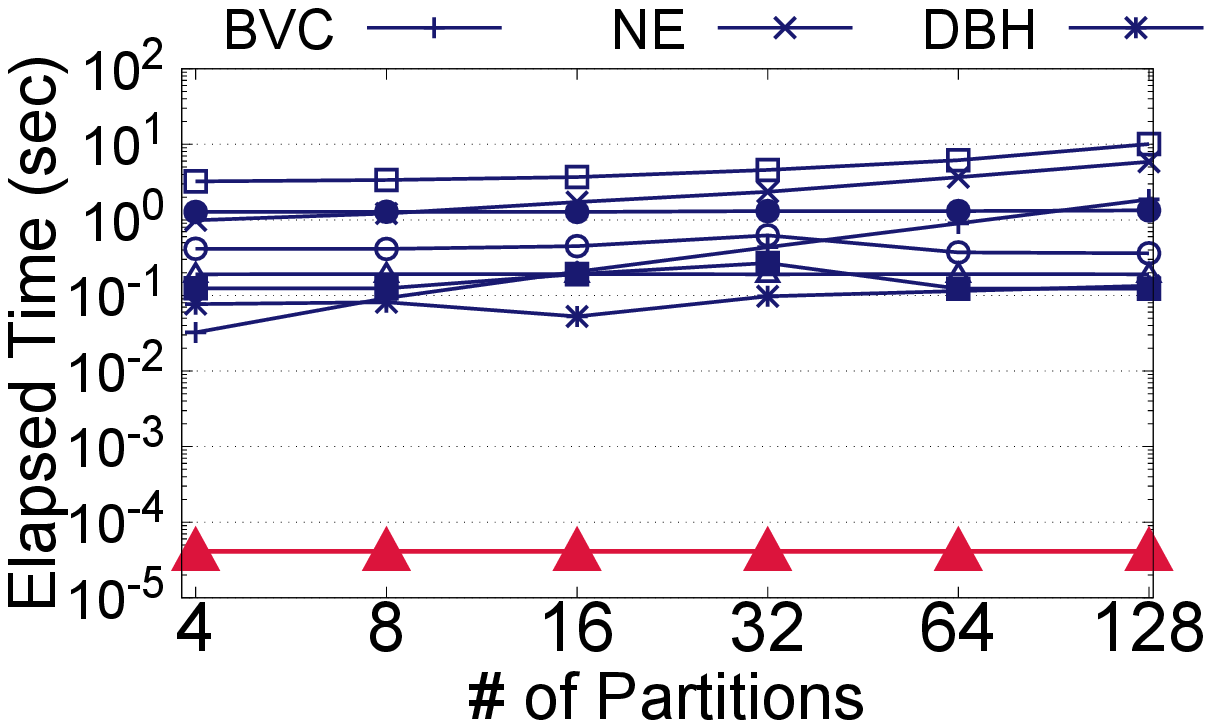}\label{fig:performance-pokec}}%
  \subfigure[\texttt{Skitter}]{\includegraphics[width=.25\textwidth]{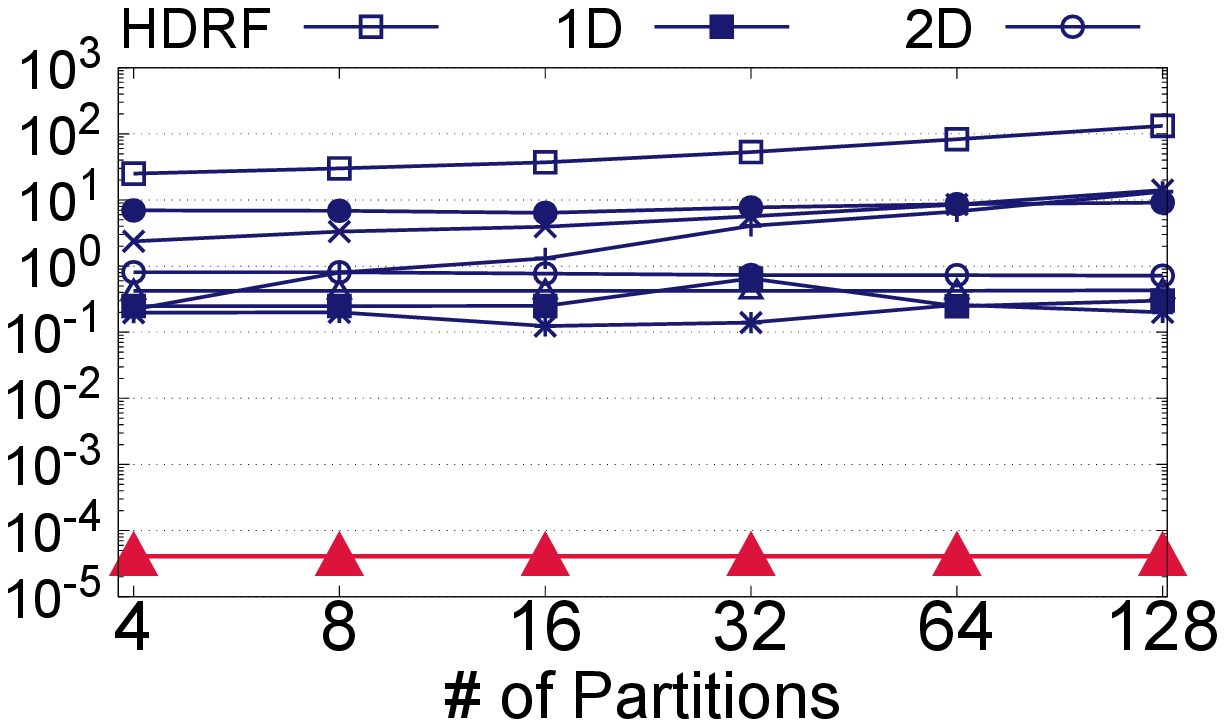}\label{fig:performance-pokec}}%
  \subfigure[\texttt{Patent}]{\includegraphics[width=.25\textwidth]{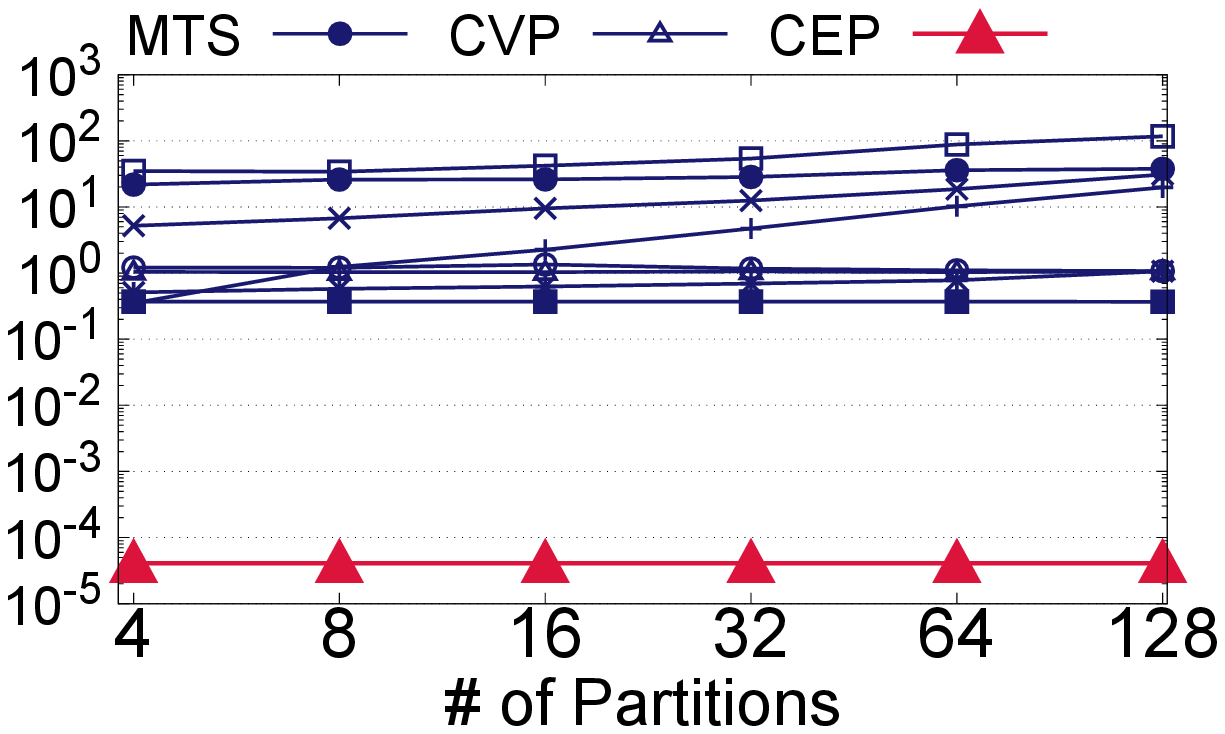}\label{fig:performance-pokec}}%
  \subfigure[\texttt{Pokec}]{\includegraphics[width=.25\textwidth]{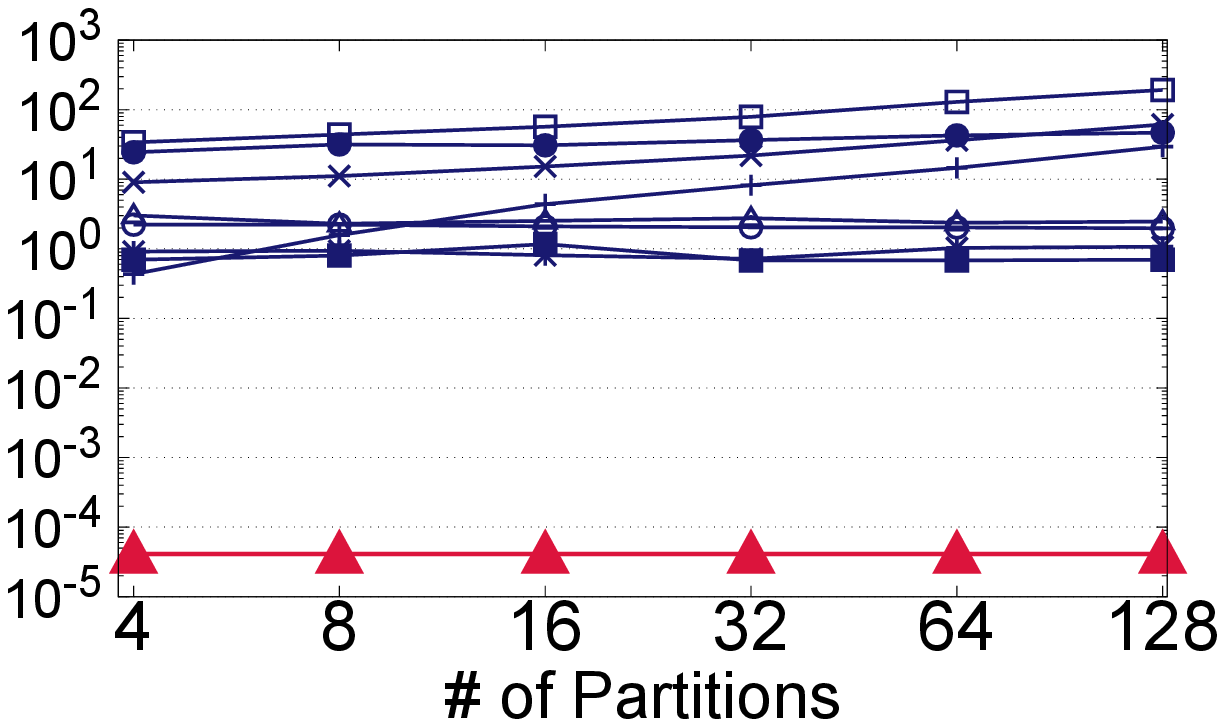}\label{fig:performance-pokec}} \\
 \subfigure[\texttt{Flickr}]{\includegraphics[width=.20\textwidth]{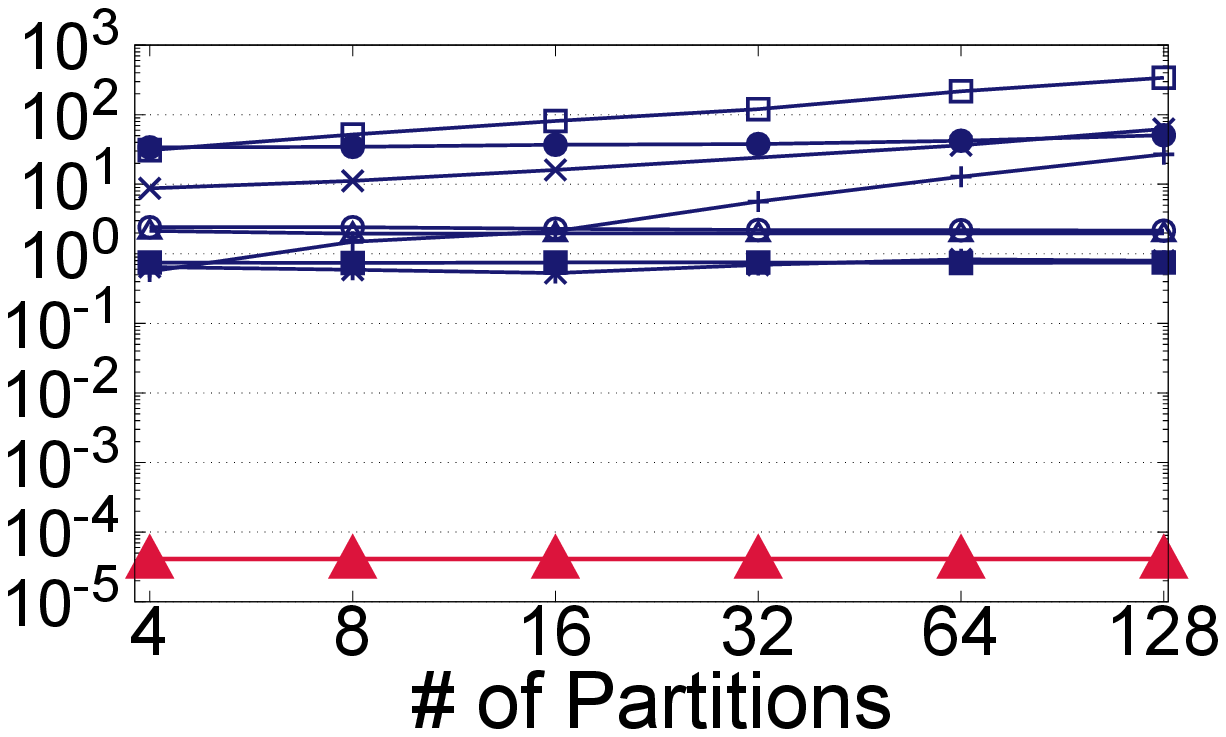}\label{fig:performance-pokec}}%
  \subfigure[\texttt{LiveJournal}]{\includegraphics[width=.20\textwidth]{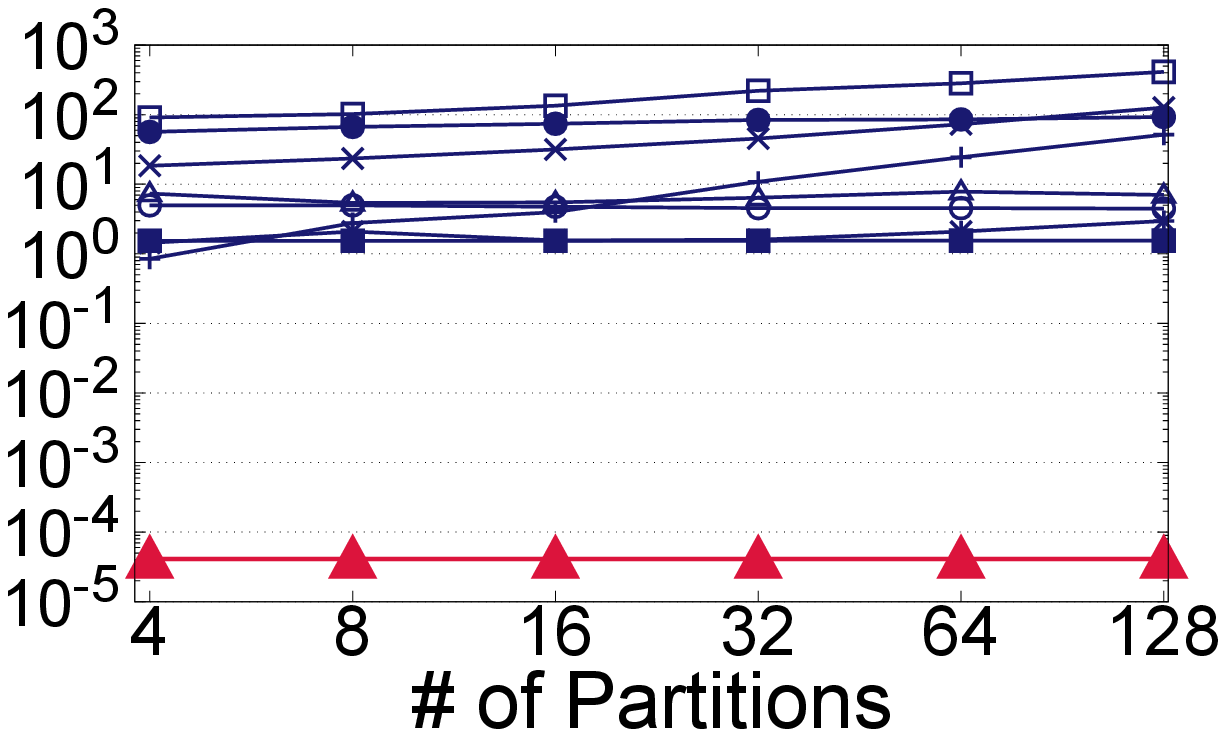}\label{fig:performance-pokec}}%
  \subfigure[\texttt{Orkut}]{\includegraphics[width=.20\textwidth]{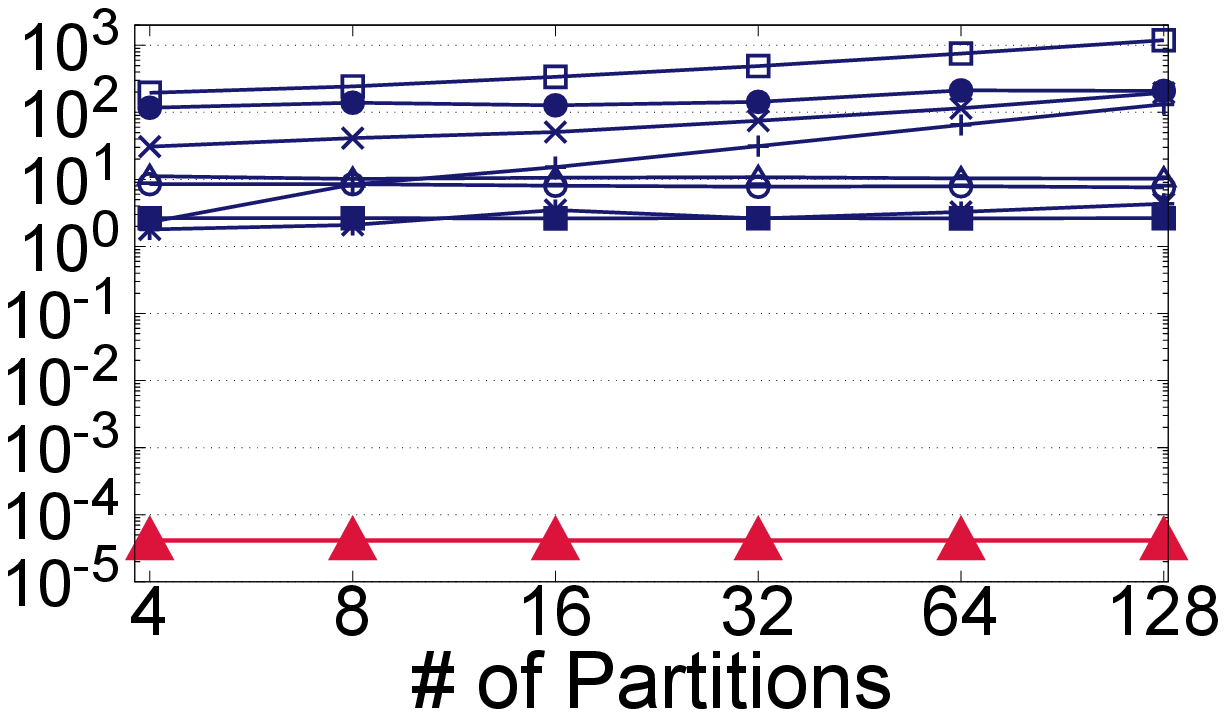}\label{fig:performance-pokec}}%
  \subfigure[\texttt{Twitter}]{\includegraphics[width=.20\textwidth]{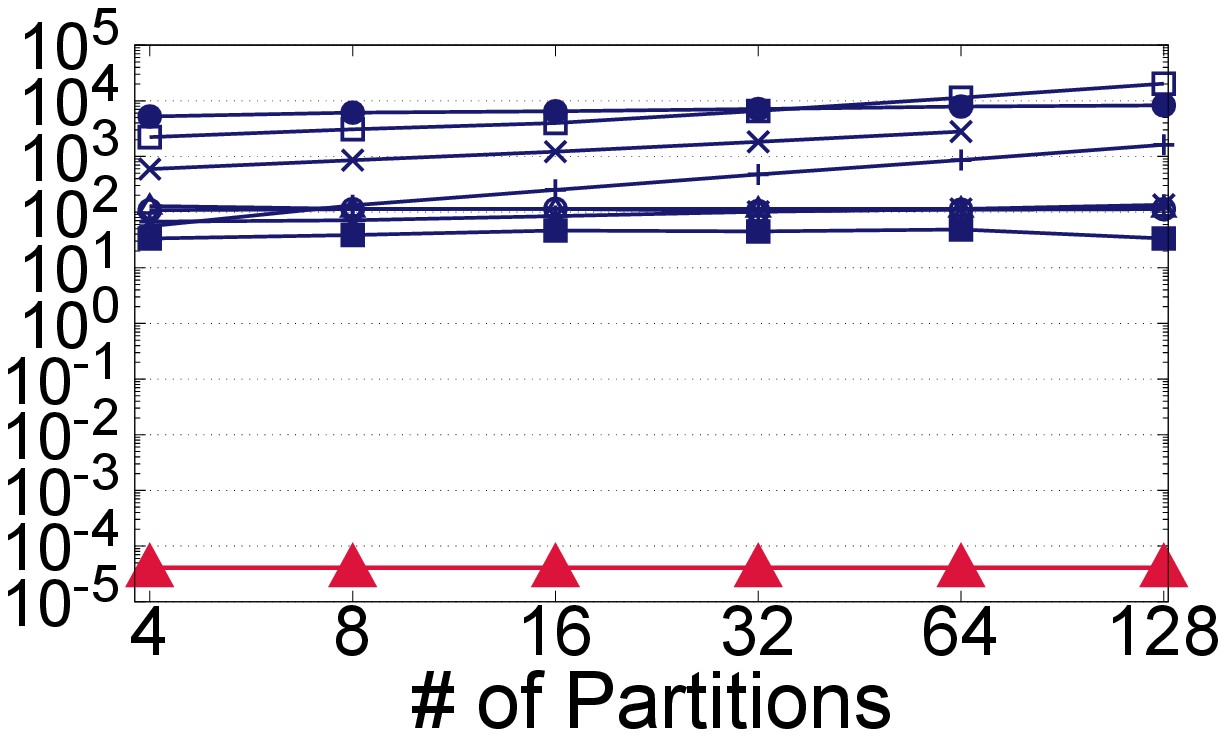}\label{fig:performance-pokec}}%
  \subfigure[\texttt{FriendSter}]{\includegraphics[width=.20\textwidth]{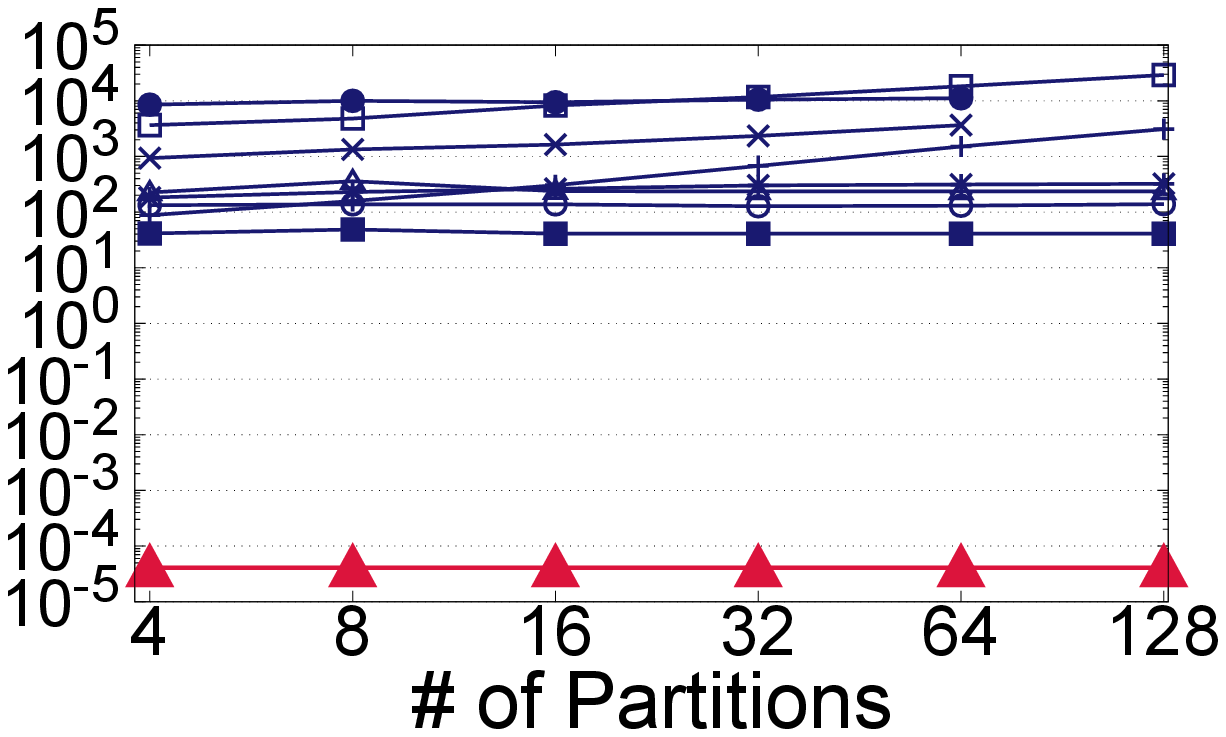}\label{fig:performance-pokec}}%
 \caption{Elapsed Time for Graph Partitioning.}\label{fig:performance}
\end{figure*}

\begin{figure*}[t]
 \centering
  \subfigure[\texttt{Road-CA}]{\includegraphics[width=.25\textwidth]{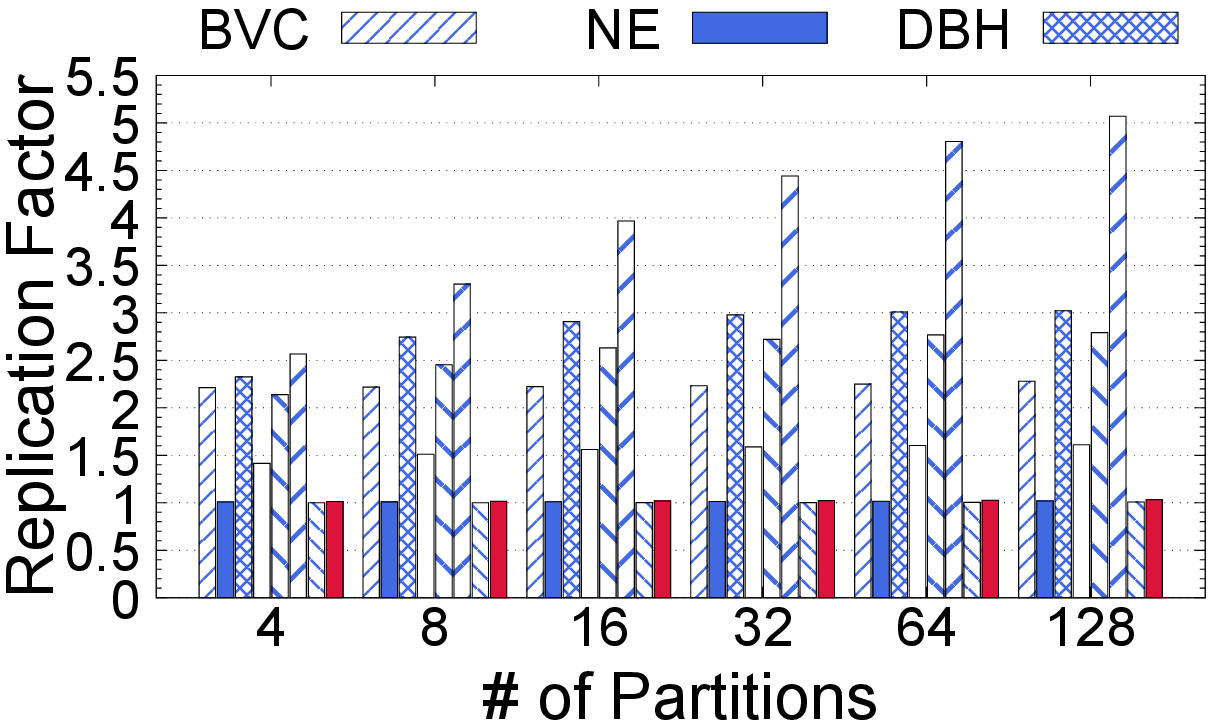}\label{fig:performance-pokec}}%
  \subfigure[\texttt{Skitter}]{\includegraphics[width=.25\textwidth]{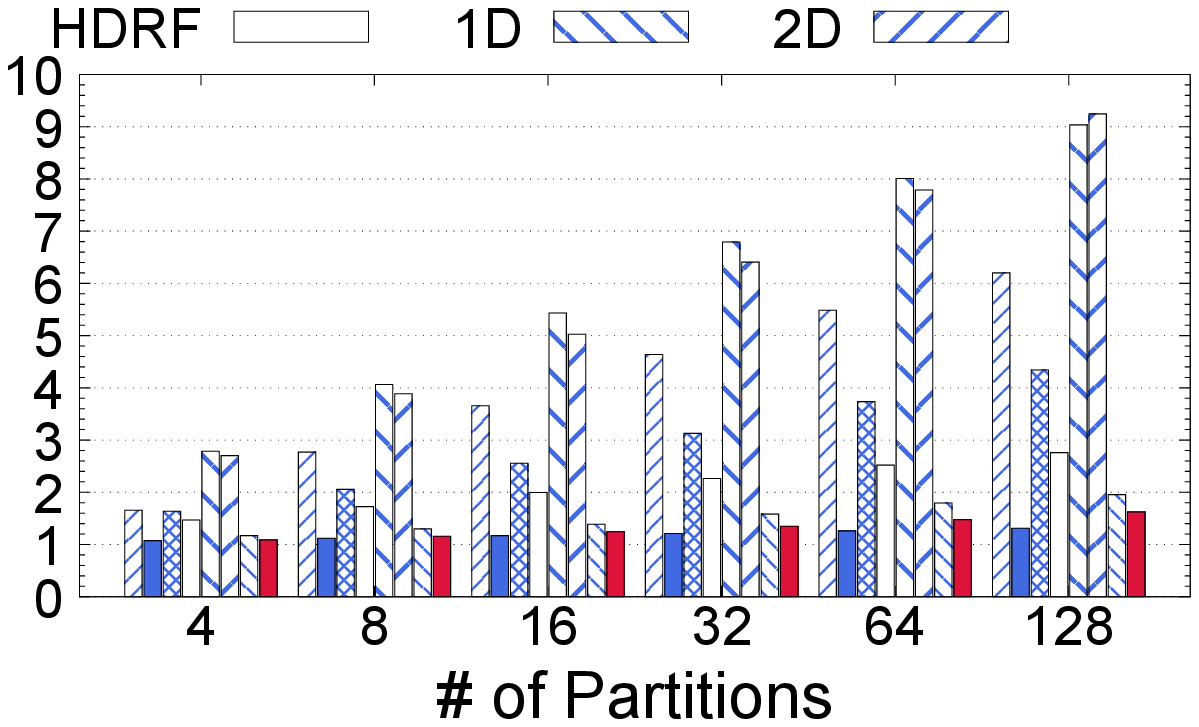}\label{fig:performance-pokec}}%
  \subfigure[\texttt{Patent}]{\includegraphics[width=.25\textwidth]{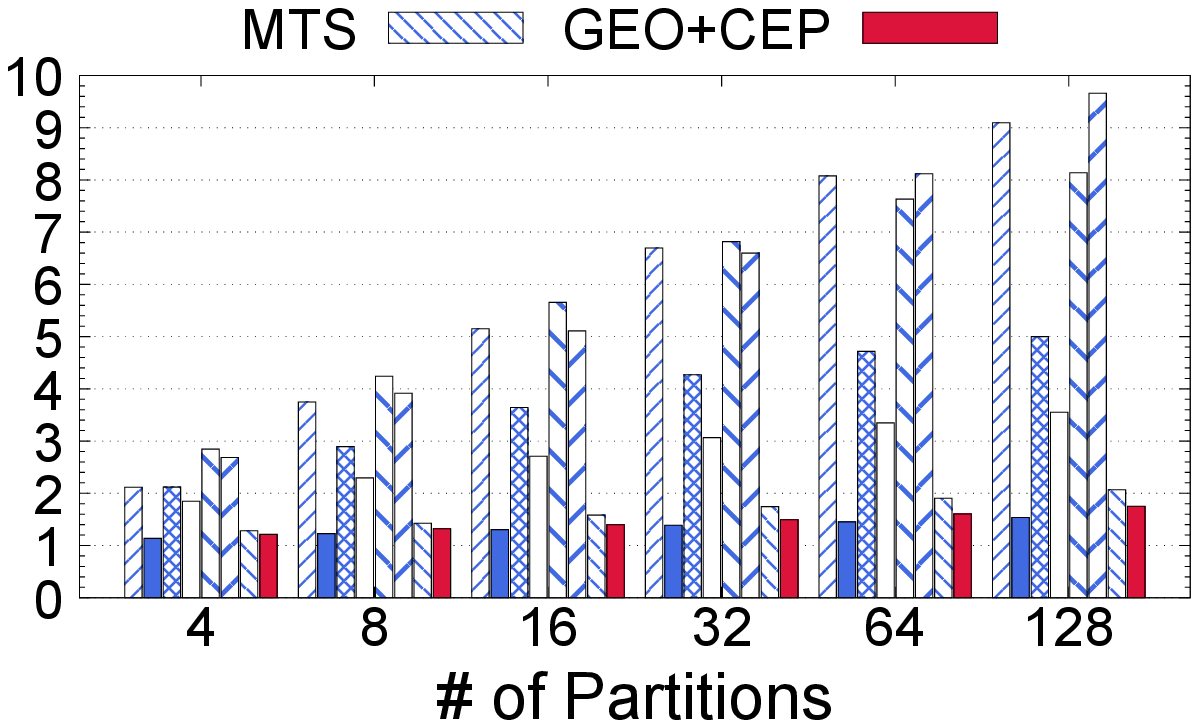}\label{fig:performance-pokec}}%
 \subfigure[\texttt{Pokec}]{\includegraphics[width=.25\textwidth]{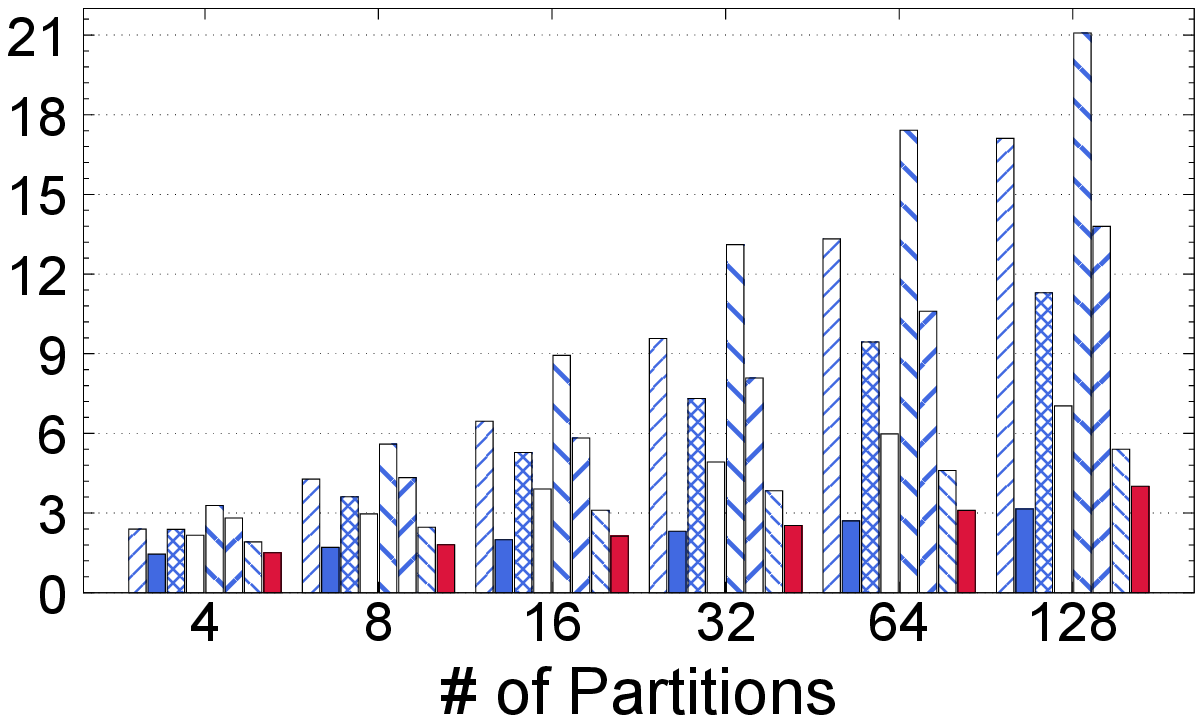}\label{fig:performance-pokec}} \\
  \subfigure[\texttt{Flickr}]{\includegraphics[width=.20\textwidth]{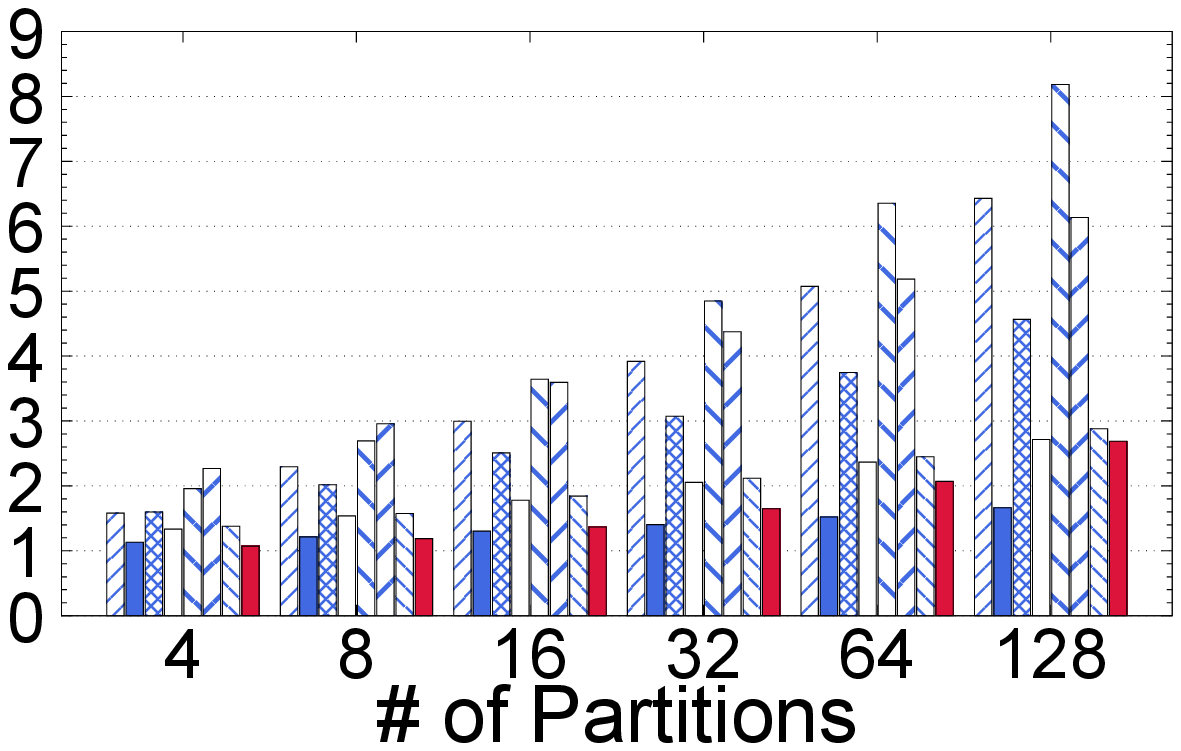}\label{fig:performance-pokec}}%
  \subfigure[\texttt{LiveJournal}]{\includegraphics[width=.20\textwidth]{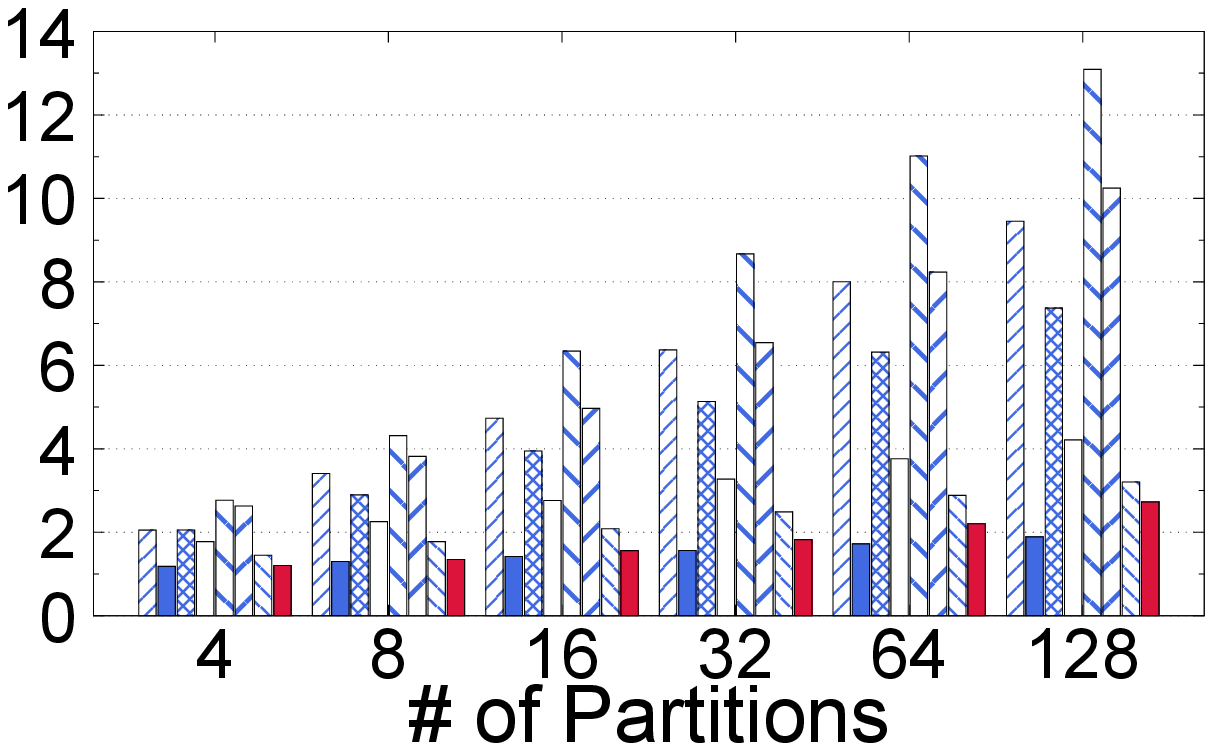}\label{fig:performance-pokec}}%
  \subfigure[\texttt{Orkut}]{\includegraphics[width=.20\textwidth]{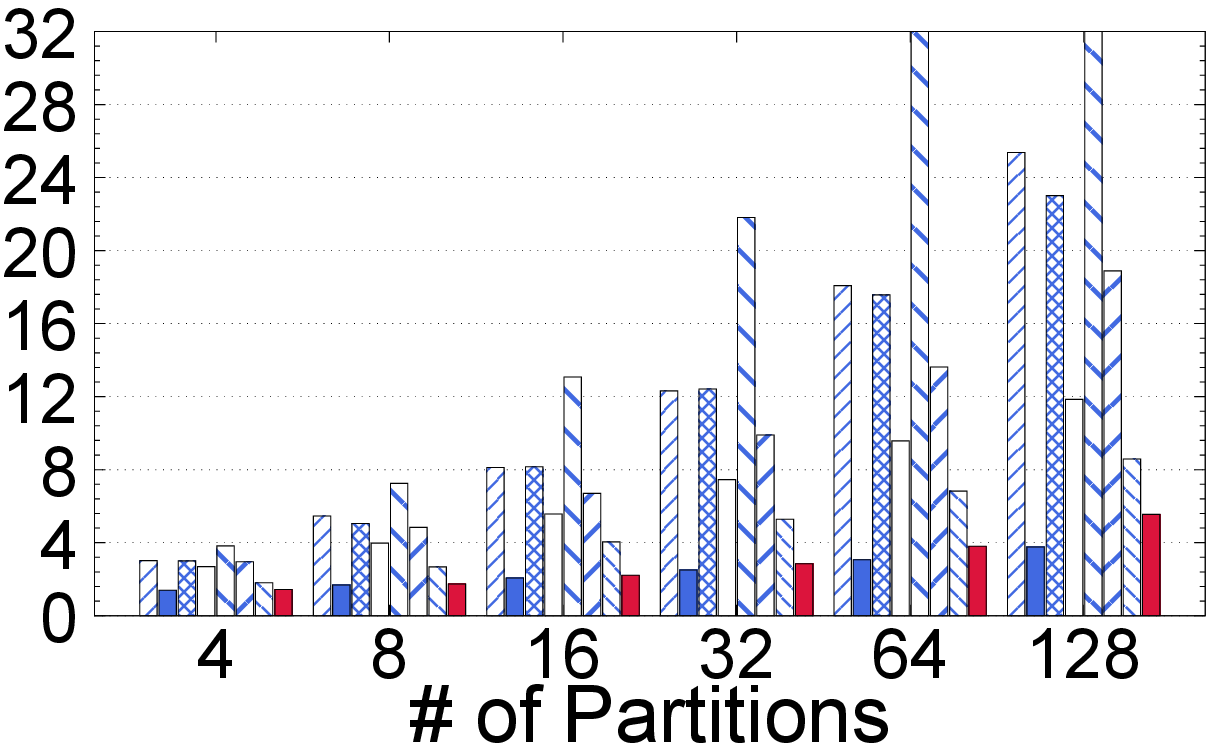}\label{fig:performance-pokec}}%
  \subfigure[\texttt{Twitter}]{\includegraphics[width=.20\textwidth]{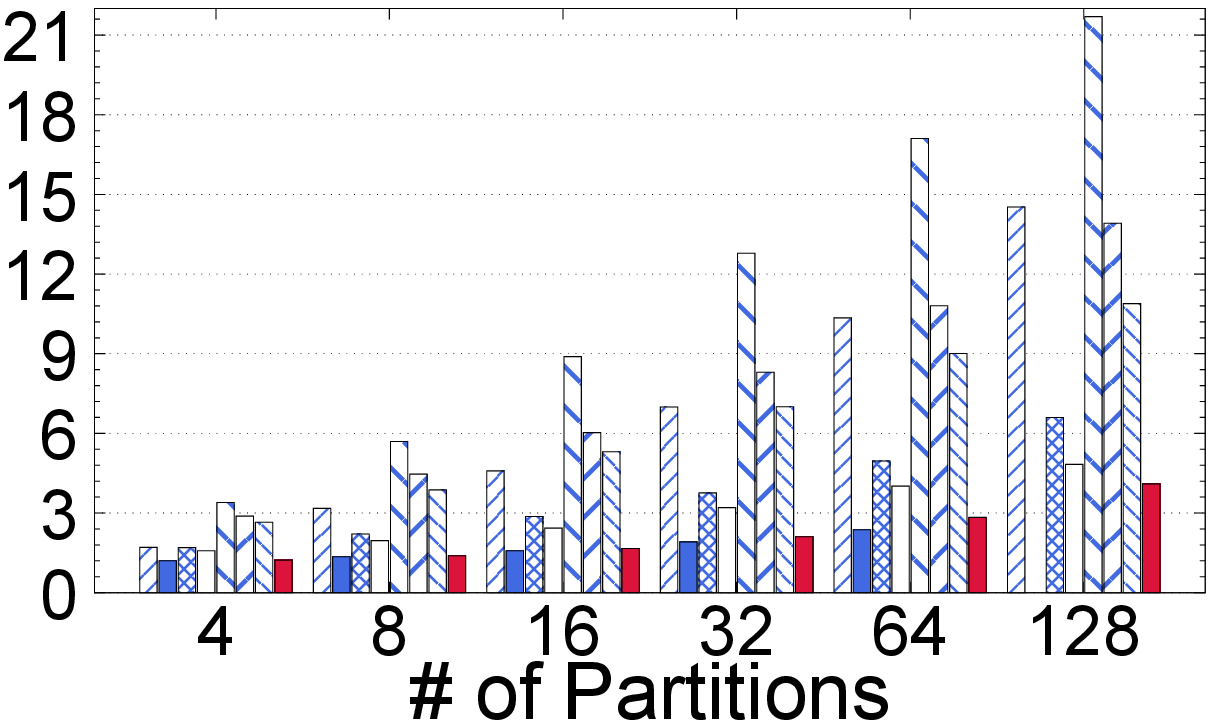}\label{fig:performance-pokec}}%
  \subfigure[\texttt{FrindSter}]{\includegraphics[width=.20\textwidth]{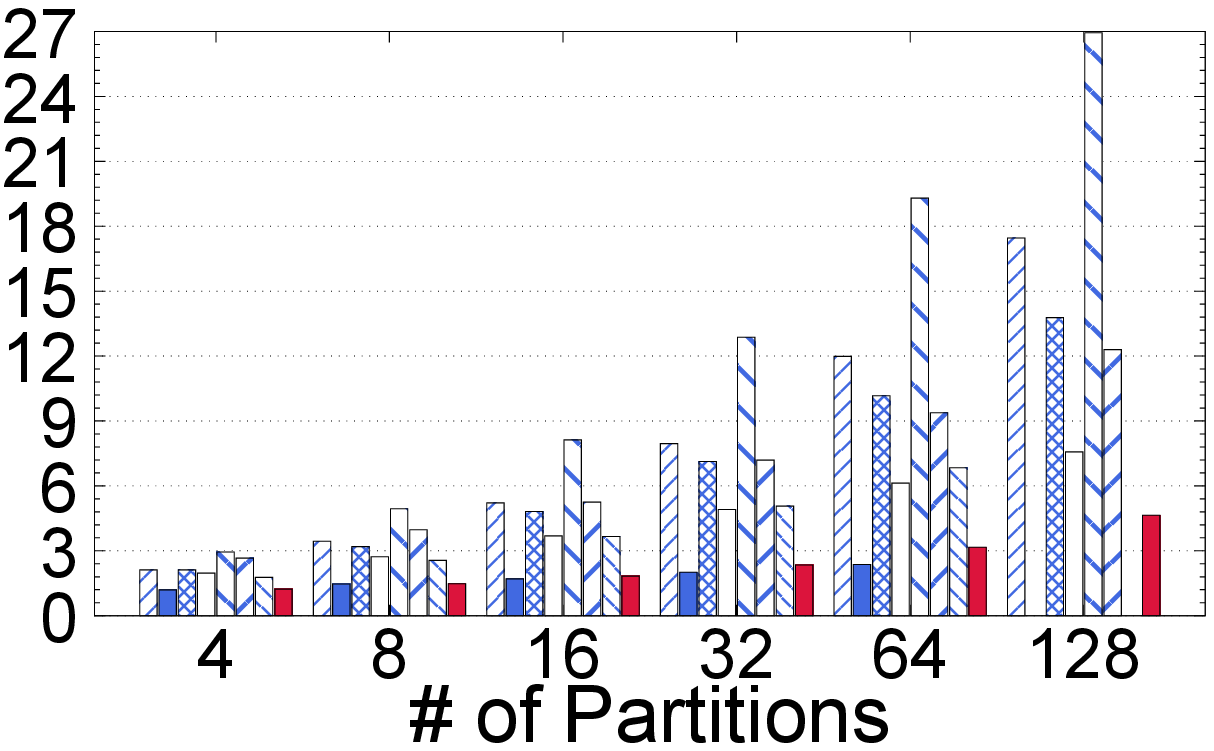}\label{fig:performance-pokec}}
\caption{Replication Factor Compared to Graph Partitioning Methods.}\label{fig:rf-parti}
\end{figure*}

\noindent\textit{\textbf{Partitioning Quality Compared to Graph Partitioning Methods.}}
The partitioning quality is measured by the replication factor, as discussed in Def.~\ref{def:edgepartitioning} of Sec.~\ref{sec:graphedgepartitioning}.
The replication factor is the normalized number of the replicated vertices among partitions. 
The best score is $1.0$.
For the comparison of the partitioning quality with the vertex partitioning method (i.e., \textit{MTS}), we convert the vertex-partitioned graph into the edge-partitioned one as demonstrated in~\cite{Bourse:2014:BGE:2623330.2623660}, that is, each edge is randomly assigned to one of its adjacent vertices' partitions.
Our proposal is \textit{GEO+CEP}, where edges are ordered by \textit{GEO} in advance and partitioned by \textit{CEP}.

Figure~\ref{fig:rf-parti} shows the result.
Overall, \textit{GEO+CEP} delivers the second-best quality next to \textit{NE}, and these scores are similar.
\begin{revise-env}%
\end{revise-env}%
The quality of \textit{GEO+CEP} is much better than hash-based methods, such as, \textit{BVC}, \textit{DBH}, \textit{1D}, and \textit{2D}.
Even compared to the high-quality vertex partitioning (i.e., \textit{MTS}), \textit{GEO+CEP} is always better except for \texttt{Road-CA}, whose graph structure is not so complicated that each result can be different. Its quality is almost $1.0$ in \textit{MTS}, \textit{NE}, and  \textit{GEO+CEP}.


\subsection{Comparison with Graph Ordering} \label{sec:comp-orderin}

\begin{figure*}[h]
 \centering
  \subfigure[\texttt{Road-CA}]{\includegraphics[width=.25\textwidth]{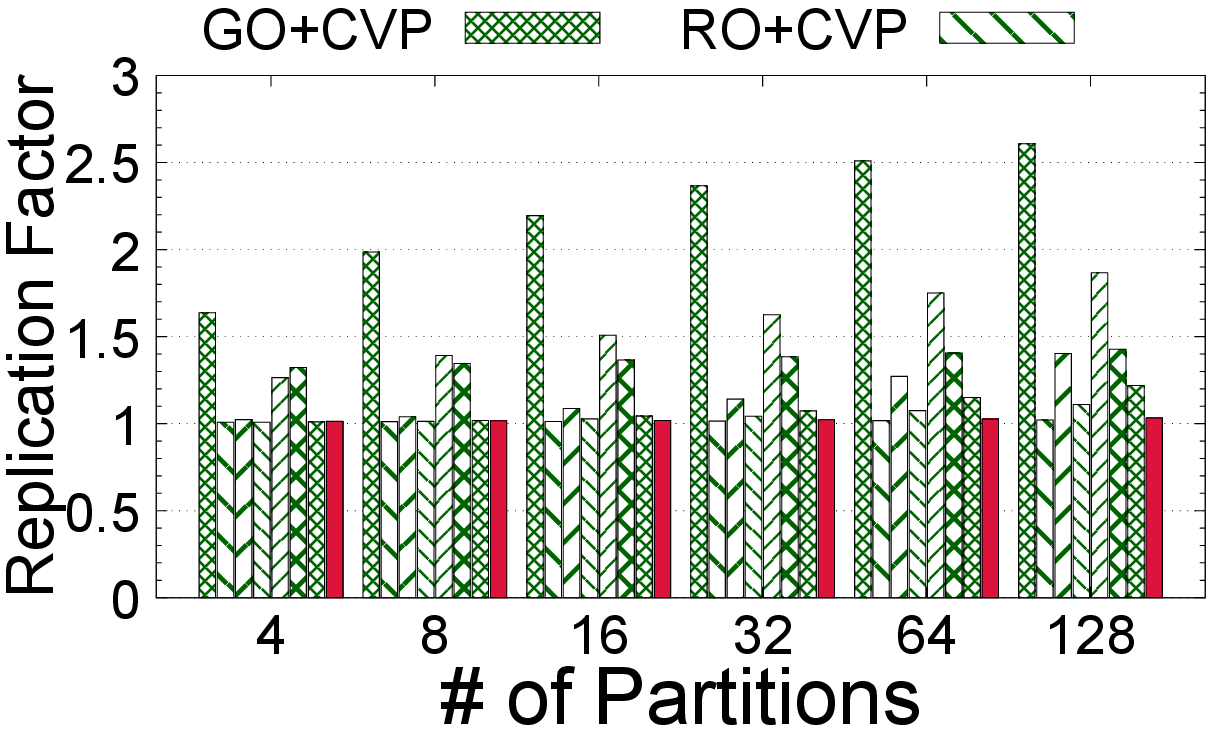}\label{fig:performance-pokec}}%
  \subfigure[\texttt{Skitter}]{\includegraphics[width=.25\textwidth]{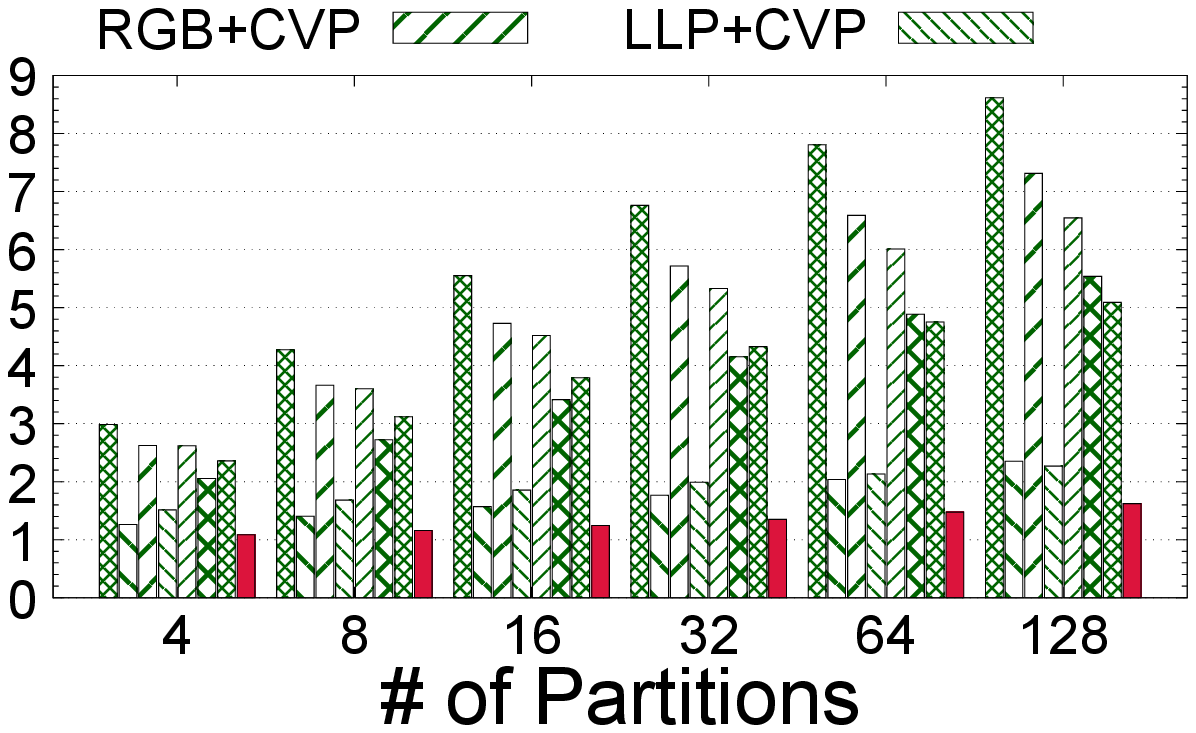}\label{fig:performance-pokec}}%
  \subfigure[\texttt{Patent}]{\includegraphics[width=.25\textwidth]{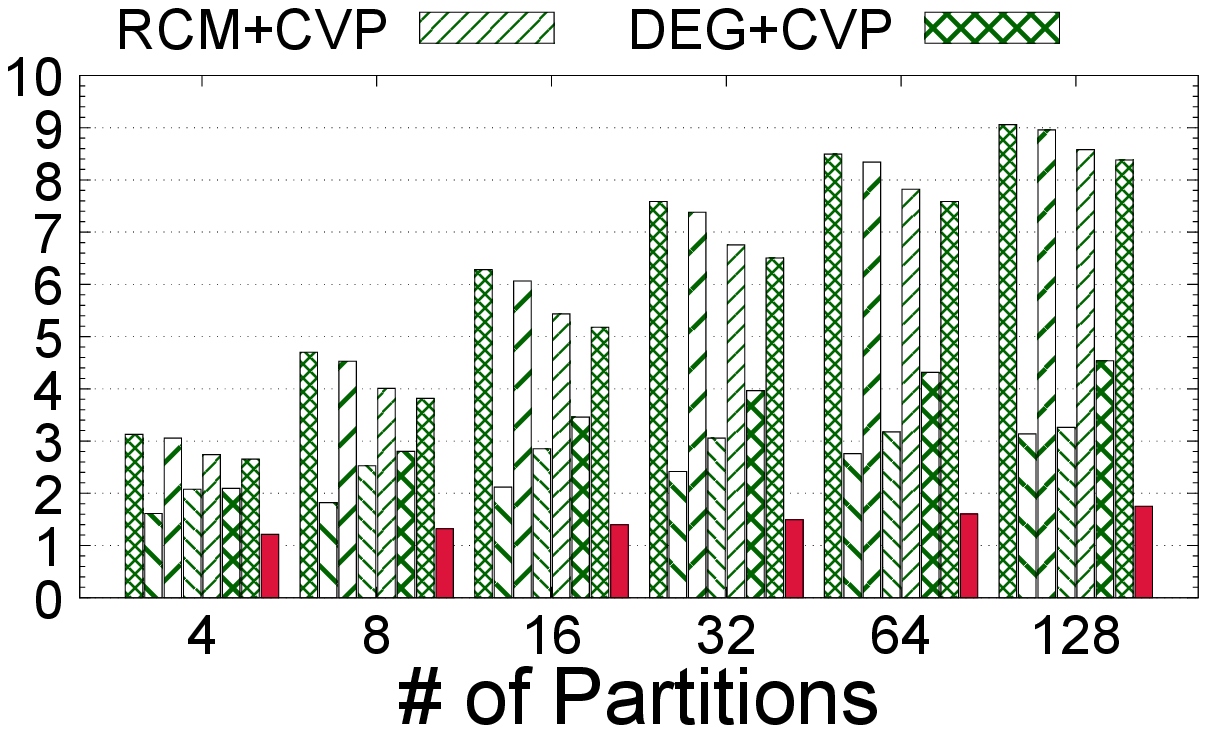}\label{fig:performance-pokec}}%
 \subfigure[\texttt{Pokec}]{\includegraphics[width=.25\textwidth]{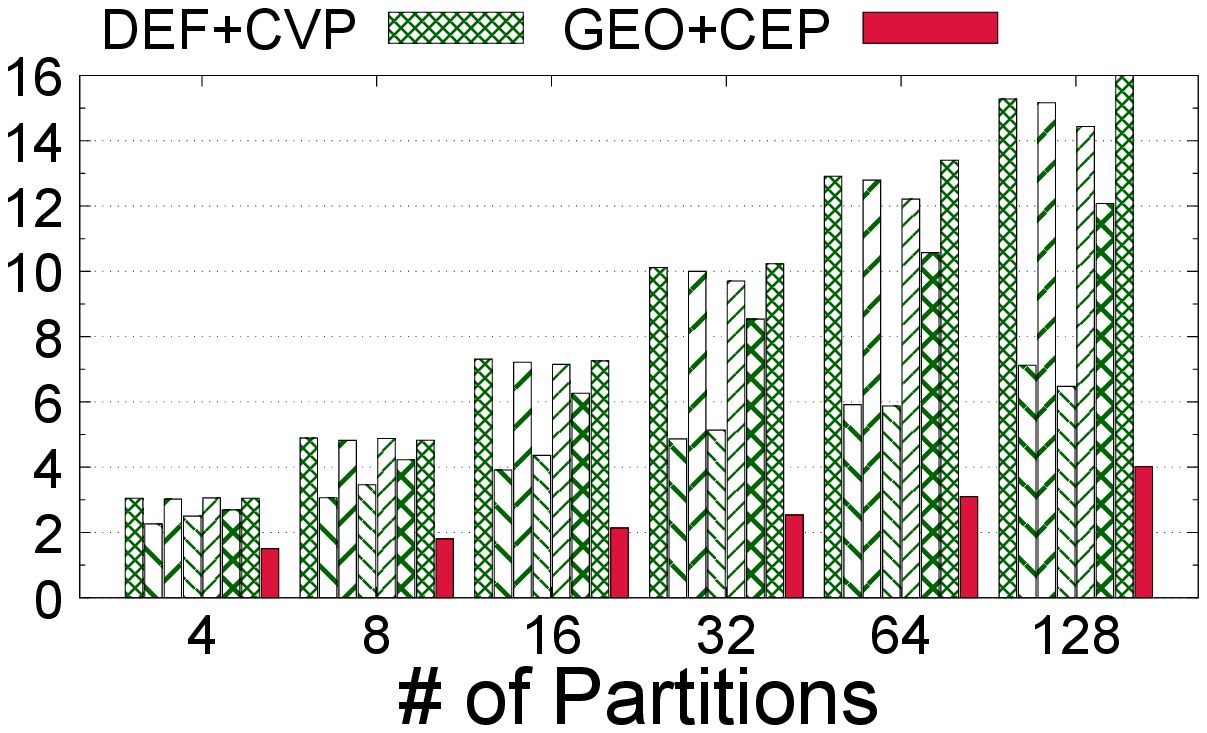}\label{fig:performance-pokec}}\\ \vspace{-5pt}
  \subfigure[\texttt{Flickr}]{\includegraphics[width=.20\textwidth]{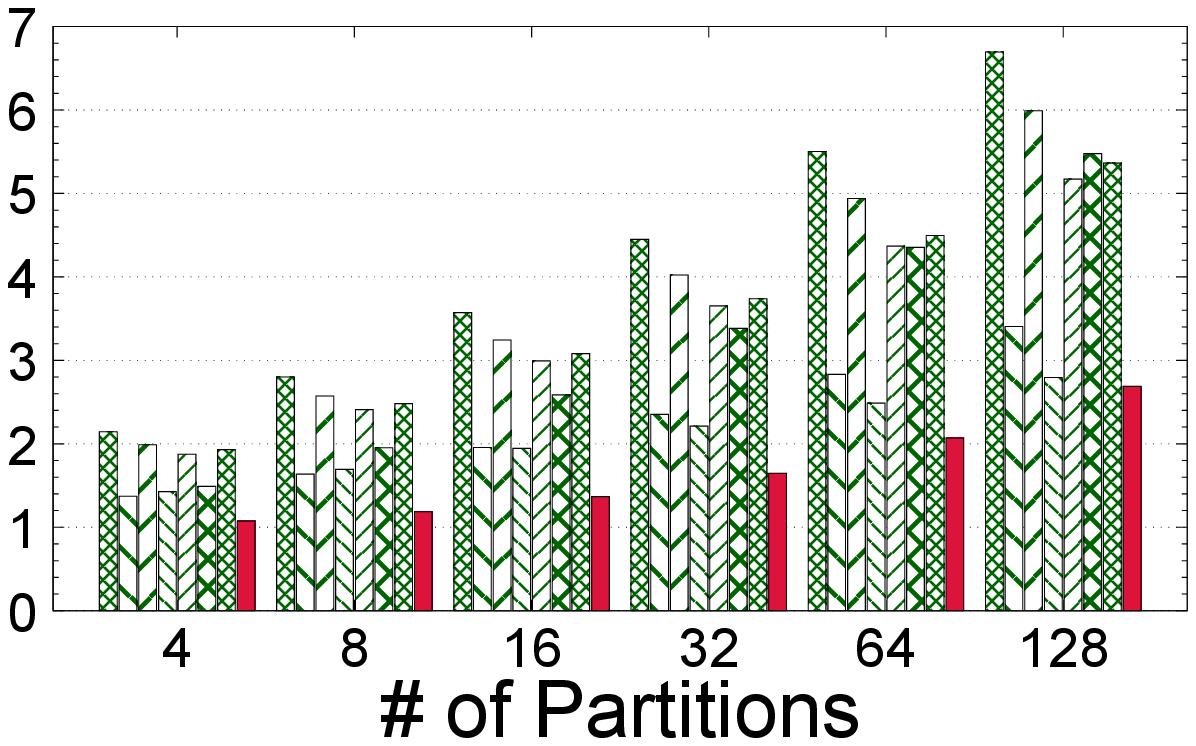}\label{fig:performance-pokec}}%
  \subfigure[\texttt{LiveJournal}]{\includegraphics[width=.20\textwidth]{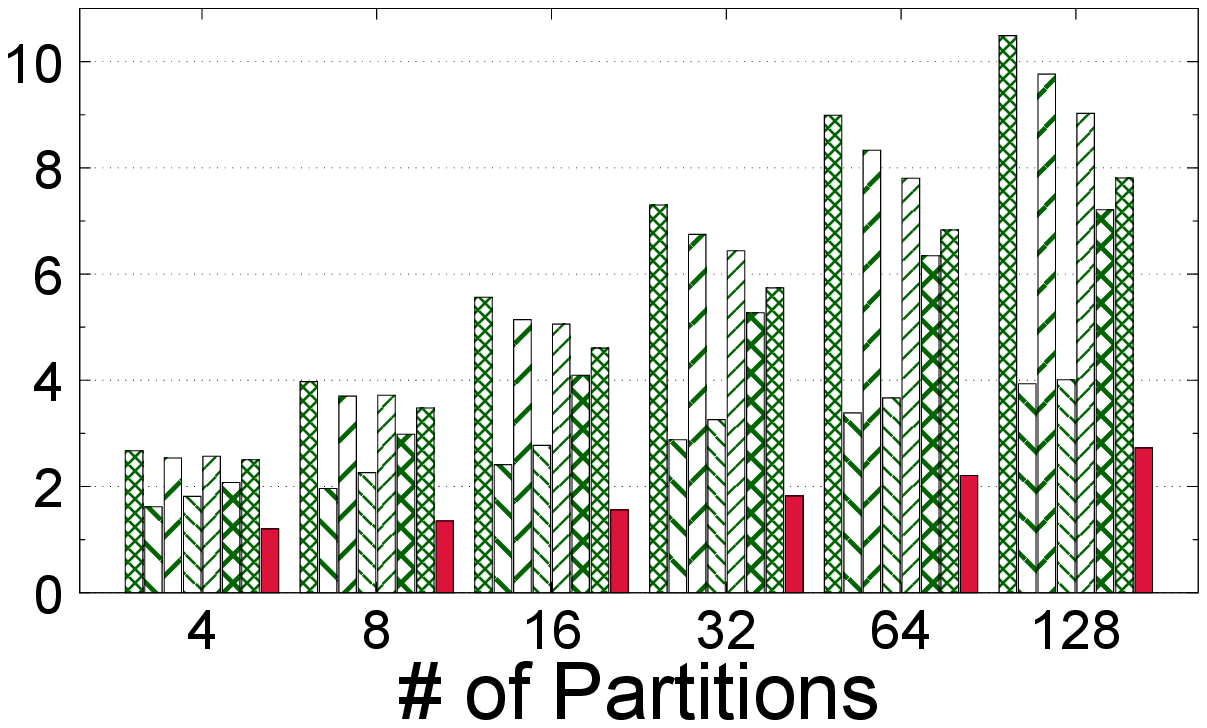}\label{fig:performance-pokec}}%
   \subfigure[\texttt{Orkut}]{\includegraphics[width=.20\textwidth]{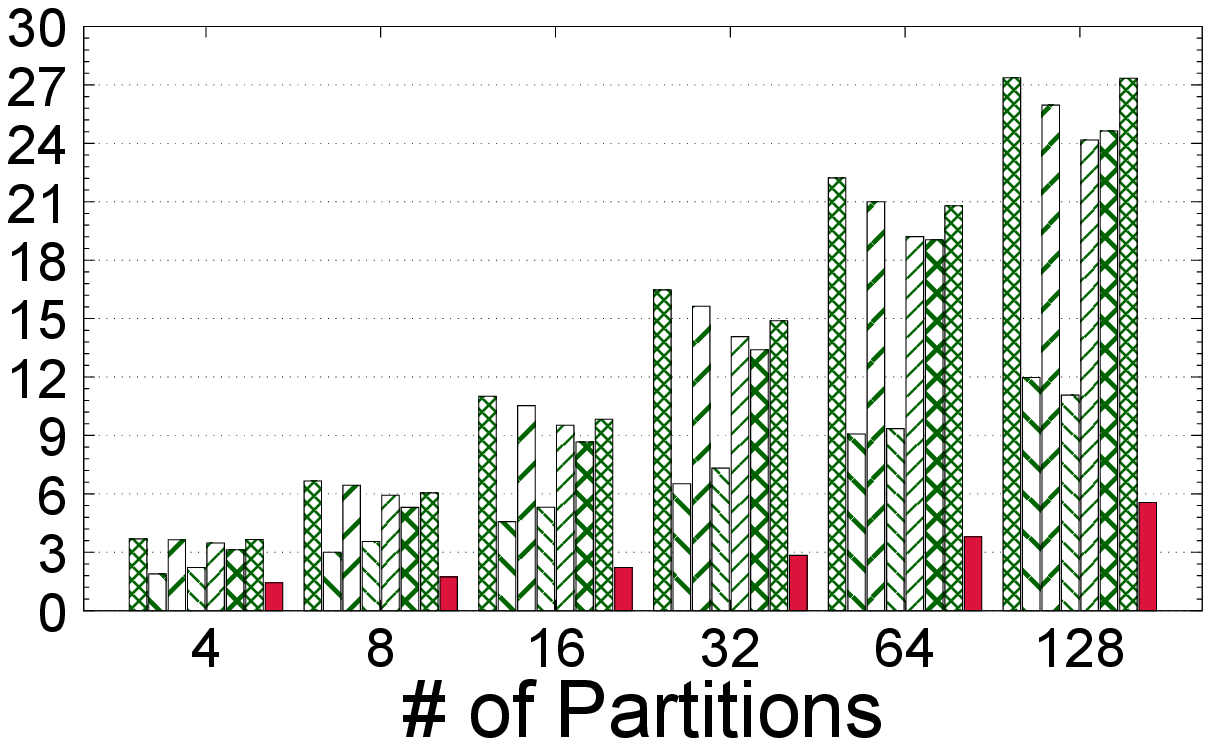}\label{fig:performance-pokec}}%
   \subfigure[\texttt{Twitter}]{\includegraphics[width=.20\textwidth]{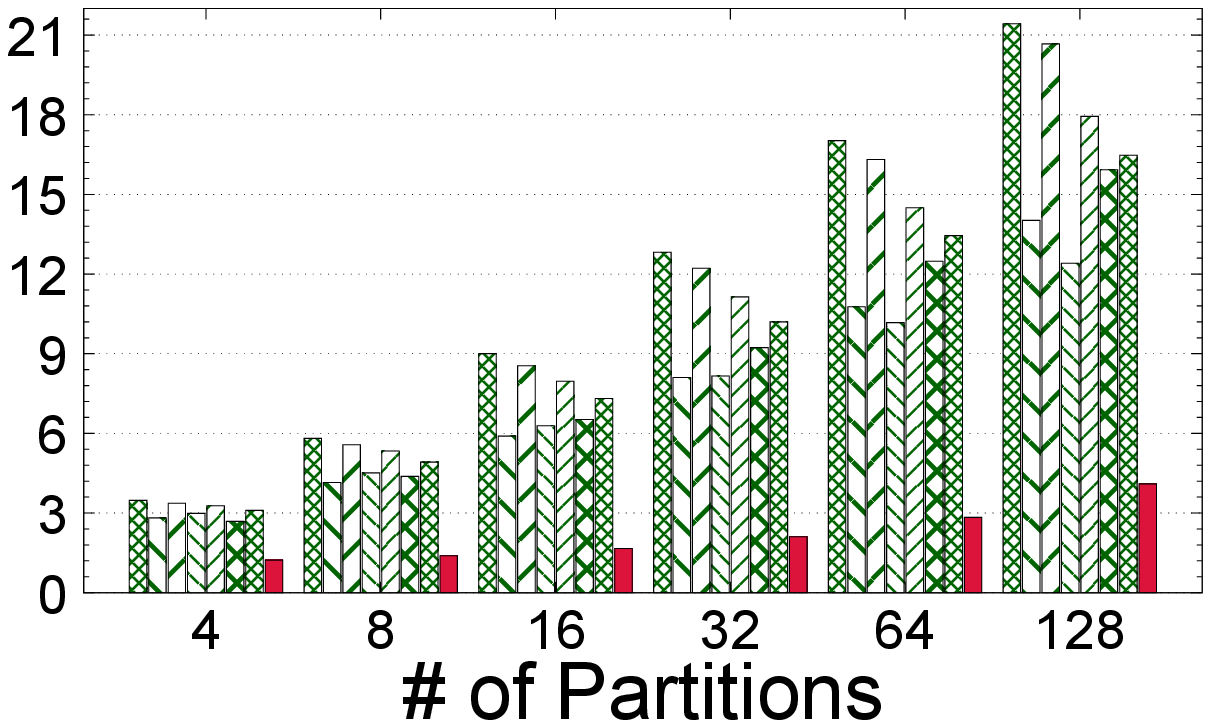}\label{fig:performance-pokec}}%
   \subfigure[\texttt{FriendSter}]{\includegraphics[width=.20\textwidth]{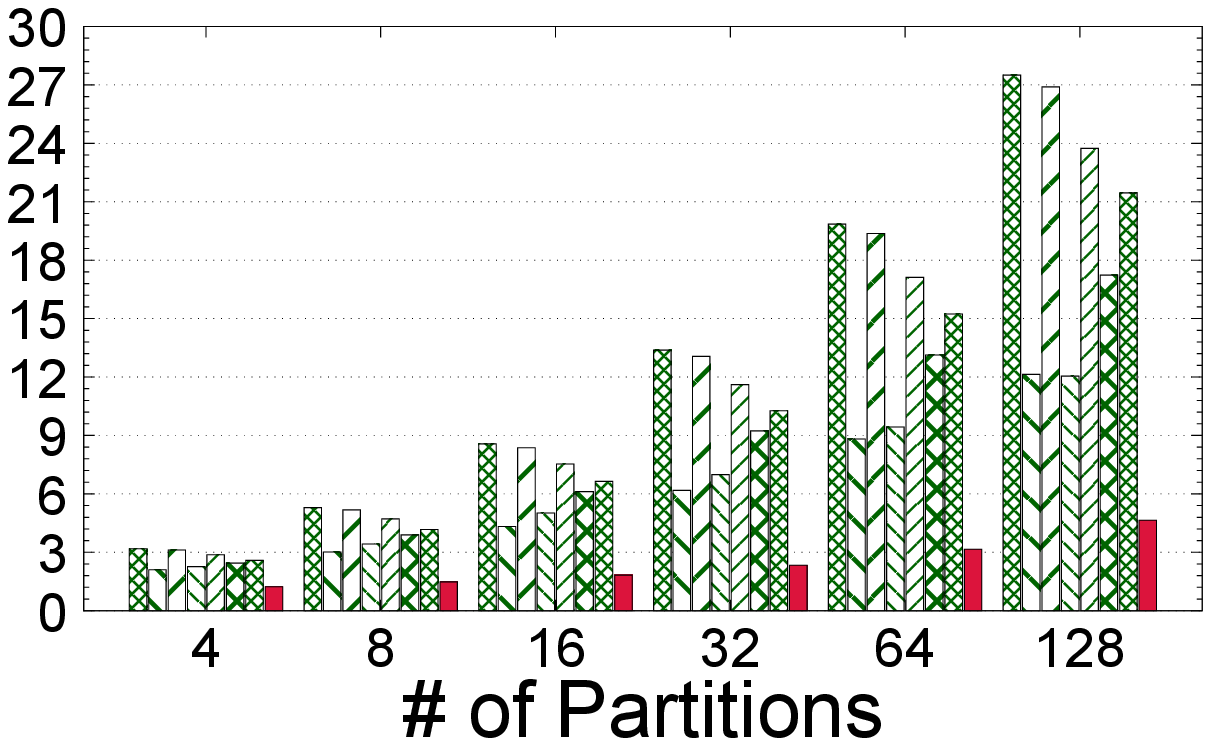}\label{fig:performance-pokec}}%
 \caption{Replication Factor Compared to Graph Ordering Methods.}\label{fig:rf-order}
\end{figure*}

\noindent\textit{\textbf{Partitioning Quality Compared to Graph Ordering Methods.}}
Figure~\ref{fig:rf-order} shows the quality evaluation of the graph ordering methods.
All the existing methods are vertex ordering. Thus, we partition the ordered vertices via \textit{CVP} and generate vertex partitions.
For quality comparison, we convert vertex partitions into edge partitions in the same way as the previous subsection.

Overall, \textit{GEO+CEP} is always better than the other ordering methods.
Especially, the improvement is significant in \texttt{Orkut}, where the replication factor is totally high, meaning that, it is difficult to get good partitions.
\textit{RO} and \textit{LLP} become the similar quality to \textit{GEO+CEP} in \texttt{Road-CA} and \texttt{Flickr}. 
This is because these two methods capture `general' data locality (i.e., network modularity in \textit{RO} and community structure in \textit{LLP}) rather than to solve some problems highly specific to its purpose (i.e., \textit{GO} is for the L1-cache utilization; and \textit{RGB} is for graph compression).
In \texttt{Road-CA} and \texttt{Flickr}, these general localities become similar to one derived from the graph edge ordering.

The high quality of \textit{GEO+CEP} essentially comes from the design of the priority (Eq.~\eqref{eq:priority}) derived from the objective of the graph edge ordering problem (Eq.~\eqref{eq:ordering1} and Eq.~\eqref{eq:ordering2}). 
This is due to the fact that some of the existing ordering methods, such as \textit{RCM} and \textit{GO}, are based on BFS and an algorithm very close to ours. 
Our priority differentiates the partitioning quality of \textit{GEO+CEP} from that of the existing methods.

\noindent\textit{\textbf{Preprocessing Time.}}
We compare the elapsed time of each ordering method.
Figure~\ref{fig:Perf-order} shows the result.
Although \textit{GEO} is not the best performance compared to the simple methods, such as \textit{RCM} and \textit{DEG}, its performance is similar to the other ordering methods such as \textit{GO}, \textit{RGB}, and \textit{LLP}.
The graph edge ordering can preprocess the billion-scale graphs (\texttt{Twitter} and \texttt{FriendSter}) within an acceptable time.

\begin{figure}[h]
  \centering
   \includegraphics[width=.8\columnwidth]{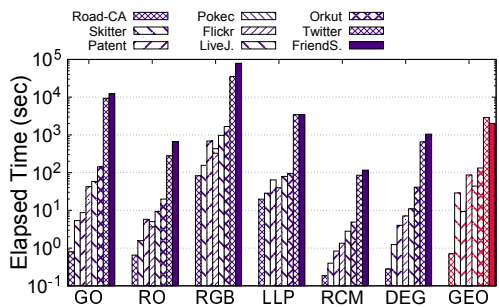} 
  \caption{Preprocessing Time for Graph Ordering}\label{fig:Perf-order}
\end{figure}

\subsection{Effect on Distributed Graph Analysis}

We briefly evaluate the effect of our dynamic scaling method on three common benchmarking graph applications with different workload characteristics: \textit{SSSP}, \textit{WCC}, and \textit{PageRank}.
\textit{SSSP} is the lightest workload, starting from Vertex $0$ in this evaluation;
\textit{WCC} is the middle one; \textit{PageRank} is the heaviest one, where all vertices communicate with their neighbors at each iteration (the number of iterations is set to $100$).
We integrate our method to PowerLyra~\cite{Chen:2015:PDG:2741948.2741970} (forked from PowerGraph~\cite{joseph2012powergraph}) and compare it with four methods in the system: \textit{1D} (Random), \textit{2D} (Grid), \textit{Oblivious}, and \textit{Hybrid Ginger}.
For a more comprehensive and detailed analysis of the effect of the partitioning quality on distributed graph applications, please refer to the previous experimental researches~\cite{Han:2014:ECP:2732977.2732980,6877273,Verma:2017:ECP:3055540.3055543,abbas2018streaming,Gill:2018:SPP:3297753.3316427,Pacaci:2019:EAS:3299869.3300076}.
The result of this evaluation is consistent with these researches.

We use two metrics: the elapsed time (\texttt{TIME}) and the communication volume (\texttt{COM}), as well as three metrics for the quality: the replication factor (\texttt{RF}), the edge balance (\texttt{EB}), and the vertex balance (\texttt{VB}).  
Specifically, let a balance factor among partitions ($p \in P$) be $B(\{x_p\}) := \tfrac{\max x_p}{\bar{x}}$, where $\bar{x} := \tfrac{\sum x_p}{|P|}$. 
Then, \texttt{EB} and \texttt{VB} are defined as $B(\{|E_p|\})$ and $B(\{|V(E_p)|\})$, respectively. 
Note that \texttt{EB} is the actual value of $1+\epsilon$ as difined in Def.~\ref{def:edgepartitioning}.


We evaluate our proposed approach in two different ways:  (i) measuring the performance of applications and (ii) measuring the performance of the entire system including dynamic scaling.

\subsubsection{Application Performance}
Table~\ref{tbl:perf} shows the result on 36 partitions (one physical core per partition) without dynamic scaling by using the three large graphs (\texttt{Orkut}, \texttt{Twitter}, and \texttt{FriendSter}).
For the elapsed time (\texttt{TIME}), we measure the time only for applications and exclude setup time such as system preparation, data loading, data partitioning, and so forth. We execute five times and show the median value.

Overall, our method (\texttt{CEP+GEO}) outperforms the others in the elapsed time (\texttt{TIME}) due to the lowest replication factor (\texttt{RF}).
Its speed up from the others is the most significant in \textit{PageRank} due to the largest reduction of communication cost (\texttt{COM}).
Even though the vertex balance (\texttt{VB}) of our methods is slightly worse than that of the others, it does not play an important role for the elapsed time (\texttt{TIME}).
This is because the computational cost for the graph processing essentially depends on the number of edges rather than that of vertices, as already discussed in Sec.~\ref{sec:introduction}.
The edge balance is more dominant for the performance, and our method always achieves the perfect score (i.e., \texttt{EB} is 1). 

\begin{table*}[h]
\newcolumntype{B}{!{\vrule width 3\arrayrulewidth}}
\newcolumntype{G}{!{\vrule width 6\arrayrulewidth}}
\caption{Evaluation of Graph Applications on 36 Partitions. \texttt{TIME} unit is sec. \texttt{COM}  unit is GB.}
\centering
\label{tbl:perf}
\scalebox{1.0}{
  \begin{tabular}{|l|lGr|r|rGr|rBr|rBr|rB} \hline
  \multicolumn{2}{|cG}{} & \multicolumn{3}{cG}{Quality}  & \multicolumn{2}{cB}{\textit{SSSP}} & \multicolumn{2}{cB}{\textit{WCC}} & \multicolumn{2}{cB}{\textit{PageRank}} \\ \cline{3-11}
  \multicolumn{2}{|cG}{} & \texttt{RF} & \texttt{EB} & \texttt{VB} & \texttt{TIME} & \texttt{COM} & \texttt{TIME} & \texttt{COM} & \texttt{TIME} & \texttt{COM} \\ \hline
  \multirow{5}{*}{\rotatebox[origin=c]{90}{\texttt{Orkut}}} & \textit{1D} & 23.91 & \textbf{1.00} & \textbf{1.00} & 5.29 & 8.51 & 22.0 & 22.5 & 224 & 167 \\ 
                                                            & \textit{2D (Grid)}   & 9.76 & 1.01 & 1.01 & 3.93 & 4.30 & 13.67 & 9.5 & 130 & 69.7 \\ 
                                                            & \textit{Oblivious}   & 16.35 & 1.23 & 1.01 & 4.43 & 6.27 & 17.0 & 15.62 & 168 & 112 \\ 
                                                            & \textit{Hybrid Ginger} & 11.56 & 1.37 & 1.05 & 3.95 & 8.25 & 13.5 & 12.5 & 106 & 56.2 \\
                                                            & \textbf{\textit{GEO+CEP}} & \textbf{\textit{2.98}} & \textit{\textbf{1.00}} & \textit{1.32} & \textbf{\textit{2.89}} & \textbf{\textit{0.72}} & \textbf{\textit{8.20}} & \textbf{\textit{1.99}} & \textbf{\textit{66.6}} & \textbf{\textit{15.6}}\\ \hline
  \multirow{5}{*}{\rotatebox[origin=c]{90}{\texttt{Twitter}}} & \textit{1D}  & 14.11 & \textbf{1.00} & \textbf{1.00} & 47.5 & 74.1 & 136 & 126 & 2043 & 1262\\ 
                                                                 & \textit{2D (Grid)}    &  7.52 & 1.04 & \textbf{1.00} & 31.2 & 47.5 & 90.8 & 72.0 & 1239 & 647 \\ 
                                                                 & \textit{Oblivious}    & 11.04 & 1.05 & 1.01 & 38.4 & 61.4 & 108 & 100 & 1630 & 985 \\ 
                                                                 & \textit{Hybrid Ginger}   & 4.20 & 1.21 & 1.06 & 22.0 & 75.8 &  64.9 & 73.8 & 717 & 319 \\ 
                                                                 & \textbf{\textit{GEO+CEP}} & \textit{\textbf{2.20}} &  \textit{\textbf{1.00}} & \textit{2.92} & \textbf{\textit{17.6}} & \textbf{\textit{6.11}} & \textbf{\textit{47.6}} & \textit{\textbf{16.2}} & \textbf{\textit{518}} & \textit{\textbf{130}}\\ \hline
  \multirow{5}{*}{\rotatebox[origin=c]{90}{\texttt{FriendS.}}} & \textit{1D} & 14.46 & \textbf{1.00} & \textbf{1.00} & 81.7 & 112 & 389 & 297 & 3561 & 2160 \\ 
                                                          & \textit{2D (Grid)}    & 6.74  & \textbf{1.00} & \textbf{1.00} & 52.5 & 63.2 & 261 & 140 & 1985 & 983 \\ 
                                                          & \textit{Oblivious}    & 10.91  & \textbf{1.00} & \textbf{1.00} & 66.9 & 90.1 & 306 & 224 & 2609 & 1567 \\ 
                                                          & \textit{Hybrid Ginger}  & 7.28 & 1.14 & 1.10 & 51.2 & 117 & 241 & 181 & 1652 & 812 \\ 
                                                          & \textbf{\textit{GEO+CEP}} & \textit{\textbf{2.44}} & \textbf{\textit{1.00}} & \textit{3.04} & \textbf{\textit{39.7}} & \textbf{\textit{11.4}} & \textbf{\textit{169}} & \textbf{\textit{31.1}} & \textbf{\textit{963}} & \textbf{\textit{241}}\\ \hline
\end{tabular}
}
\end{table*}

\subsubsection{End-to-end Performance}
We evaluate the entire performance of \textit{PageRank} (100 iterations) including the setup such as system initialization, graph (re)partitioning, data migration, and graph (re)construction.

\noindent\textit{\textbf{Dynamic Scaling Scenario.}} 
We use two scenarios: \textit{ScaleOut} and \textit{ScaleIn}.
In \textit{ScaleOut}, a process is added each 10 iterations from 26 processes. Thus, the number of partitions is changed as follows: $26 \rightarrow 27 \rightarrow ... \rightarrow 36$.
In \textit{ScaleIn}, a process is removed each 10 iterations from 36 processes. Thus, the number of partitions is changed as follows: $36 \rightarrow 35 \rightarrow ... \rightarrow 26$.

\noindent\textit{\textbf{Result.}} 
We show the total elapsed time (\texttt{ALL}) and the breakdown of its three constituent components (\texttt{INIT}, \texttt{APP}, and \texttt{SCALE}).
\texttt{INIT} is the initialization time including system setup, data loading, initial partitioning and graph construction.
\texttt{APP} is the application time for PageRank computation.
\texttt{SCALE} includes the repartitioning, data migration (structural data and intermediate values), and graph reconstruction. 

As shown in Table~\ref{tbl:perf-end-to-end}, our method significantly outperforms the others in \texttt{ALL} due to the large performance improvement not only in \texttt{APP} but also in \texttt{INIT} and \texttt{SCALE}.
In \texttt{INIT}, the improvement mainly comes from the efficient partitioning and data loading from the file system.
In our method, the partitioning can be computed by directly loading from the file system without any data shuffling among the distributed processes.
Whereas, in the other methods, the partition of each edge of a graph needs to be processed one-by-one after data loading. 
In \texttt{SCALE}, the improvement is mainly due to the efficient repartitioning as discussed in Theorem~\ref{thr:cep}.

\begin{table*}[h]
\newcolumntype{B}{!{\vrule width 3\arrayrulewidth}}
\newcolumntype{G}{!{\vrule width 6\arrayrulewidth}}
\caption{Overall Time (\texttt{ALL}) and its Breakdown (\texttt{INIT}, \texttt{APP}, \texttt{SCALE}) for \textit{PageRank} with Dynamic Scaling (sec.).}
\label{tbl:perf-end-to-end}
\centering
\scalebox{1.0}{
  \begin{tabular}{|l|lGrBrrrGrBrrr|} \hline
  \multicolumn{2}{|cG}{} & \multicolumn{4}{cG}{\textit{ScaleOut}}  & \multicolumn{4}{c|}{\textit{ScaleIn}} \\ \cline{3-10}
  \multicolumn{2}{|cG}{} & \texttt{ALL} & \texttt{INIT} & \texttt{APP} & \texttt{SCALE} & \texttt{ALL} & \texttt{INIT} & \texttt{APP} & \texttt{SCALE} \\ \hline
  \multirow{4}{*}{\rotatebox[origin=c]{90}{\texttt{Orkut}}} & \textit{1D} & 301 & 6.8 & 220.2 & 72.9 & 298 & 8.0 & 216.8 & 72.8 \\ 
                                                            & \textit{Oblivious} & 282 & 7.8 & 184.0 & 89.4 & 279 & 7.7 & 181.9 & 88.7\\ 
                                                            & \textit{Hybrid Ginger} & 205 & 9.0 & 105.2 & 90.4 & 210 & 9.6 & 106.3 & 93.5 \\
                                                            & \textbf{\textit{GEO+CEP}} & \textbf{\textit{96}} & \textit{\textbf{2.4}} & \textbf{\textit{71.5}} & \textbf{\textit{21.7}} & \textbf{\textit{98}} & \textbf{\textit{4.8}} & \textbf{\textit{70.7}} & \textbf{\textit{22.1}}\\ \hline
  \multirow{4}{*}{\rotatebox[origin=c]{90}{\texttt{Twitter}}} & \textit{1D}  & 2893 & 75 & 2042 & 769 & 2843 & 86 & 1979 & 771 \\ 
                                                                 & \textit{Oblivious} & 2803 & 95 & 1673 & 1030 & 2767 & 91 & 1643 & 1029 \\ 
                                                                 & \textit{Hybrid Ginger} & 1673 & 114 &602 &955 &  1853 &  290 & 603 & 958  \\ 
                                                                 & \textbf{\textit{GEO+CEP}} & \textit{\textbf{837}} &  \textit{\textbf{37}} & \textit{\textbf{541}} & \textbf{\textit{257}} & \textbf{\textit{851}} & \textbf{\textit{54}} & \textit{\textbf{532}} & \textbf{\textit{264}} \\ \hline
  \multirow{4}{*}{\rotatebox[origin=c]{90}{\texttt{FriendS.}}} & \textit{1D} & 4937 & 117 & 3581 & 1228 & 4974 & 123 & 3569 & 1274 \\ 
                                                          & \textit{Oblivious}    & 4607 & 126 & 2990 & 1482 & 4576 & 146 & 2934 & 1488 \\ 
                                                          & \textit{Hybrid Ginger}  & 3700 & 198 & 1583 & 1915 & 3684 & 199 & 1562 & 1917\\ 
                                                          & \textbf{\textit{GE0+CEP}} & \textbf{\textit{1512}} & \textbf{\textit{56}} & \textbf{\textit{1035}} & \textbf{\textit{418}} & \textit{\textbf{1487}} & \textbf{\textit{49}} & \textbf{\textit{1007}} & \textbf{\textit{429}}  \\ \hline
\end{tabular}
}
\end{table*}

\subsubsection{Additional Experiment}
\noindent\textit{\textbf{Migration Cost.}}
We evaluate the migration cost in dynamic scaling (\textit{ScaleOut} and \textit{ScaleIn} in the previous section).
We use three methods for the comparison: \textit{BVC}, \textit{1D}, and \textit{CEP}.
\textit{BVC} is designed for the efficient migration as its objective is defined as the minimization of the migration cost.
\textit{1D} is a representative of the other partitioning methods that do not take the migration cost into account. Each partitioned edge may basically move to any of the other partitions.

Figure~\ref{fig:migration} shows the number of migrated edges in the two scenarios.
\textit{BVC} and \textit{CEP} are almost the same number, outperforming \textit{1D}. 
This is due to the fact that the migration methods in \textit{BVC} and \textit{CEP} are very similar, where their difference is to align the edges to the ordering id space (\textit{CEP}) or to the hash ring in consistent hashing (\textit{BVC}).
In both methods, edges are split into the continuous chunks, and thus, the number of migrated edges is almost the same.

\begin{figure}
    \centering
    \includegraphics[width=.8\columnwidth]{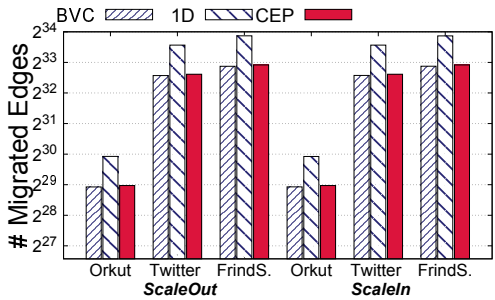}
    \caption{Total \# of Migrated Edges in \textit{ScaleOut} and \textit{ScaleIn}.}\label{fig:migration}
\end{figure}

\begin{figure}
    \centering
    \includegraphics[width=.8\columnwidth]{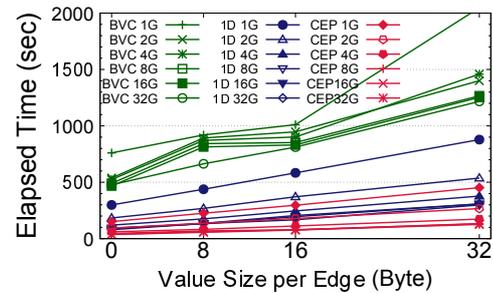}
    \caption{Migration Time for \textit{ScaleOut} with \texttt{FriendS.}.}\label{fig:perf-migration}
\end{figure}


Figure~\ref{fig:perf-migration} shows the actual elapsed time to migrate the edges and their values under the different network performances and sizes of each edge value.
We emulate the different network bandwidth from 1Gbps to 32Gbps according to the instance specifications in Amazon EC2~\cite{instancetype}.
The size of value per edge is changed from 0 to 32 bytes.  

In contrast to the number of migrated edges, \textit{CEP} and \textit{1D} outperform \textit{BVC}.
This is because, in \textit{BVC}, the edges are communicated in two phases: the initial migration and refinement for balancing edges. 
The refinement includes a lot of barrier synchronizations to share the edge balanceness among the distributed processes, especially in small $\epsilon$ and $k$.
\textit{BVC} is considered to be more appropriate for larger $\epsilon$ and $k$ as evaluated in \cite{dynamicscaling} (where $\epsilon$ is around 100 times bigger than our case and $k$ is over 100). 
On the other hand, in \textit{CEP} and \textit{1D}, the graph data are communicated in the single data shuffling and do not include the multiple global synchronizations.

An interesting insight from the evaluation is that the performance difference/improvement in data migration time is relatively small even though the number of migrated edges is largely different and the data migration itself is time-consuming (in some cases, it is slower than the partitioning time).
In contrast, the partitioning time as shown in Figure~\ref{fig:performance} exhibits a lot of variation in each of the methods examined, and thus its performance improvement may substantially influence the overall workload.

\smallskip
\noindent\textit{\textbf{Scalability.}}
Figure~\ref{fig:scalability} shows the scalability of \textit{GEO}.
We use RMAT, a common synthetic model for social networks~\cite{chakrabarti2004r}.
According to the real-world social networks in Table~\ref{tbl:real_world}, we change Edge Factor of RMAT (i.e., average degree) from 16 to 40 and the graph size up to 10 billion-edge scale.
Overall, the performance changes linearly as the increase of the graph size.
However, \textit{GEO} as well as its other counterparts (i.e., high-quality graph partitioning and graph ordering methods) have a scalability limitation.
That is, if the preprocessing time is very large (e.g., due to the large graph size), whereas the actual analysis time is relatively small (e.g., due to the high parallelization), then the benefit by the preprocessing cannot be amortized.
Such a limitation gives us the motivation to devise parallel and distributed algorithms to speed up \textit{GEO}.
This is listed as our future work in Sec.~\ref{sec:conclusion}.

\begin{figure}
    \centering
    \includegraphics[width=.8\columnwidth]{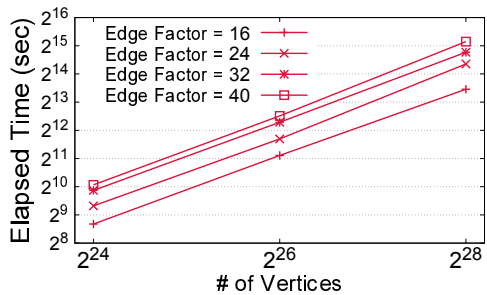}
    \caption{Scalability of \textit{GEO} with RMAT Graphs.}\label{fig:scalability}
\end{figure}

\section{Conclusion and Future Work}\label{sec:conclusion}
In this paper, we presented a novel approach to the dynamic scaling of graph partitions.
Our idea is based on the graph edge ordering and the chunk-based edge partitioning.
The former is the preprocessing method to provide high-quality partitions.
The latter is the very fast $O(1)$ partitioning algorithm.
We show that the maximization of the partitioning quality via graph edge ordering is NP-hard.
We proposed an efficient greedy algorithm to solve the problem within an acceptable time for large real-world graphs.
As a result, once the preprocessing is done, our dynamic scaling method is between three to eight orders of magnitude faster than the other existing methods while achieving high partitioning quality, which is similar to the best existing method.

There are mainly four future directions for our work.
First, the graph edge ordering needs to support the dynamic change of graph structures.
The requirement to reconfigure the number of partitions and recompute the graph analysis is higher for such dynamic graphs.
Second, a parallel and distributed algorithm of the graph edge ordering will be investigated.
The current sequential algorithm cannot handle extremely large graphs, such as trillion-edge graphs.
Third, the application to more complicated and time-consuming distributed graph processing, such as graph-based machine learning, is a very interesting and attractive problem.
Finally, the extension to more complicated graphs, such as, weighted-vertex/edge graphs, hyper graphs, property graphs, temporal graphs, will be investigated. 

\balance
\bibliographystyle{unsrt}
\bibliography{ref}
\balance
\end{document}